%% file: main.tex
\documentclass[11pt]{article}

\usepackage[margin=1in]{geometry}
\usepackage{amsmath,amssymb,amsthm,mathtools,bm}
\usepackage{booktabs}
\usepackage{graphicx}
\usepackage{subcaption}
\usepackage{enumitem}
\usepackage{placeins}
\usepackage{natbib}
\usepackage[colorlinks=true,linkcolor=blue,citecolor=blue,urlcolor=blue]{hyperref}
\hypersetup{
  pdftitle={Network Realized GARCH-Ito Models: Volatility Spillovers with High-Frequency Identification},
  pdfauthor={Xinyu Song}
}

\newtheorem{assumption}{Assumption}
\newtheorem{proposition}{Proposition}
\newtheorem{theorem}{Theorem}
\newtheorem{lemma}{Lemma}
\newtheorem{corollary}{Corollary}
\theoremstyle{definition}
\newtheorem{remark}{Remark}

\DeclareMathOperator{\rank}{rank}
\DeclareMathOperator{\diag}{diag}
\DeclareMathOperator{\supp}{supp}
\DeclareMathOperator{\Var}{Var}
\DeclareMathOperator{\Cov}{Cov}
\DeclareMathOperator{\E}{E}
\DeclareMathOperator{\tr}{tr}

\newcommand{\R}{\mathbb R}
\newcommand{\F}{\mathcal F}
\newcommand{\A}{\mathcal A}
\newcommand{\T}{\mathcal T}
\newcommand{\OmegaS}{\Omega}
\newcommand{\op}{\mathrm{op}}
\newcommand{\p}{\mathbb P}

\title{Network Realized GARCH--It\^{o} Models:\\
Volatility Spillovers with High-Frequency Identification}
\author{Xinyu Song\\
\small School of Statistics and Data Science\\
\small Shanghai University of Finance and Economics}
\date{July 2026}

\begin{document}
\maketitle

\begin{abstract}
We introduce a network realized GARCH--It\^{o} model in which volatility
transmission is a dynamic relation among the latent daily integrated
volatilities of multiple assets.
An unknown directed and signed network is embedded in a continuous-time
variance process and appears in the resulting exponential daily recursion.
Intraday returns identify integrated volatility and hence the network that
governs its propagation. For a fixed network template, feasible estimation
based on realized volatility is first-order equivalent to estimation based on
latent integrated volatility. For an unknown network, regularization selects
the relevant structure, and, under oracle conditions, a fixed-rank local refit
provides conditional inference for model-implied response-based connectedness.
In an application to nine U.S. sector ETFs from 2007 to 2025, the model
attains the lowest average out-of-sample QLIKE among the reported recursive
forecasts, although its advantage over HAR is not statistically significant.
In the full sample the weighted LASSO selects an empty sparse support, so
inference concerns the rank-one projected factor component; this conditional
analysis identifies Energy as a net volatility transmitter. Aggregate
connectedness is more stable across volatility measures than individual
sparse channels.
\end{abstract}

\noindent\textbf{Keywords:} realized volatility; GARCH--It\^{o}; network spillovers; high-frequency data; connectedness; regularized network inference.

\section{Introduction}

This paper studies the identification of, and inference on, cross-asset
volatility transmission in daily integrated volatility. The object of interest
is the directed effect of an asset's lagged integrated volatility on the
conditional integrated volatilities of other assets. Both the state variable
and its transmission law are latent: daily returns provide noisy proxies for
integrated volatility, and the network must be inferred from temporal
variation in the latent volatility process. The model must therefore link
continuous-time volatility, daily integrated-volatility dynamics, and
cross-asset propagation.

The connectedness literature provides influential reduced-form descriptions of
cross-market dependence. Forecast-error variance decompositions from vector
autoregressions measure directional spillovers in realized volatility
\citep{diebold2009,diebold2012,diebold2023}; related approaches measure
systemic dependence or estimate predictive and contemporaneous financial
networks \citep{billio2012,barigozzi2019}. These objects answer how forecast
uncertainty or statistical dependence is distributed across assets. They do
not, however, place a transmission matrix inside the conditional law of
integrated volatility itself. Consequently, their edges and connectedness
shares need not describe how a perturbation to one asset's volatility
propagates through the volatility process.

Structural volatility models approach the problem from the conditional
variance law. Multivariate GARCH and dynamic-correlation models parameterize
cross-asset volatility dependence \citep{engle2002,bauwens2006}, while network
autoregressions and network GARCH models use an observed weight matrix to
organize that dependence \citep{zhu2017,zhou2020}. However, two limitations remain. First,
the network is typically fixed before estimation rather than identified from
the volatility dynamics. Second, these models are usually fitted to daily
returns or squared returns. A daily squared return contains persistent
measurement error as a proxy for integrated volatility, and a discrete GARCH
model need not be asymptotically equivalent to its diffusion limit
\citep{wang2002}. Thus a network estimated from daily squared returns need
not coincide with the network governing continuous-time integrated
volatility.

High-frequency observations address the measurement problem.
Intraday returns consistently recover daily integrated volatility
\citep{barndorff2002,andersen2003}; truncation, pre-averaging, realized
kernels, and noise-robust QMLE address jumps and market microstructure noise
\citep{mancini2009,jacod2009,barndorff2008,xiu2010}. High-frequency
transforms and integrated moments further permit inference on nonlinear
volatility functionals and restrictions among latent volatility components
\citep{todorov2012,lixiu2016,litodorovtauchen2016}. Realized GARCH models use
realized measurements in daily volatility recursions \citep{hansen2012}, and
realized GARCH--It\^{o} models embed such recursions in continuous time
\citep{song2021,kim2024}. This literature establishes how to recover latent
volatility and how to connect intraday price variation to daily dynamics.
The known-network high-frequency specification of \citet{yuan2024}, however,
does not estimate an unknown directed network within the continuous-time
volatility law.

Our contribution is to formulate this latent transmission law as a parameter
of a continuous-time volatility model, identify it from intraday prices, and
conduct inference on the resulting response-based connectedness. To this end,
we introduce a network realized GARCH--It\^{o} model. Starting from a
multivariate continuous-time price process, we construct a positive
instantaneous variance process whose conditional daily integrated volatility
obeys an exponential network recursion. The off-diagonal matrix in this
recursion is unknown, directed, and signed. Its $(i,j)$ entry quantifies the
marginal effect of asset $j$'s lagged log integrated volatility on the
conditional log integrated volatility of asset $i$, net of own-volatility
feedback and persistence. The network is therefore a parameter of the
integrated-volatility dynamics, rather than a network imposed on observed
returns or constructed after fitting an unrestricted forecasting system.

High-frequency observations play an identification role in this model. As the
intraday mesh shrinks, realized-volatility measurement error vanishes and the
feasible recursion approaches the one based on latent integrated volatility.
By contrast, a daily squared return retains nonvanishing log measurement error.
Proposition~\ref{prop:nonequiv} shows the resulting attenuation in a
leverage-free projection benchmark, and Theorem~\ref{thm:asym-normal}
establishes first-order equivalence between feasible and latent-IV estimation
for a fixed and correctly specified network template. High-frequency data thus
identify the dynamic state on which the network is defined; they are not
merely an additional predictor.

For an unknown network, we use projected-factor and sparse regularization to
separate pervasive off-diagonal dependence from directed sparse channels
\citep{chandrasekaran2011,negahban2012}. This decomposition is an estimation
device for moderate-dimensional panels, not a separate economic definition of
connectedness. The fitted stable recursion instead defines the object of
interest: a perturbation to one asset's lagged log integrated volatility is
propagated through the model, and the cumulative responses yield to, from,
net, and total connectedness. These are response-based measures on the
log-volatility scale. They are neither forecast-error variance shares nor
responses to an externally identified causal shock.

The second econometric result concerns inference for these response
functionals. After the sparse support has been selected, a fixed-rank local
refit jointly re-estimates the dynamic and network parameters. Under the stated
oracle and local-identification conditions, its limit law transfers to
connectedness measures and Wald tests. The resulting inference is deliberately
conditional on support selection and fixed rank; the supporting separation,
recovery, and support-selection results establish when this transfer is
available. The result is pointwise and is not uniform over networks with
spillover coefficients approaching zero \citep{leeb2005}.

We apply the model to one-minute observations for the nine original Select
Sector SPDR ETFs from 2007 to 2025. The model delivers the lowest average
QLIKE among the reported recursive forecasts for 2017--2025, although its
advantage over HAR is not statistically significant. The fitted recursion also
produces a model-implied map of sectoral volatility propagation: the
full-sample weighted LASSO selects an empty sparse support, so conditional
inference concerns the rank-one projected factor component and identifies
Energy as a net transmitter. Aggregate connectedness is similar across the
main high-frequency measures, whereas individual sparse edges and annual
transmitter--receiver rankings are less stable and are therefore not
interpreted as an invariant sector hierarchy. The noisy-price estimators used
in the application are robust empirical implementations, whereas the
network-recovery theory is stated for a moderate-dimensional panel and a
generic log-IV proxy expansion.

The remainder of the paper is organized as follows. Section~2 introduces the model and its continuous-time embedding. Section~3 studies estimation with generated realized-volatility regressors. Section~4 develops network recovery and connectedness inference. Sections~5 and 6 report the simulation and empirical results, respectively. Section~7 concludes. The proofs are in the appendices.

\paragraph{Notation.}
For a vector $v$, $\|v\|_2$ is the Euclidean norm. For a matrix $A$, $\|A\|_2$ denotes the spectral norm, $\|A\|_F$ the Frobenius norm, $\|A\|_*$ the nuclear norm, $\|A\|_1=\sum_{i,j}|A_{ij}|$ and $\|A\|_\infty=\max_{i,j}|A_{ij}|$ the entrywise $\ell_1$ and sup norms ($\|A\|_{\max}$ is used interchangeably with $\|A\|_\infty$), and $\|A\|_{\op,\infty}=\max_i\sum_j|A_{ij}|$ the induced maximum-row-sum norm. The spectral radius is $\varrho(A)$, $\langle A,B\rangle=\tr(A^\top B)$, and $\diag(A)$ is the diagonal matrix formed from the diagonal of $A$, so $P_{\mathrm{off}}(A)=A-\diag(A)$ deletes that diagonal. For any norm $\mathcal N$, $\mathcal N^\circ$ denotes its dual norm. Constants $c,C$ are positive, universal unless indexed, and may change from line to line.

\section{Model Setup}

Let $X_t=(X_{1,t},\ldots,X_{N,t})^\top$ denote log prices of $N$ assets. For $i=1,\ldots,N$,
\begin{equation}
  dX_{i,t}=\mu_{i,t}\,dt+\sigma_{i,t}\,dB_{i,t}+dJ_{i,t},
  \label{eq:price}
\end{equation}
where $B_t$ is an $N$-dimensional Brownian motion, $J_{i,t}$ is a finite-activity jump process, and $\sigma_{i,t}^2$ is the instantaneous variance. For integer days $t=1,2,\ldots$, define the integrated volatility and conditional integrated volatility by
\begin{equation}
  IV_{i,t}=\int_{t-1}^{t}\sigma_{i,s}^2\,ds,\qquad
  h_{i,t}=\E\!\left(IV_{i,t}\mid \F_{t-1}\right).
\end{equation}
Let $z_{i,t}=\log IV_{i,t}$ and $y_{i,t}=\log h_{i,t}$.

\begin{proposition}
\label{prop:embedding}
Suppose $h_{i,0}>0$ and $IV_{i,0}>0$ are $\F_0$-measurable, the Brownian filtration can support independent within-day volatility innovations, $\omega$ is finite, and the spillover matrix $M$ has finite row sums. Define the daily log conditional integrated volatility by
\begin{equation}
  y_{i,t}
  =\omega_i+\gamma y_{i,t-1}+\beta z_{i,t-1}
  +\sum_{j=1}^N M_{ij}z_{j,t-1},
  \qquad i=1,\ldots,N,
  \label{eq:log-recursion}
\end{equation}
or, in vector form,
\begin{equation}
  y_t=\omega+\gamma y_{t-1}+(\beta I_N+M)z_{t-1}.
  \label{eq:vector-recursion}
\end{equation}
Then there exists an adapted positive semimartingale volatility process $\sigma_{i,s}^2$ such that $h_{i,t}=\E(IV_{i,t}\mid\F_{t-1})=\exp(y_{i,t})$ and the log recursion \eqref{eq:log-recursion} is generated by the continuous-time price process \eqref{eq:price}. In particular, $h_{i,t}>0$ and $IV_{i,t}<\infty$ almost surely, even when the directed spillover coefficients are signed.
\end{proposition}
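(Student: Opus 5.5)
The proof is constructive. Day by day, we build a within-day instantaneous variance path whose integral over $[t-1,t]$ has conditional mean exactly $\exp(y_{i,t})$, where $y_{i,t}$ is given by \eqref{eq:log-recursion}. The construction runs by induction on $t$. Suppose that at time $t-1$ the quantities $y_{t-1}$ and $IV_{t-1}$ are $\F_{t-1}$-measurable and finite, with $IV_{i,t-1}>0$. Then $z_{t-1}=\log IV_{t-1}$ is finite. Because $\omega$ is finite and $M$ has finite row sums, $y_{i,t}=\omega_i+\gamma y_{i,t-1}+\beta z_{i,t-1}+\sum_j M_{ij}z_{j,t-1}$ is a finite $\F_{t-1}$-measurable scalar. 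This holds whatever the signs of the $M_{ij}$: the signs enter only through a linear combination of finite logs, and positivity will come from exponentiating. The base case is supplied by the assumption that $h_{i,0}$ and $IV_{i,0}$ are positive and $\F_0$-measurable.

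For the within-day construction, set $h_{i,t}=\exp(y_{i,t})>0$ and define, for $s\in(t-1,t]$,
\[
\sigma_{i,s}^2 = h_{i,t}\,\xi_{i,s},\qquad \xi_{i,s}=\exp\!\Bigl(\eta W_{i,s}-\tfrac12\eta^2 (s-t+1)\Bigr),
\]
where $W_{i}$ is a Brownian motion started afresh at $t-1$ and independent of $\F_{t-1}$. We take it from the within-day innovations that the filtration is assumed to support, independent of $B$ if desired. Any positive unit-mean martingale, or a deterministic choice $\xi\equiv 1$, would serve equally well. Then $\sigma_{i,s}^2$ is adapted and positive. By Itô's formula, $d\sigma^2_{i,s}=\eta\sigma^2_{i,s}dW_{i,s}$, so it is a continuous semimartingale on each day. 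At the integer knots the process jumps to the new level $h_{i,t+1}$, which is $\F_t$-measurable. Pasting across days therefore gives a càdlàg adapted semimartingale on $[0,\infty)$.

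By Fubini and the independence of $\xi$ from $\F_{t-1}$, we get $\E(IV_{i,t}\mid\F_{t-1})=h_{i,t}\int_{t-1}^t\E\xi_{i,s}\,ds=h_{i,t}$. Hence $\log h_{i,t}=y_{i,t}$, and the log recursion is exactly \eqref{eq:log-recursion}. Moreover, $IV_{i,t}=h_{i,t}\int\xi_{i,s}ds$ is a.s. finite, since $\xi$ is continuous on a compact interval. It is also a.s. strictly positive, since $\xi>0$, so $z_{i,t}$ is finite and the induction step closes. Plugging this $\sigma$ into \eqref{eq:price}, with any admissible drift and finite-activity jumps, yields the price process, and jumps do not affect $IV$. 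This proves the claim.

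The main obstacle is the measurability and filtration bookkeeping: $W_i$ must be independent of $\F_{t-1}$ while still generating part of the filtration used for $h_{i,t+1}$. The independent-innovation assumption is exactly what handles this, and the adapted pasting at the knots must be checked. I would also remark that no stationarity or moment bound on $y$ is needed. Finiteness holds pathwise from the recursion, which is why signed $M$ poses no problem, in contrast with linear GARCH, where positivity constraints on coefficients would be required.
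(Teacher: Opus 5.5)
Your proof is correct and follows essentially the same route as the paper. You define $h_{i,t}=\exp(y_{i,t})$ inductively, use a within-day Dol\'eans-Dade (geometric Brownian) factor driven by innovations independent of $\F_{t-1}$, and apply Fubini/Tonelli to obtain $\E(IV_{i,t}\mid\F_{t-1})=h_{i,t}$. One cosmetic point: with your day intervals $(t-1,t]$ the pasted path is left-continuous at the integer knots, so to justify your c\`adl\`ag claim you should use $[t-1,t)$ and take the right-hand value $h_{i,t+1}$ at each integer $t$, as the paper does; the knots are Lebesgue-null, so $IV_{i,t}$ is unaffected.
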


Proposition~\ref{prop:embedding} establishes that the log specification is
compatible with a positive continuous-time variance process. It ensures
$h_{i,t}=\exp(y_{i,t})>0$ without restricting the signs of the spillover
coefficients. The parameter $\gamma$ measures persistence in conditional log
integrated volatility, $\beta$ measures own realized-volatility feedback, and
$M$ measures directed cross-asset feedback. We impose $\diag(M)=0$ because own
feedback is represented by $\beta I_N$. Thus $M_{ij}$ maps lagged log integrated
volatility of asset $j$ to the conditional log volatility of asset $i$, and an
edge is directed from $j$ to $i$.

The jump process in \eqref{eq:price} does not enter the volatility recursion directly. The recursion concerns continuous integrated volatility, $IV_{i,t}=\int_{t-1}^t\sigma_{i,s}^2\,ds$. The empirical proxy therefore combines jump truncation with a noise correction, as described in Section~\ref{sec:hf}.

We distinguish two conditional means:
\begin{equation}
  y_{i,t}=\log\E(IV_{i,t}\mid\F_{t-1})=\log h_{i,t},
  \qquad
  q_{i,t}=\E(\log IV_{i,t}\mid\F_{t-1})=\E(z_{i,t}\mid\F_{t-1}).
  \label{eq:two-objects}
\end{equation}
The embedding in Appendix~\ref{app:embedding} gives
$u_{i,t}:=z_{i,t}-y_{i,t}
=\log\!\int_0^1\mathcal E_{i,t}(u)\,du$, where $\mathcal E_{i,t}$ is a positive
martingale with unit conditional mean. Jensen's inequality implies
$\E(u_{i,t}\mid\F_{t-1})=c_{0,i}\le0$, where $c_{0,i}$ is finite and equals zero
only when there is no within-day variance innovation. Hence $q_t=y_t+c_0$,
where $c_0=(c_{0,1},\ldots,c_{0,N})^\top$. Substituting $y_t=q_t-c_0$ into
\eqref{eq:vector-recursion} gives the same slope recursion with a shifted
intercept:
\begin{equation}
  q_t=\widetilde\omega+\gamma q_{t-1}+(\beta I_N+M)z_{t-1},
  \qquad
  \widetilde\omega=\omega+(1-\gamma)c_0 .
  \label{eq:q-recursion}
\end{equation}
The quasi-likelihood criterion of Section~3 has population target $q_t$, the
conditional mean of $z_t$. The recentered innovation
$\widetilde u_{i,t}=z_{i,t}-q_{i,t}=u_{i,t}-c_{0,i}$ is a martingale
difference. Equations \eqref{eq:vector-recursion} and \eqref{eq:q-recursion}
differ only in the intercept: they share the same slope parameters and
therefore imply the same connectedness measures. A level forecast uses
$h_{i,t}=\exp(q_{i,t}-c_{0,i})$, with $c_{0,i}$ estimated by a residual
smearing correction.

For later network regression, iterate \eqref{eq:q-recursion} under $|\gamma|<1$. Let $\mu_z=\E z_t$, $K=\beta I_N+M$, and define the centered filtered design
\begin{equation}
  x_t=\sum_{k=0}^{\infty}\gamma^k(z_{t-1-k}-\mu_z),
  \qquad
  q_t=c_q+K x_t,
  \qquad
  c_q=\frac{\widetilde\omega+K\mu_z}{1-\gamma}.
  \label{eq:filtered-design}
\end{equation}
The staged network regression uses the filtered predictor $x_t$. Its feasible
version $\widehat x_t$ replaces $z_s$, $\mu_z$, and $\gamma$ by
$\widehat z_s$, the sample mean, and the first-stage estimate. All Gram
matrices, factor decompositions, curvature conditions, and defactoring
operations below refer to this filtered design.

The directed spillover matrix is decomposed as
\begin{equation}
  M=R+S,\qquad R=P_{\mathrm{off}}(L)\ \text{for some rank-}r\text{ completion }L,\qquad S \ \text{sparse and zero diagonal}.
  \label{eq:lplusS}
\end{equation}
Only the off-diagonal component $R$ is identified; the diagonal of a completion
$L$ is not observed through $M$. Since $\diag(M)=0$, own-volatility feedback is
identified through $\beta I_N$. We regularize $R$ by the projected nuclear norm
\begin{equation}
  \Omega_*(R)=\inf_{L:P_{\mathrm{off}}(L)=R}\|L\|_* .
  \label{eq:projected-nuclear}
\end{equation}
The component $R$ represents pervasive off-diagonal dependence, and $S$
represents sparse directed spillovers. If $M=\rho W$, the scale of $W$ is
fixed by a normalization such as unit absolute row sums or $\|W\|_F=1$.

Let
\begin{equation}
  A=(\gamma+\beta)I_N+M.
\end{equation}
A sufficient stability condition is
\begin{equation}
  \varrho(A)<1,
  \label{eq:stability}
\end{equation}
where $\varrho(\cdot)$ denotes spectral radius. Under this condition, the log-volatility recursion admits a stationary causal representation.

\section{Estimation and Asymptotic Properties}

\subsection{High-frequency realized-volatility proxies}
\label{sec:hf}

Suppose that $m$ intraday returns are observed each day. The latent quantity
\[
  IV_{i,t}=\int_{t-1}^{t}\sigma_{i,s}^2\,ds
\]
is replaced by a robust estimator $\widehat{IV}_{i,t}$. With jumps and
negligible microstructure noise, threshold realized variance removes returns
above a data-dependent threshold and targets continuous integrated variance
\citep{mancini2009}. With microstructure noise, pre-averaging and realized
kernels provide noise-robust estimators of quadratic variation
\citep{jacod2009,barndorff2008}. Estimator-specific modifications are required
when jumps and noise are both present.

We require the relative-error expansion \citep[see][]{jacod2012}
\begin{equation}
  \frac{\widehat{IV}_{i,t}-IV_{i,t}}{IV_{i,t}}
  =b_{i,t}/m+\eta^{IV}_{i,t},
  \qquad
  \eta^{IV}_{i,t}=O_p(m^{-1/2})
  \label{eq:iv-expansion}
\end{equation}
without microstructure noise. Under noise, the stochastic rate becomes
$O_p(m^{-1/4})$. The bias is negligible under the in-fill conditions below or
can be corrected. Define the feasible log realized volatility by
\begin{equation}
  \widehat z_{i,t}=\log\{\widehat{IV}_{i,t}\vee c_{N,T}\},
\end{equation}
where $c_{N,T}\downarrow0$ is deterministic. Under
Assumption~\ref{ass:price}, trimming is inactive with probability tending to
one. A first-order expansion gives
\begin{equation}
  \widehat z_{i,t}-z_{i,t}
  =
  \frac{\widehat{IV}_{i,t}-IV_{i,t}}{IV_{i,t}}
  +O_p\!\left(\left\{\frac{\widehat{IV}_{i,t}-IV_{i,t}}{IV_{i,t}}\right\}^2\right),
  \label{eq:log-iv-expansion}
\end{equation}
which is the generated-regressor error controlled in Theorem~\ref{thm:asym-normal}.

\subsection{Quasi-likelihood and penalized network estimation}

Let $\theta_f$ collect the fixed-dimensional dynamic parameters, such as
$(\widetilde\omega,\gamma,\beta,\rho)$ in the staged scalar-network model.
Throughout, $q_t(\theta)$ denotes the fitted conditional mean of $z_t$, while
$y_t=\log h_t$ denotes log conditional integrated volatility in the
continuous-time model. We use the identity-weighted Gaussian criterion
\begin{equation}
  Q_T(\theta_f)=\frac1T\sum_{t=1}^T
  \ell_t(\theta_f),\qquad
  \ell_t(\theta_f)=\|z_t-q_t(\theta_f)\|_2^2.
  \label{eq:infeasible-qml}
\end{equation}
The feasible criterion replaces $z_t$ by $\widehat z_t$. A fixed diagonal
weighting is equivalent after row standardization. We do not estimate a
non-diagonal innovation covariance because it couples the response equations
and changes the support Hessian to a Kronecker form that lies outside the
row-wise theory developed below. For network recovery, let
$(\widehat c_q,\widehat\gamma,\widehat\beta)$ be preliminary estimates and
construct $\widehat x_t$ from \eqref{eq:filtered-design}. The residualized
response is
\begin{equation}
  \widehat w_t^{\,\mathrm{res}}
  =\widehat z_t-\widehat c_q-\widehat\beta\widehat x_t.
  \label{eq:residualized-response}
\end{equation}
The preliminary estimates may come from the Stage-0 QMLE or another procedure
satisfying Assumption~\ref{ass:profile}. The regularized network estimator
selects a support for the subsequent refit.

For $M=R+S$, the theoretical network estimator is the fixed-first-stage residualized quadratic estimator
\begin{equation}
  (\widehat R,\widehat S)
  \in\arg\min_{R,S}
  \frac1T\sum_{t=1}^T
  \|\widehat w_t^{\,\mathrm{res}}-(R+S)\widehat x_t\|_2^2
  +\lambda_R\Omega_*(R)+\lambda_S\sum_{i\ne j}a_{ij}|S_{ij}|,
  \label{eq:penalized}
\end{equation}
subject to $\diag(R)=\diag(S)=0$, the stability region for $R+S$, and the explicit row-degree constraint
\begin{equation}
  \max_{1\le i\le N}\|S_{i\cdot}\|_0\le \bar d_N,
  \qquad \bar d_N=2d_N,
  \label{eq:row-degree}
\end{equation}
where $d_N$ is the maximum degree in
Assumption~\ref{ass:network}(iii). This row-degree restriction is a theoretical
localization used in the cone argument. Theorem~\ref{thm:error} applies to any
feasible estimator that satisfies the basic objective inequality relative to
$(R_0,S_0)$; a global minimizer is one example. The row-degree and stability
restrictions are nonconvex. Without them, the fixed-first-stage criterion is
convex, and the theorem applies on the event that the unconstrained solution
has row degree at most $\bar d_N$.

The weights $a_{ij}$ are obtained from a preliminary fit. Write
$\|S\|_{1,a}=\sum_{i\ne j}a_{ij}|S_{ij}|$ and
$\|G\|_{\infty,a}^*=\max_{i\ne j}|G_{ij}|/a_{ij}$. We assume
$0<c_a\le a_{ij}\le C_a<\infty$ on the local event. The implementation first
constructs $\widehat z_t$, estimates the preliminary dynamics, solves
\eqref{eq:penalized}, and refits the selected support without shrinkage.
Theorem~\ref{thm:error} concerns \eqref{eq:penalized}, not a later joint
numerical refinement.

\subsection{Identification and asymptotic properties}

We now state the conditions for identification and asymptotic inference.

\begin{assumption}
\label{ass:price}
The drift and volatility processes in \eqref{eq:price} are adapted, c\`adl\`ag, locally bounded, and satisfy the moment and smoothness conditions required for jump-robust realized-volatility estimation. Jumps have finite activity. For some $p>0$, $\sup_{i,t}\E(IV_{i,t}^{-p})<\infty$. The relative error $\delta_{i,t}=(\widehat{IV}_{i,t}-IV_{i,t})/IV_{i,t}$ admits \eqref{eq:iv-expansion}, with $\max_{i,t}|b_{i,t}|=O_p(1)$; conditionally on the volatility path, its centered part has sub-Gaussian scale $Cm^{-1/2}$ without microstructure noise and $Cm^{-1/4}$ under noise. Choose $c_{N,T}\downarrow0$ so that $NTc_{N,T}^p=o(1)$. Together with Assumption~\ref{ass:rates}, these conditions imply that trimming is inactive and $\max_{i,t}|\delta_{i,t}|\le1/2$ with probability tending to one.
\end{assumption}

\begin{assumption}
\label{ass:dep}
The process $\{z_t\}$ is strictly stationary and ergodic, with moments of order $2+\delta$ for some $\delta>0$. For the fixed-dimensional QMLE, there are $\epsilon>0$ and a neighborhood $\Theta_0$ of $\theta_{f,0}$ such that
\[
  \E\|\nabla q_t(\theta_{f,0})^\top\widetilde u_t\|_2^{2+\epsilon}<\infty,
  \qquad
  \E\sup_{\theta\in\Theta_0}\|\nabla^2\ell_t(\theta)\|_2^{1+\epsilon}<\infty,
\]
where $\widetilde u_t=z_t-q_t$. These conditions provide square integrability and a Lindeberg condition for the score and an integrable envelope for the Hessian; they may alternatively be obtained from stronger primitive moments on $z_t$ and the stable filter.
\end{assumption}

\begin{assumption}
\label{ass:hd-tail}
For high-dimensional network recovery, the filtered design $x_t$ in \eqref{eq:filtered-design} and the recentered innovation $\widetilde u_t=z_t-q_t$ are geometrically beta-mixing and uniformly coordinatewise sub-exponential:
\[
  \max_{j\le N}\|x_{j,t}\|_{\psi_1}\le\tau_x,
  \qquad
  \max_{i\le N}\|\widetilde u_{i,t}\|_{\psi_1}\le\tau_u,
\]
uniformly in $N$ and $t$. The feasible log-IV plug-in error satisfies Lemma~\ref{lem:logiv-filter}. These tail conditions provide primitive background for coordinatewise control and are compatible with pervasive factors because they do not bound every unit-vector projection of $x_t$ uniformly in $N$. The restricted curvature and score bounds actually used below are maintained separately in Assumption~\ref{ass:hd-concentration}.
\end{assumption}

\begin{assumption}
\label{ass:network}
The network decomposition and design satisfy the following conditions.
\begin{enumerate}[label=(\roman*)]
  \item $M_0=R_0+S_0$, where $R_0=P_{\mathrm{off}}(L_0)$ and $L_0$ is the unique minimum-nuclear-norm completion of $R_0$, with $\rank(L_0)=r$.
  When $r=0$, set $R_0=0$ and omit the projected-norm terms below.
  \item For $r\ge1$, the projected nuclear norm is regular at $R_0$.
  Specifically,
  $(\T,\T^\perp)$ is a decomposable pair for $\Omega_*$; there exists
  $E_0\in\partial\Omega_*(R_0)\cap\T$ such that
  $\Omega_*^\circ(E_0)=1$ and $\|E_0\|_\infty\le\xi(\T)$; and, for every
  off-diagonal perturbation $H$, there is an $F_H\in\T^\perp$ such that
  \[
    \Omega_*^\circ(F_H)\le1,\qquad
    \langle F_H,H\rangle=\Omega_*(P_{\T^\perp}H),\qquad
    E_0+F_H\in\partial\Omega_*(R_0).
  \]
  For a fixed constant $C_{\mathrm{tan}}<\infty$, independent of
  $(N,T,m)$,
  \[
    \Omega_*(H)\le C_{\mathrm{tan}}\sqrt{2r}\|H\|_F,\quad H\in\T,
    \qquad
    \|P_\T Z\|_2\le2\|Z\|_2 .
  \]
  For the standard rank-$r$ tangent space
  $\T_{\mathrm{std}}=\{U_0X^\top+YV_0^\top\}$, the spectral bound follows
  from the usual tangent projection formula \citep{chandrasekaran2011}.
  Lemma~\ref{lem:projected-primitives} derives injectivity and the
  effective-rank bound from a diagonal-leakage condition and constructs
  $E_0$ in the hollow case. The decomposability, projected spectral bound,
  and compatible-subgradient property remain maintained conditions.
  \item The sparse component $S_0$ is zero diagonal. With $\A_i=\{j:S_{0,ij}\ne0\}$, let
  \[
    d_N=\max\!\left\{\max_i|\A_i|,
    \max_j|\{i:S_{0,ij}\ne0\}|\right\}
  \]
  denote its maximum in- or out-degree. The separation transversality is
  controlled by $\xi(\T)\mu(\OmegaS)$ in Lemma~\ref{lem:trans}, while the
  RSC cone transversality is measured by $\kappa_{\mathrm{cone}}$ in
  Lemma~\ref{lem:cone-equivalence}; assume $\kappa_{\mathrm{cone}}<3/4$.
  \item The centered filtered design has the approximate factor representation
  \[
    x_t=\Lambda f_t^x+e_t^x,
    \qquad
    \Gamma_x=\E(x_tx_t^\top)
    =\Lambda\Sigma_f^x\Lambda^\top+\Sigma_{e,x}.
  \]
  The factor dimension $r_f$ is fixed or controlled. The eigenvalues of
  $N^{-1}\Lambda^\top\Lambda$ are bounded above and away from zero, and
  $\liminf_N\lambda_{\min}(\Sigma_{e,x})\ge\phi_0>0$. Thus
  $\lambda_{\max}(\Gamma_x)$ may be of order $N$, while the idiosyncratic
  curvature remains nondegenerate. Assumption~\ref{ass:hd-concentration}
  states the sample curvature condition used below.
\end{enumerate}
\end{assumption}

\begin{assumption}
\label{ass:rates}
$m,T\to\infty$. For finite-dimensional inference, $T=o(m^2)$ without noise, and $T=o(m)$ under noise. The current high-dimensional network recovery theory is stated for the no-microstructure-noise or bias-corrected case in which the log-IV plug-in error is $O_p(m^{-1/2})$. In that case the recovery rate restrictions are
\begin{align*}
  \frac{N^2(\log T)^2}{T}&=o(1), &
  \frac{d_N^2\log N(\log T)^2}{T}&=o(1),\\
  \frac{N^2}{m}&=o(1), &
  \frac{d_N^2}{m}&=o(1), &
  \frac{\log(NT)}{m}&=o(1).
\end{align*}
Equivalently, the sampling and plug-in terms vanish on the relevant
projected-factor/sparse cone.
\end{assumption}

\begin{assumption}
\label{ass:orth}
Let $\zeta_t$ be the log-realized-volatility plug-in error in
\eqref{eq:log-iv-expansion}. As in Lemma~\ref{lem:logiv-filter}, write it as
the sum of a conditional bias $b^z_t$, a centered leading term
$\mathring\zeta_t$, and a quadratic remainder $r^z_t$. Estimator-specific
higher-order discretization terms are absorbed into $b^z_t$ and $r^z_t$; the
exact centering below applies only to $\mathring\zeta_t$. Let $\mathcal G$
denote the $\sigma$-field generated by the drift, volatility, and jump paths
and by the daily innovation sequence $\{\widetilde u_t\}$. Thus $\mathcal G$
contains
everything except the within-day discretization randomness and, under noise,
the microstructure randomness. Conditionally on $\mathcal G$, on the localized
event of Lemma~\ref{lem:logiv-filter}:
\begin{enumerate}[label=(\roman*)]
  \item the vectors $\{\mathring\zeta_t\}_{t\le T}$ are independent across $t$ with $\E(\mathring\zeta_t\mid\mathcal G)=0$;
  \item conditionally on $\mathcal G$, they are uniformly vector sub-Gaussian:
  \[
    \sup_{\|v\|_2=1}\|v^\top\mathring\zeta_t\|_{\psi_2\mid\mathcal G}
    \le C m^{-1/2}
  \]
  in the no-noise case, with $Cm^{-1/4}$ under microstructure noise. In particular, each coordinate has conditional second moment at most $C/m$ (respectively $C/\sqrt m$).
\end{enumerate}
These high-level conditions strengthen estimator-specific stable limits
\citep{jacod2012,mancini2009,jacod2009,barndorff2008}. They impose the
uniform vector tails and conditional independence needed here. The
generated-realized-volatility framework follows \citet{song2021,kim2024}.
Set $\varsigma_m=m^{-1/2}$ without noise and
$\varsigma_m=m^{-1/4}$ under noise. The conditions imply that, for any
$\mathcal G$-measurable weight array satisfying
$T^{-1}\sum_t\|g_t\|_2^2=O_p(1)$,
\[
  \Var\!\Big(T^{-1/2}\textstyle\sum_t g_t^\top\mathring\zeta_t\;\Big|\;\mathcal G\Big)
  =T^{-1}\textstyle\sum_t\E\{(g_t^\top\mathring\zeta_t)^2\mid\mathcal G\}
  \le C\varsigma_m^2\,T^{-1}\textstyle\sum_t\|g_t\|_2^2
  =O_p(\varsigma_m^2),
\]
and therefore $T^{-1/2}\sum_tg_t^\top\mathring\zeta_t=o_p(1)$. In the
QMLE score, the weights are functions of the infeasible gradient and the
geometric filter and are $\mathcal G$-measurable. Terms involving the feasible
gradient are quadratic in $\zeta$. The bias terms vanish under the rate
conditions in Assumption~\ref{ass:rates}.
\end{assumption}

\begin{assumption}
\label{ass:hd-concentration}
At the true network and true first-stage parameter, let
$\widehat x_t(\theta_{f,0})$ and
$\widehat w_t^{\,\mathrm{res}}(\theta_{f,0})$ denote the feasible filtered
design and residualized response constructed from $\widehat z_t$ while holding
the dynamic parameters at $\theta_{f,0}$. Define
\[
  \widehat\varepsilon_t^{\,\mathrm f}
  =\widehat w_t^{\,\mathrm{res}}(\theta_{f,0})
   -M_0\widehat x_t(\theta_{f,0}),
  \qquad
  G_T^{\mathrm f}=T^{-1}\sum_{t=1}^T
  \widehat\varepsilon_t^{\,\mathrm f}
  \widehat x_t(\theta_{f,0})^\top .
\]
Define
\[
  a_{S,T}=\sqrt{\frac{(\log T)^2\log N}{T}}+\sqrt{\frac{\log N}{m}},
  \qquad
  a_{R,T}=\frac{N\log T}{\sqrt T}+\frac{N}{\sqrt m}.
\]
For the quadratic network loss in \eqref{eq:penalized},
$G_T^{\mathrm f}$ is minus one half of the gradient with respect to $M$ at
$M_0$, holding the preliminary quantities fixed.
With probability tending to one, the feasible quadratic loss has restricted curvature directly on the joint cone $\mathcal C_\oplus$ defined in \eqref{eq:joint-cone}:
\begin{equation}
  \inf_{\substack{(\Delta_R,\Delta_S)\in\mathcal C_\oplus\\
                  \Delta_R+\Delta_S\ne0}}
  \frac{T^{-1}\sum_{t=1}^T
  \|(\Delta_R+\Delta_S)\widehat x_t(\theta_{f,0})\|_2^2}
  {\|\Delta_R+\Delta_S\|_F^2}
  \ge \underline\phi>0.
  \label{eq:rsc}
\end{equation}
The network score satisfies
\[
  \|G_T^{\mathrm f}\|_{\infty,a}^*=O_p(a_{S,T}),
  \qquad
  \Omega_*^\circ(G_T^{\mathrm f})=O_p(a_{R,T}).
\]
These are high-level conditions for the growing-$N$ result. The order
$a_{R,T}$ allows $\lambda_{\max}(\Gamma_x)\asymp N$ under a pervasive
predictor factor. Proposition~SA.1 in the Online Appendix verifies the
curvature and score bounds for a Gaussian pervasive-factor array with
geometric mixing and an independent Gaussian log-IV perturbation.
\end{assumption}

Assumptions~\ref{ass:dep} and \ref{ass:rates} support the fixed-dimensional
QMLE. Assumptions~\ref{ass:hd-tail} and \ref{ass:hd-concentration} support
network recovery. The projected-norm conditions separate the projected factor
component from sparse spillovers. Irrepresentability and beta-min are imposed
only for support recovery and the post-selection refit.

\begin{proposition}
\label{prop:staged-id}
The baseline studied in this paper uses staged rather than unrestricted one-step identification.
\begin{enumerate}[label=(\roman*)]
  \item In the restricted Stage-0 model, $M_0=\rho_0W^\star$, where $W^\star$ is fixed, deterministic, correctly normalized, and satisfies $\diag(W^\star)=0$. Then $(\widetilde\omega,\gamma,\beta,\rho)$ are identified under stability if the only constants $(a_0,b_1,b_2,b_3)\in\mathbb R^N\times\mathbb R^3$ satisfying
  \[
    a_0+b_1q_{t-1}(\theta_0)+b_2z_{t-1}+b_3W^\star z_{t-1}=0
    \quad\text{a.s.}
  \]
  are $a_0=0$ and $b_1=b_2=b_3=0$.
  \item Conditional on the identified scalar dynamics, $M$ is identified if $\Gamma_z=\E(z_{t-1}z_{t-1}^\top)$ is positive definite on the row space allowed for $M$.
  \item Conditional on the true tangent space and sparse support, the
  decomposition $M=R+S$ is tangent--support identifiable and is the unique
  regularization-selected convex optimum under Theorem~\ref{thm:separation};
  global algebraic uniqueness over all rank--support pairs is not claimed.
\end{enumerate}
\end{proposition}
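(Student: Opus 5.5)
For part (i) I will use the standard QMLE identification argument for the conditional-mean model. Since $\widetilde u_t=z_t-q_t(\theta_0)$ is a martingale difference, for any $\theta$ in the stability region
\[
\E\|z_t-q_t(\theta)\|_2^2=\E\|\widetilde u_t\|_2^2+\E\|q_t(\theta_0)-q_t(\theta)\|_2^2 .
\]
The cross term vanishes because $q_t(\theta)$ is $\F_{t-1}$-measurable. So $\theta_0$ minimizes the population criterion, and identification reduces to showing that $q_t(\theta)=q_t(\theta_0)$ a.s.\ forces $\theta=\theta_0$.

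The key step writes both fitted means through their one-step recursions:
\[
q_t(\theta)=\widetilde\omega+\gamma q_{t-1}(\theta)+\beta z_{t-1}+\rho W^\star z_{t-1}.
\]
If $q_t(\theta)=q_t(\theta_0)$ for all $t$ a.s., then $q_{t-1}(\theta)=q_{t-1}(\theta_0)$ as well, by stationarity and the causal filter representation \eqref{eq:filtered-design} under $|\gamma|<1$. Subtracting the two recursions then gives
\[
(\widetilde\omega-\widetilde\omega_0)+(\gamma-\gamma_0)q_{t-1}(\theta_0)+(\beta-\beta_0)z_{t-1}+(\rho-\rho_0)W^\star z_{t-1}=0 .
\]
The assumed linear-independence condition then yields $a_0=0$ and $b_1=b_2=b_3=0$, hence $\theta=\theta_0$. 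The correct normalization of $W^\star$ is what makes $\rho$, rather than $\rho W^\star$ up to scale, the identified object.

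The main obstacle in (i) is justifying $q_{t-1}(\theta)=q_{t-1}(\theta_0)$ from equality at $t$ alone. To handle it, I will define identification through equality of the whole filtered sequence generated by stationary initialization. Since the filter is a contraction under $|\gamma|<1$, any dependence on the initial value decays geometrically, so the stationary versions are pinned down.

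For part (ii), I hold $(\widetilde\omega,\gamma,\beta)$ at their identified values. The residualized response $w_t=z_t-c_q-\beta x_t$ then satisfies $\E(w_t\mid\F_{t-1})=Mx_t$, so row by row $M_{i\cdot}$ solves the normal equations $\E(x_tx_t^\top)M_{i\cdot}^\top=\E(x_tw_{i,t})$. Since $x_t=\sum_k\gamma^k(z_{t-1-k}-\mu_z)$, I can either work directly in the lag form $q_t=\widetilde\omega+\gamma q_{t-1}+Kz_{t-1}$ or note the following. If $Dz_{t-1}=0$ a.s.\ for some $D$ in the allowed row space, then $D^\top D$ in the quadratic form gives $\E\|Dz_{t-1}\|^2=\langle D,D\Gamma_z\rangle=0$, which forces $D=0$ when $\Gamma_z$ is positive definite on that space. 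Concretely, the difference of two candidate $M$'s satisfies $(M-M')z_{t-1}=0$ a.s. after subtracting the recursions as in (i), with the rows constrained to have zero $i$th entry. This gives $M=M'$.

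For part (iii), I will simply invoke Theorem~\ref{thm:separation}. Given the true tangent space $\T$ and support $\OmegaS$, tangent–support identifiability means $\T\cap\OmegaS=\{0\}$. This follows from the transversality bound $\xi(\T)\mu(\OmegaS)<1$ of Lemma~\ref{lem:trans}. Any $H\in\T\cap\OmegaS$ satisfies $\|H\|_\infty\le\xi(\T)\mu(\OmegaS)\|H\|_\infty$, so $H=0$. Hence $R+S=R'+S'$ with $R-R'\in\T$ and $S'-S\in\OmegaS$ forces equality. Uniqueness of the convex optimum then comes from the dual certificate supplied by Theorem~\ref{thm:separation}. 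I will explicitly state that no claim is made outside the fixed $(\T,\OmegaS)$.
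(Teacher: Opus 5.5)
Your proposal is correct and follows essentially the same route as the paper: the martingale-difference decomposition $Q_0(\theta)-Q_0(\theta_0)=\E\|q_t(\theta)-q_t(\theta_0)\|_2^2$, together with stationarity and the rank condition, for (i); the row-wise identity $a_i^\top\Gamma_z a_i=\E(a_i^\top z_{t-1})^2=0$ for (ii); and a direct appeal to Theorem~\ref{thm:separation} for (iii). Two small remarks: your bound $\|H\|_\infty\le\xi(\T)\mu(\OmegaS)\|H\|_\infty$ on $\T\cap\OmegaS$ is valid and needs only $\xi\mu<1$ (since $P_\T H=H$ there), which is weaker than the $2\xi\mu<1$ of Lemma~\ref{lem:trans}; and you should drop the normal-equation detour in $x_t$ in (ii), because that route would require $\Gamma_x$ rather than the stated $\Gamma_z$ to be positive definite.
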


The Stage-0 template is fixed and correctly specified in
Theorem~\ref{thm:asym-normal}. The theorem does not account for estimation of
$W^\star$ on the same sample. A data-dependent template requires sample
splitting or explicit rate and orthogonality conditions. The high-dimensional
network result instead conditions on preliminary residualized quantities
through Assumption~\ref{ass:profile}.

\begin{proposition}
\label{prop:nonequiv}
In a drift-negligible, leverage-free continuous-martingale benchmark in which the return Brownian motion is independent of the volatility driver, linear projections fitted to daily squared-return proxies target attenuated feedback coefficients rather than the integrated-volatility coefficients.
Let $\F^\sigma$ denote the $\sigma$-field generated by the volatility path.
\begin{enumerate}[label=(\roman*)]
\item \textbf{Level scale.} If
\begin{equation}
  r_{i,t}^2=IV_{i,t}+\nu_{i,t},
  \qquad
  \E(\nu_{i,t}\mid\F^\sigma)=0,\qquad
  \Var(\nu_{i,t}\mid\F^\sigma)=2IV_{i,t}^2,
\end{equation}
then substituting $r_t^2$ for $IV_t$ creates non-vanishing measurement error
at the daily frequency. After removing the intercept, write
$v_t=IV_t-\E(IV_t)$. In the linear projection of the latent level-scale
feedback $K_0v_t$ on the noisy regressor $v_t+\nu_t$, the pseudo-true feedback
matrix is
\begin{equation}
  K^r=K_0\Sigma_v(\Sigma_v+\Sigma_\nu)^{-1},
  \label{eq:daily-attenuation}
\end{equation}
where $\Sigma_v=\Var(v_t)$ and $\Sigma_\nu=\Var(\nu_t)$. In the scalar case,
this reduces to the attenuation factor
\begin{equation}
  \lambda
  =\frac{\Var(IV)}{\Var(IV)+\sigma_\nu^2}<1,
  \qquad \sigma_\nu^2=\E(2IV_t^2).
\end{equation}
\item \textbf{Log scale.} Conditionally on the volatility path $\F^\sigma$, the daily log proxy satisfies
\begin{equation}
  \log r_{i,t}^2=z_{i,t}+\varepsilon_{i,t},
  \qquad
  \varepsilon_{i,t}\mid\F^\sigma\sim\log\chi^2_1,
  \label{eq:log-proxy}
\end{equation}
with $\E(\varepsilon_{i,t})=-(\gamma_{\mathrm E}+\log 2)\approx-1.270$ and
$\Var(\varepsilon_{i,t})=\pi^2/2\approx4.935$, where $\gamma_{\mathrm E}$ is
the Euler--Mascheroni constant; both moments are fixed for every intraday mesh
$m$. Let $x_t$ be the centered filtered design in
\eqref{eq:filtered-design}, so that
$q_{i,t}=c_{q,i}+K_{0,i\cdot}x_t$ with $K_0=\beta I_N+M_0$, and suppose the
$\varepsilon_{i,t}$ are i.i.d.\ across days, independent of the volatility
path, and satisfy $\Var(\varepsilon_t)\succ0$. Fitting the same linear
recursion to the daily proxy replaces $x_t$ by $x^r_t=x_t+e_t$ with
$e_t=\sum_{k\ge0}\gamma^k\{\varepsilon_{t-1-k}-\E(\varepsilon_t)\}$ and
$\Sigma_{\mathrm d}:=\Var(e_t)
=(1-\gamma^2)^{-1}\Var(\varepsilon_t)\succ0$, and the pseudo-true
log-feedback matrix is
\begin{equation}
  K^{r,\log}
  =K_0\,\Gamma_x\,(\Gamma_x+\Sigma_{\mathrm d})^{-1},
  \qquad \Gamma_x=\Var(x_t).
  \label{eq:log-attenuation}
\end{equation}
In the scalar benchmark, or in the isotropic multivariate special case $\Gamma_x=\sigma_x^2I_N$ and $\Sigma_{\mathrm d}=\{(\pi^2/2)/(1-\gamma^2)\}I_N$, the attenuation is the common factor
\begin{equation}
  \lambda_{\log}
  =\frac{\sigma_x^2}{\sigma_x^2+(\pi^2/2)/(1-\gamma^2)}<1.
\end{equation}
Thus the off-diagonal feedback coefficients are scaled toward zero in this special case. In the general multivariate case, \eqref{eq:log-attenuation} is a non-scalar matrix distortion, so no common attenuation factor applies to every connectedness functional. By contrast, the high-frequency proxy has $\Var(\widehat z_{i,t}-z_{i,t})=O(m^{-1})$, so the distortion vanishes as $m\to\infty$.
\end{enumerate}
\end{proposition}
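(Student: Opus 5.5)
The plan is to treat both parts as classical errors-in-variables projections. The only model-specific inputs are the conditional structure of the daily squared return in the leverage-free benchmark and the geometric filter in \eqref{eq:filtered-design}.

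\emph{Setup.} In the drift-negligible, jump-free, leverage-free benchmark with $B$ independent of the volatility driver, conditionally on $\F^\sigma$ the daily return is $r_{i,t}=\int_{t-1}^t\sigma_{i,s}\,dB_{i,s}$. This is exactly $N(0,IV_{i,t})$, since the integrand is $\F^\sigma$-measurable and independent of $B$ (a time change, or a Wiener integral with deterministic integrand given $\F^\sigma$). Hence $r_{i,t}^2=IV_{i,t}\chi^2_{1,i,t}$. This gives $\nu_{i,t}=IV_{i,t}(\chi^2_1-1)$ with $\E(\nu\mid\F^\sigma)=0$ and $\Var(\nu\mid\F^\sigma)=2IV^2$. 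It also gives $\log r_{i,t}^2=z_{i,t}+\log\chi^2_1$, with $\varepsilon_{i,t}=\log\chi^2_1$ independent of $\F^\sigma$. The moments follow from the digamma and trigamma functions at $1/2$: $\E\log\chi^2_1=\psi(1/2)+\log2=-\gamma_{\mathrm E}-\log 2$, and $\Var=\psi'(1/2)=\pi^2/2$. Neither moment involves $m$. Independence across days comes from independent Brownian increments over disjoint days, conditionally on $\F^\sigma$.

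\emph{Level scale.} The population projection coefficient of $K_0v_t$ on $v_t+\nu_t$ is $\Cov(K_0v_t,v_t+\nu_t)\Var(v_t+\nu_t)^{-1}$. The key step is $\Cov(v_t,\nu_t)=0$, which holds by iterated expectations since $\E(\nu_t\mid\F^\sigma)=0$ and $v_t$ is $\F^\sigma$-measurable. This gives $\Var(v+\nu)=\Sigma_v+\Sigma_\nu$ and hence \eqref{eq:daily-attenuation}. In the scalar case, $\Sigma_\nu=\E\Var(\nu\mid\F^\sigma)=\E(2IV^2)$, and $\lambda<1$ because $\sigma_\nu^2>0$ whenever $IV>0$. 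Cross-sectional conditional covariances of $\nu$ may be nonzero if the $B_i$ are correlated. They only enter $\Sigma_\nu$, so the matrix formula is unaffected.

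\emph{Log scale.} Write $x^r_t=x_t+e_t$. Here $e_t$ collects the centered $\varepsilon_{t-1-k}$ through the same geometric filter, using that the proxy sample mean centers at $\mu_z+\E\varepsilon$. The series converges in $L^2$ for $|\gamma|<1$. Because the $\varepsilon$ are i.i.d.\ over days, $\Var(e_t)=\sum_k\gamma^{2k}\Var(\varepsilon)=(1-\gamma^2)^{-1}\Var(\varepsilon_t)$. Because $\varepsilon$ is independent of the volatility path, it is independent of $x_t$, so $\Cov(x_t,e_t)=0$. The regression target is $q_t=c_q+K_0x_t$, or equivalently $z_t$, or the proxy response $z_t+\varepsilon_t$. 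The response noise $\varepsilon_t$ and the innovation $\widetilde u_t$ are uncorrelated with $x^r_t$: the former by independence from the lagged $\varepsilon$ and from the path, the latter by the martingale-difference property together with $x_t$ being $\F_{t-1}$-measurable and independence of $e_t$. So the projection coefficient is $K_0\Gamma_x(\Gamma_x+\Sigma_{\mathrm d})^{-1}$, which is invertible since $\Sigma_{\mathrm d}\succ0$. The isotropic case then reduces to the scalar factor $\lambda_{\log}$ multiplying $K_0$, so the off-diagonal entries $M_{0,ij}$ shrink by $\lambda_{\log}<1$. For the final remark, \eqref{eq:log-iv-expansion} together with \eqref{eq:iv-expansion} and Assumption~\ref{ass:orth}(ii) gives $\Var(\widehat z_{i,t}-z_{i,t})\le C/m$ in the no-noise case. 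Hence the analogous $\Sigma_{\mathrm d}$ is $O(m^{-1})$ and the distortion vanishes by continuity of matrix inversion.

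\emph{Main obstacle.} The delicate step is the exact conditional Gaussianity of the daily return. It requires genuine independence between $B$ and $\sigma$ and no drift or jumps; this is precisely why the statement restricts to the benchmark. I would prove it by conditioning on $\F^\sigma$ and using the characteristic function of the Wiener integral. A secondary point is the mean centering of $e_t$, which I would handle by noting that the intercept absorbs the constant $\E\varepsilon/(1-\gamma)$.
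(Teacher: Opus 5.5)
Your proposal is correct and follows essentially the same route as the paper. Both arguments use conditional Gaussianity of the daily return in the leverage-free benchmark to get the $\chi^2_1$ and $\log\chi^2_1$ errors, with moments from the digamma and trigamma functions at $1/2$. Both then use orthogonality of the measurement error to the signal for the errors-in-variables projection formulas, the geometric sum for $\Sigma_{\mathrm d}$, and the $O(m^{-1})$ high-frequency contamination for the closing remark. The only differences are that you derive the cross-day independence of $\varepsilon$ from Brownian increments rather than taking it as a hypothesis, and you spell out the Wiener-integral justification of conditional Gaussianity that the paper simply asserts.
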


Proposition~\ref{prop:nonequiv} gives a linear-projection comparison. On the
level scale, daily squared returns contain measurement error with conditional
variance $2IV_t^2$. On the log scale, the error is
$\log\chi^2_1$, with variance $\pi^2/2$. Neither error vanishes at the daily
frequency. Winsorization reduces the log-proxy variance but does not eliminate
it. By contrast, the high-frequency log realized-volatility proxy converges to
$z_{i,t}$ at rate $m^{-1/2}$ in the no-noise case.

\begin{theorem}
\label{thm:asym-normal}
Suppose Assumptions \ref{ass:price}, \ref{ass:dep}, \ref{ass:rates}, and
\ref{ass:orth} hold. Let the Stage-0 template $W^\star$ be fixed and correctly
specified as in Proposition~\ref{prop:staged-id}. Let $\Theta_f$ be a compact
parameter space on which $\sup_{\theta\in\Theta_f}|\gamma|<1$ and the recursion
is stable, and let $\theta_f$ be fixed-dimensional with
$\theta_{f,0}\in\operatorname{int}(\Theta_f)$. If the
population Hessian is non-degenerate,
\begin{equation}
  \mathcal H=\E\nabla^2\ell_t(\theta_{f,0})\succ0,
\end{equation}
let $\widetilde\theta_f$ minimize the infeasible criterion
\eqref{eq:infeasible-qml}, and let $\widehat\theta_f$ minimize its feasible
counterpart based on $\widehat z_t$. Then
\begin{equation}
  \sqrt T(\widehat\theta_f-\widetilde\theta_f)=o_p(1),
  \label{eq:first-order-equivalence}
\end{equation}
and both estimators have the same first-order limit:
\begin{equation}
  \sqrt T(\widehat\theta_f-\theta_{f,0})
  \xrightarrow{d}
  \mathcal N(0,\Xi),
  \qquad
  \Xi=\mathcal H^{-1}\mathcal V_0\mathcal H^{-1},
  \label{eq:theta-clt}
\end{equation}
where
\begin{equation}
  \mathcal V_0
  =4\E\!\left[
    \nabla q_t(\theta_{f,0})^\top
    \widetilde u_t\widetilde u_t^\top
    \nabla q_t(\theta_{f,0})
  \right],
  \qquad
  \widetilde u_t=z_t-q_t(\theta_{f,0}).
\end{equation}
A consistent sandwich estimator is obtained by substituting sample analogues evaluated at $\widehat\theta_f$.
\end{theorem}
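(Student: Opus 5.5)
The plan is to establish the infeasible limit theory first, then transfer it to the feasible estimator by showing that the plug-in error in the feasible score and Hessian is asymptotically negligible at the $\sqrt T$ scale.

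\emph{Step 1: the infeasible estimator.} Consistency of $\widetilde\theta_f$ follows from a uniform law of large numbers for $Q_T$ on the compact set $\Theta_f$. The ingredients are:
\begin{itemize}
\item stationarity and ergodicity (Assumption~\ref{ass:dep});
\item geometric decay of the filter $q_t(\theta)$, uniformly because $\sup_{\Theta_f}|\gamma|<1$, which also makes the effect of initializing the filter negligible;
\item a unique minimizer at $\theta_{f,0}$. This holds because $\E\ell_t(\theta)-\E\ell_t(\theta_{f,0})=\E\|q_t(\theta_{f,0})-q_t(\theta)\|_2^2$, since $\widetilde u_t$ is a martingale difference, and the rank condition of Proposition~\ref{prop:staged-id}(i) rules out other zeros.
\end{itemize}
Next, expand the score around $\theta_{f,0}$:
\[
\sqrt T(\widetilde\theta_f-\theta_{f,0})=-\mathcal H^{-1}T^{-1/2}\textstyle\sum_t\nabla\ell_t(\theta_{f,0})+o_p(1).
\]
Here $\nabla\ell_t(\theta_{f,0})=-2\nabla q_t(\theta_{f,0})^\top\widetilde u_t$ is a stationary ergodic martingale difference with $2+\epsilon$ moments. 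The martingale CLT therefore gives the limit $\mathcal N(0,\mathcal V_0)$. The Hessian envelope in Assumption~\ref{ass:dep} gives uniform convergence of the Hessian on a shrinking neighborhood of $\theta_{f,0}$.

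\emph{Step 2: feasible consistency.} Write $\widehat z_t=z_t+\zeta_t$. Inactivity of trimming and $\max|\delta_{i,t}|\le 1/2$ hold with probability tending to one, so \eqref{eq:log-iv-expansion} gives $\zeta_t=b^z_t+\mathring\zeta_t+r^z_t$. On this event:
\begin{itemize}
\item $b^z_t=O_p(m^{-1})$;
\item $\mathring\zeta_t=O_p(\varsigma_m)$;
\item $r^z_t=O_p(\varsigma_m^2)$;
\item the geometric filter propagates these errors with a summable bound.
\end{itemize}
Hence $\sup_{\Theta_f}|\widehat Q_T-Q_T|=o_p(1)$, and $\widehat\theta_f$ is consistent.

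\emph{Step 3: first-order equivalence.} The feasible Hessian differs from the infeasible one by $o_p(1)$, uniformly near $\theta_{f,0}$, by the same argument. The key claim is
\[
T^{-1/2}\textstyle\sum_t\{\nabla\widehat\ell_t(\theta_{f,0})-\nabla\ell_t(\theta_{f,0})\}=o_p(1).
\]
Differentiating $\|z_t+\zeta_t-q_t(\theta;\widehat z)\|^2$ shows that this difference is a linear functional of the $\zeta_s$, with weights built from $\nabla q_t(\theta_{f,0})$, $\widetilde u_t$ and the filter $\gamma^k$, plus terms quadratic in $\zeta$. Each piece is handled separately:
\begin{itemize}
\item The linear term in $\mathring\zeta$ has $\mathcal G$-measurable weights, since they depend only on volatility paths and daily innovations. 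Assumption~\ref{ass:orth} bounds its conditional variance by $O_p(\varsigma_m^2)$, so it is $o_p(1)$.
\item The bias term is at most $\sqrt T\,O_p(m^{-1})$, which is $o(1)$ when $T=o(m^2)$.
\item The quadratic terms, together with the remainder $r^z$, are at most $\sqrt T\,O_p(\varsigma_m^2)$. This is $o(1)$ when $T=o(m^2)$ without noise and when $T=o(m)$ under noise.
\end{itemize}
Combining the two score expansions yields \eqref{eq:first-order-equivalence}, and \eqref{eq:theta-clt} then follows from Slutsky's lemma. Consistency of the sandwich estimator uses the same uniform laws of large numbers together with the negligibility of $\zeta$.

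\emph{Main obstacle.} The hardest part is the linear $\mathring\zeta$ term in the feasible score. The weights must genuinely be $\mathcal G$-measurable, yet the filter $\widehat x_t$ itself contains lagged $\zeta_s$. Cross products such as $\zeta_s\,\nabla_\gamma q_t$ therefore have to be reorganized: for each $s$, sum the geometric weights over all $t\ge s$ to obtain $T^{-1/2}\sum_s g_s^\top\mathring\zeta_s$ with $T^{-1}\sum_s\|g_s\|^2=O_p(1)$, and keep only terms quadratic in $\zeta$ as remainder. Interchanging the summations and verifying this square-summability, via geometric decay and the $2+\delta$ moments, is where the care is needed.
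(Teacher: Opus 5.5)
Your proposal is correct and follows essentially the same route as the paper: consistency via a uniform law of large numbers, then a mean-value expansion in which the feasible score is split, via Lemma~\ref{lem:logiv-filter}, into bias, centered and quadratic parts. The key step is the same interchange of summations that yields the $\mathcal G$-measurable weights on $\mathring\zeta_s$ (the paper's $H_s=J_s-\sum_k G_{s+k,k}^\top J_{s+k}$), to which Assumption~\ref{ass:orth} is applied. Your weights also include $\widetilde u_t$, coming from the term $(\nabla\widehat q_t-\nabla q_t)^\top\widetilde u_t$; this is slightly more explicit than the paper's displayed expansion, and it is handled by the same conditional-variance bound because $\{\widetilde u_t\}$ is $\mathcal G$-measurable.
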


\begin{remark}
If $T\asymp m^2$ without noise, the feasible estimator acquires an asymptotic bias rather than an inflated variance. The bias can be removed by using a bias-corrected realized-volatility estimator, or avoided by taking $m\gg \sqrt T$.
\end{remark}

\section{Network Structure, Recovery, and Connectedness}

\subsection{Network representation and regularized separation}

Under \eqref{eq:stability}, $M$ determines how a perturbation to lagged log
integrated volatility propagates through the conditional log-volatility
system. This yields a model-implied connectedness object defined directly by
the estimated dynamics rather than by a reduced-form forecast-error variance
decomposition.

\begin{proposition}
\label{prop:vma}
If \eqref{eq:stability} holds and $\{\widetilde u_t\}$ is strictly stationary with $\E\|\widetilde u_t\|_2<\infty$, the log-volatility recursion admits a stationary causal representation. The response matrix
\begin{equation}
  \Phi=(I_N-(\gamma+\beta)I_N-M)^{-1}M
  \label{eq:response}
\end{equation}
and its finite-horizon analogues are well defined. Therefore model-implied to/from/net connectedness measures based on off-diagonal response mass exist.
\end{proposition}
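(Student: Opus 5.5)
We will write the log-volatility system as a first-order VAR in $z_t$ and then invoke standard causal-filter arguments. Combining $z_t=q_t+\widetilde u_t$ with the $q$-recursion \eqref{eq:q-recursion} gives $q_t=\widetilde\omega+\gamma q_{t-1}+(\beta I_N+M)z_{t-1}$. Since $q_{t-1}=z_{t-1}-\widetilde u_{t-1}$, this becomes
\[
  z_t=\widetilde\omega+Az_{t-1}+\widetilde u_t-\gamma\widetilde u_{t-1},
  \qquad A=(\gamma+\beta)I_N+M,
\]
which is a VARMA(1,1) driven by the martingale-difference sequence $\widetilde u_t$. Equivalently, the pair $(q_t,z_t)$, or $q_t$ alone written as $q_t=\widetilde\omega+Aq_{t-1}+(\beta I_N+M)\widetilde u_{t-1}$, is a VAR(1) in $q_t$ with companion matrix $A$. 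The plan is to work with the representation in $q_t$ and recover $z_t=q_t+\widetilde u_t$ afterwards.

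\emph{Step 1 (causal solution).} Under \eqref{eq:stability}, Gelfand's formula supplies constants $C<\infty$ and $\varrho(A)<\bar\rho<1$ such that $\|A^k\|_2\le C\bar\rho^k$ for all $k$. We define
\[
  q_t=(I_N-A)^{-1}\widetilde\omega+\sum_{k\ge0}A^k K\widetilde u_{t-1-k},
  \qquad K=\beta I_N+M.
\]
Because $\E\|\widetilde u_t\|_2<\infty$ and the $\widetilde u_t$ are stationary,
\[
  \E\sum_k\|A^kK\widetilde u_{t-1-k}\|_2\le C\|K\|_2\,\E\|\widetilde u_0\|_2\sum_k\bar\rho^k<\infty,
\]
so the series converges absolutely almost surely and in $L^1$. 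The resulting $q_t$ is a measurable function of $(\widetilde u_s)_{s<t}$, so it is causal, and it is strictly stationary as a fixed shift-invariant functional of a stationary sequence. Substitution shows that it solves the recursion. Uniqueness among $L^1$-bounded solutions follows because the difference of two solutions satisfies $D_t=A^kD_{t-k}$, and $\|A^k\|_2\to0$ while $\E\|D_{t-k}\|_2$ stays bounded. Setting $z_t=q_t+\widetilde u_t$ then gives the stationary causal representation for $z_t$.

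\emph{Step 2 (response matrix).} The matrix $I_N-(\gamma+\beta)I_N-M$ equals $I_N-A$, and $1$ is not an eigenvalue of $A$ since $\varrho(A)<1$. Hence $I_N-A$ is invertible, with Neumann series $\sum_{k\ge0}A^k$, and $\Phi=(I_N-A)^{-1}M$ is well defined. The finite-horizon analogues $\Phi^{(H)}=\sum_{k=0}^{H}A^kM$ are finite sums and therefore always exist. They converge to $\Phi$ as $H\to\infty$, with geometric error bounded by $C\|M\|_2\bar\rho^{H+1}/(1-\bar\rho)$, so cumulative responses to a unit perturbation in $z_{j,t-1}$ are summable. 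The to, from, net and total connectedness measures are finite sums and ratios of the off-diagonal entries $|\Phi_{ij}|$, $i\ne j$, so they are finite. Any normalization by row sums requires only that the relevant denominator be positive, which we state as a convention for the case where it is zero.

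The main subtlety is Step 1. The assumption gives only $\E\|\widetilde u_t\|_2<\infty$, with no second moments, so we cannot use an $L^2$ Wold argument. The plan therefore relies on almost-sure absolute convergence obtained from the geometric bound on $\|A^k\|_2$, which requires $\varrho(A)<1$ rather than $\|A\|_2<1$, and on preserving strict stationarity through the shift-invariant functional representation.
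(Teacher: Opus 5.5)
Your proof is correct and follows the paper's argument. Both reduce the system to the stable VARMA(1,1) form with $A=(\gamma+\beta)I_N+M$, use Gelfand's formula to get $\|A^k\|_2\le C\bar\rho^k$, obtain almost-sure and $L^1$ absolute convergence from $\E\|\widetilde u_t\|_2<\infty$, derive stationarity from the measurable-functional representation and uniqueness by backward substitution, and invert $I_N-A$ through the Neumann series. Routing the series through the $q_t$-recursion with input $K\widetilde u_{t-1}$ is only cosmetic: since $A^k(A-\gamma I_N)=A^kK$, your representation of $z_t=q_t+\widetilde u_t$ coincides with the paper's expansion $\sum_{k\ge0}A^k(\widetilde u_{t-k}-\gamma\widetilde u_{t-k-1})$.
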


To fix the normalization, define total and off-diagonal absolute response mass by
\[
  D_{\mathrm{all}}(\Phi)=\sum_{k=1}^N\sum_{\ell=1}^N|\Phi_{k\ell}|,
  \qquad
  D_{\mathrm{off}}(\Phi)=\sum_{k\ne\ell}|\Phi_{k\ell}|,
\]
and assume $D_{\mathrm{off}}(\Phi)>0$ (hence $D_{\mathrm{all}}(\Phi)>0$). Since $M_{ij}$ maps a perturbation in asset $j$ into asset $i$, row $i$ measures volatility received by $i$ and column $i$ measures volatility transmitted by $i$. We therefore define
\begin{align}
  C_i^{\mathrm{from}}
  &=\frac{\sum_{j\ne i}|\Phi_{ij}|}{D_{\mathrm{off}}(\Phi)}, &
  C_i^{\mathrm{to}}
  &=\frac{\sum_{j\ne i}|\Phi_{ji}|}{D_{\mathrm{off}}(\Phi)},\nonumber\\
  C_i^{\mathrm{net}}
  &=C_i^{\mathrm{to}}-C_i^{\mathrm{from}}, &
  C^{\mathrm{total}}
  &=\frac{D_{\mathrm{off}}(\Phi)}{D_{\mathrm{all}}(\Phi)}.
  \label{eq:connectedness-functionals}
\end{align}
Hence $\sum_iC_i^{\mathrm{from}}=\sum_iC_i^{\mathrm{to}}=1$,
$\sum_iC_i^{\mathrm{net}}=0$, and $C^{\mathrm{total}}$ is the off-diagonal
share of total absolute response mass. Finite-horizon measures replace $\Phi$
by the corresponding cumulative response matrix.

The impulse experiment is designed to isolate the cross-network channel: a
one-unit perturbation in asset $j$ enters through $Me_j$, with the
own-feedback term $\beta I_N$ held fixed, and then propagates through the
stable recursion. A perturbation to the full observed
state would instead enter through $(\beta I_N+M)e_j$. Thus
\eqref{eq:response} measures cross-network propagation on the log-volatility
scale; it is not a return-shock variance decomposition. A one-percent increase
in integrated volatility scales the level response by
$\log(1.01)\approx0.01$, but leaves the connectedness shares unchanged.
Level responses can be obtained by applying the delta method at a given
forecast origin. We report log-scale measures throughout.

Let $L_0=U_0\Sigma_0V_0^\top$ be the unique minimum-nuclear-norm completion of $R_0$, so that $\Omega_*(R_0)=\|L_0\|_*$. Define the effective off-diagonal tangent space
\begin{equation}
  \T=\{P_{\mathrm{off}}(U_0X^\top+YV_0^\top):X,Y\in\R^{N\times r}\},
\end{equation}
and let $\OmegaS=\{Z:\supp(Z)\subseteq\A\}$, where $\A=\supp(S_0)$. Define
\begin{equation}
  \xi(\T)=\max_{Z\in\T,\|Z\|_2\le1}\|Z\|_\infty,
  \qquad
  \mu(\OmegaS)=\max_{Z\in\OmegaS,\|Z\|_\infty\le1}\|Z\|_2.
\end{equation}

\begin{lemma}
\label{lem:projected-norm}
The projected nuclear norm $\Omega_*$ is a norm on the off-diagonal linear space. Its dual norm is
\begin{equation}
  \Omega_*^\circ(Z)
  =
  \|P_{\mathrm{off}}(Z)\|_2 .
  \label{eq:projected-dual}
\end{equation}
\end{lemma}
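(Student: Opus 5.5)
The plan is to treat $\Omega_*$ as a quotient (infimal-projection) norm and compute its dual by duality between the nuclear and spectral norms. Write $\mathcal O=\{R\in\R^{N\times N}:\diag(R)=0\}$ for the off-diagonal space and $\mathcal D$ for the diagonal matrices. Then $\{L:P_{\mathrm{off}}(L)=R\}=R+\mathcal D$, so
\[
  \Omega_*(R)=\inf_{D\in\mathcal D}\|R+D\|_*=\operatorname{dist}_{\|\cdot\|_*}(R,\mathcal D).
\]
This is the quotient norm of $\R^{N\times N}/\mathcal D$, transported to $\mathcal O$ through the isomorphism $R\mapsto R+\mathcal D$.

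\emph{Norm property.} The infimum is attained because $\mathcal D$ is finite-dimensional and $\|R+D\|_*\to\infty$ as $\|D\|\to\infty$.
\begin{itemize}[label={}]
  \item Homogeneity follows because $\mathcal D$ is a linear subspace.
  \item For the triangle inequality, take near-optimal completions $L_1,L_2$ of $R_1,R_2$. Then $L_1+L_2$ completes $R_1+R_2$, which gives $\Omega_*(R_1+R_2)\le\|L_1\|_*+\|L_2\|_*$.
  \item For definiteness, if $\Omega_*(R)=0$, attainment gives $D$ with $R+D=0$. Since $R$ and $D$ have disjoint supports, $R=0$.
\end{itemize}

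\emph{Dual norm.} The dual is taken with respect to $\langle A,B\rangle=\tr(A^\top B)$, and it suffices to evaluate $\Omega_*^\circ(Z)=\sup\{\langle Z,R\rangle:R\in\mathcal O,\ \Omega_*(R)\le1\}$. Only $P_{\mathrm{off}}(Z)$ matters, because $\langle Z,R\rangle=\langle P_{\mathrm{off}}(Z),R\rangle$ for $R\in\mathcal O$. So I will assume $Z\in\mathcal O$ and prove the two inequalities separately.

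For the upper bound, let $R\in\mathcal O$ and let $L=R+D$ be any completion. Since $Z$ is hollow, $\langle Z,D\rangle=0$, so $\langle Z,R\rangle=\langle Z,L\rangle\le\|Z\|_2\|L\|_*$ by trace duality. Taking the infimum over $D$ gives $\langle Z,R\rangle\le\|Z\|_2\,\Omega_*(R)$.

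The lower bound is the step that needs care. I will use the standard fact that the dual of a quotient norm is the restriction of the dual norm to the annihilator. The annihilator of $\mathcal D$ is exactly $\mathcal O$, and the dual of $\|\cdot\|_*$ is $\|\cdot\|_2$. To make this concrete, take a top singular pair $(u,v)$ of $Z$, so that $\langle Z,uv^\top\rangle=\|Z\|_2$ and $\|uv^\top\|_*=1$. Set $R=P_{\mathrm{off}}(uv^\top)$. Then $\Omega_*(R)\le\|uv^\top\|_*=1$, and
\[
  \langle Z,R\rangle=\langle Z,uv^\top\rangle=\|Z\|_2,
\]
again because $Z$ is hollow. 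Combining the two bounds gives $\Omega_*^\circ(Z)=\|P_{\mathrm{off}}(Z)\|_2$.

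The main obstacle is a convention issue rather than the computation. If the dual is read as a norm on all of $\R^{N\times N}$, the diagonal of $Z$ has to be discarded, and the reduction $\langle Z,R\rangle=\langle P_{\mathrm{off}}Z,R\rangle$ handles this. The remaining subtle point is attainment of the infimum, which is needed for definiteness and is settled by coercivity.
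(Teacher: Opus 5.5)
Your proposal is correct and follows essentially the paper's route: you treat $\Omega_*$ as a quotient norm and use attainment of the infimum for the norm axioms. For the dual you use the adjoint identity $\langle Z,P_{\mathrm{off}}(L)\rangle=\langle P_{\mathrm{off}}(Z),L\rangle$ together with nuclear--spectral duality. The only difference is that you split the paper's one-line supremum identity into an upper bound and an explicit maximizer $P_{\mathrm{off}}(uv^\top)$, which spells out why the two suprema coincide.
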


\begin{lemma}
\label{lem:trans}
If $2\xi(\T)\mu(\OmegaS)<1$, then $\T\cap\OmegaS=\{0\}$. Moreover $I-P_\Omega P_\T$ is invertible on $\OmegaS$ with a Neumann-series inverse.
\end{lemma}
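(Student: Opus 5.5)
The plan follows the standard Chandrasekaran et al.\ transversality argument. The key is to bound the composition $P_\OmegaS P_\T$ in spectral norm restricted to $\OmegaS$, and to show its norm is at most $2\xi(\T)\mu(\OmegaS)<1$.

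\emph{Trivial intersection.} Take $Z\in\T\cap\OmegaS$ with $Z\neq0$. Since $Z\in\OmegaS$, the definition of $\mu(\OmegaS)$ gives $\|Z\|_2\le\mu(\OmegaS)\|Z\|_\infty$. Since $Z\in\T$, the definition of $\xi(\T)$ gives $\|Z\|_\infty\le\xi(\T)\|Z\|_2$. Chaining the two yields
\[
\|Z\|_2\le\xi(\T)\mu(\OmegaS)\|Z\|_2<\tfrac12\|Z\|_2,
\]
which forces $Z=0$. This step uses only the two definitions and the homogeneity of the norms. Both norms are finite on the finite-dimensional off-diagonal space, so the maxima defining $\xi$ and $\mu$ exist.

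\emph{Contraction bound.} For $Z\in\OmegaS$ I would chain the following bounds:
\[
\|P_\OmegaS P_\T Z\|_2\le\mu(\OmegaS)\|P_\OmegaS P_\T Z\|_\infty\le\mu(\OmegaS)\|P_\T Z\|_\infty\le\mu(\OmegaS)\xi(\T)\|P_\T Z\|_2\le2\xi(\T)\mu(\OmegaS)\|Z\|_2 .
\]
The steps are justified as follows. The first uses $P_\OmegaS(\cdot)\in\OmegaS$ together with the definition of $\mu$. The second holds because $P_\OmegaS$ zeroes out entries off $\A$, so it is entrywise contractive in $\|\cdot\|_\infty$. The third uses $P_\T Z\in\T$ and the definition of $\xi$. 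The fourth is the projected spectral bound $\|P_\T Z\|_2\le2\|Z\|_2$ maintained in Assumption~\ref{ass:network}(ii). This bound holds for the standard tangent space, where $P_\T Z=P_UZ+ZP_V-P_UZP_V$, and it transfers to the off-diagonal version by assumption.

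Hence the operator $P_\OmegaS P_\T$, viewed as a map from $\OmegaS$ to $\OmegaS$ with the spectral norm, has norm $\kappa:=2\xi(\T)\mu(\OmegaS)<1$. The Neumann series $\sum_{k\ge0}(P_\OmegaS P_\T)^k$ therefore converges in operator norm on the finite-dimensional space $\OmegaS$. Its sum is a two-sided inverse of $I-P_\OmegaS P_\T$ restricted to $\OmegaS$, with norm at most $(1-\kappa)^{-1}$.

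This Neumann-series argument also yields the first claim independently. If $Z\in\T\cap\OmegaS$, then $P_\OmegaS P_\T Z=Z$, so $(I-P_\OmegaS P_\T)Z=0$, and invertibility gives $Z=0$.

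\emph{Main obstacle.} The delicate step is the use of the spectral bound $\|P_\T Z\|_2\le2\|Z\|_2$ for the \emph{effective off-diagonal} tangent space. The projection $P_\T$ onto $P_{\mathrm{off}}(\T_{\mathrm{std}})$ is not simply the standard formula, because $P_{\mathrm{off}}$ does not commute with $P_{\T_{\mathrm{std}}}$. I would therefore rely on the maintained condition in Assumption~\ref{ass:network}(ii) rather than derive it.

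A second point to check is that $P_\OmegaS$ is the orthogonal (Frobenius) projection, coinciding with entrywise masking. It is therefore both idempotent and $\|\cdot\|_\infty$-contractive. I would also confirm that $\OmegaS$ is invariant under $P_\OmegaS P_\T$, so that the Neumann series acts within $\OmegaS$.
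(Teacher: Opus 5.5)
Your proof is correct and essentially the paper's. It chains the same four facts: the definitions of $\xi(\T)$ and $\mu(\OmegaS)$, entrywise contractivity of $P_\Omega$, and the maintained bound $\|P_\T Z\|_2\le2\|Z\|_2$. The chain is only cyclically rotated, so you measure the contraction of $P_\Omega P_\T$ on $\OmegaS$ in the spectral norm, whereas the paper measures it in $\|\cdot\|_\infty$, the form it reuses for $\|B\|_\infty$ in the proof of Theorem~\ref{thm:separation}. The paper also obtains $\T\cap\OmegaS=\{0\}$ from that contraction, while your direct chain needs only $\xi(\T)\mu(\OmegaS)<1$. Both versions give the same constant $2\xi(\T)\mu(\OmegaS)$, and your spectral-norm version yields the same certificate bounds there once $\|D\|_2\le\mu(\OmegaS)\|D\|_\infty$ is used.
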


\begin{theorem}
\label{thm:separation}
Under Assumption~\ref{ass:network} with $r\ge1$, write $\xi=\xi(\T)$ and
$\mu=\mu(\OmegaS)$. If $\xi\mu<1/10$, then the pair $(R_0,S_0)$ is
\emph{tangent--support identifiable}: no other decomposition $M_0=R_1+S_1$ exists
with $R_1-R_0\in\T$ and $S_1-S_0$ supported on $\A$
(Lemma~\ref{lem:trans}), and, for
\begin{equation}
  \frac{\lambda_S}{\lambda_R}\in
  \left(
    \frac{\xi}{1-4\xi\mu},
    \frac{1-5\xi\mu}{3\mu}
  \right),
  \label{eq:tuning-interval}
\end{equation}
the pair $(R_0,S_0)$ is the unique optimum of
\begin{equation}
  \min_{R,S}\lambda_R\Omega_*(R)+\lambda_S\|S\|_1
  \quad\text{subject to}\quad R+S=M_0,\ \diag(R)=\diag(S)=0.
\end{equation}
\end{theorem}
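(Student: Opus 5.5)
The plan is to follow the dual-certificate argument of \citet{chandrasekaran2011}, adapted to the projected nuclear norm $\Omega_*$ via Lemma~\ref{lem:projected-norm}, whose dual norm is $\|P_{\mathrm{off}}(\cdot)\|_2$, and via the subgradient regularity in Assumption~\ref{ass:network}(ii). Tangent--support identifiability comes first and is immediate. If $M_0=R_1+S_1$ with $R_1-R_0\in\T$ and $S_1-S_0\in\OmegaS$, then $H:=R_1-R_0=-(S_1-S_0)\in\T\cap\OmegaS$. Since $\xi\mu<1/10<1/2$, Lemma~\ref{lem:trans} gives $H=0$.

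For uniqueness of the convex optimum, the first step is to reduce to the existence of a dual certificate. Every feasible point has the form $(R_0+H,S_0-H)$ with $H$ off-diagonal. The goal is to construct an off-diagonal $Q$ with the following properties:
\begin{itemize}
\item[(a)] $P_\T Q=\lambda_R E_0$ and $\|P_{\T^\perp}Q\|_2<\lambda_R$, so that with the compatible $F_H$ we have $Q$ in the relative interior of $\lambda_R\partial\Omega_*(R_0)$;
\item[(b)] $P_\OmegaS Q=\lambda_S\,\mathrm{sign}(S_0)$ and $\|P_{\OmegaS^\perp}Q\|_\infty<\lambda_S$.
\end{itemize}
Given such a $Q$, use the subgradients $\lambda_RE_0+\lambda_RF_H$ and $\lambda_S\mathrm{sign}(S_0)+\lambda_S\,\mathrm{sign}(P_{\OmegaS^\perp}(-H))$. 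The objective increase is then at least
\[
(\lambda_R-\|P_{\T^\perp}Q\|_2)\,\Omega_*(P_{\T^\perp}H)+(\lambda_S-\|P_{\OmegaS^\perp}Q\|_\infty)\,\|P_{\OmegaS^\perp}H\|_1 .
\]
Here the dual pairing $\langle P_{\T^\perp}Q,H\rangle\le\Omega_*^\circ(P_{\T^\perp}Q)\,\Omega_*(P_{\T^\perp}H)$ is used, with Lemma~\ref{lem:projected-norm} identifying $\Omega_*^\circ$. The increase is strictly positive unless $H\in\T\cap\OmegaS=\{0\}$.

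Next, construct $Q$ via the Neumann series of Lemma~\ref{lem:trans}. Seek $Q=Q_\T+Q_\OmegaS$ with $Q_\T\in\T$ and $Q_\OmegaS\in\OmegaS$ solving $P_\T Q=\lambda_RE_0$ and $P_\OmegaS Q=\lambda_S\mathrm{sign}(S_0)$. This is a linear system that is invertible because $\|P_\OmegaS P_\T\|$ is contractive in the appropriate mixed norms, with $\|P_\OmegaS P_\T Z\|_\infty\le\xi\|P_\T Z\|_2\le2\xi\|Z\|_2$ and $\|Z\|_2\le\mu\|Z\|_\infty$ on $\OmegaS$. The main obstacle is the norm bookkeeping. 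The target bounds are
\[
\|Q_\OmegaS\|_2\le \mu\|Q_\OmegaS\|_\infty\le \mu\,\frac{\lambda_S+\xi\lambda_R}{1-4\xi\mu}\ (\text{roughly}),
\]
together with $\|Q_\T\|_\infty\le\xi\|Q_\T\|_2$. The estimates on $Q$ will rest on $\|E_0\|_\infty\le\xi$, $\Omega^\circ_*(E_0)=1$, the projection bound $\|P_\T Z\|_2\le2\|Z\|_2$, and the constants $4\xi\mu$ and $5\xi\mu$ that arise from geometric sums of factors $2\xi\mu$.

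Finally, the two strict inequalities in (a) and (b) are verified. The bound $\|P_{\T^\perp}Q\|_2=\|P_{\T^\perp}Q_\OmegaS\|_2\le 2\mu\|Q_\OmegaS\|_\infty<\lambda_R$ reduces, after substitution, to $\lambda_S/\lambda_R<(1-5\xi\mu)/(3\mu)$. The bound $\|P_{\OmegaS^\perp}Q\|_\infty=\|P_{\OmegaS^\perp}Q_\T\|_\infty\le\xi\|Q_\T\|_2$, with $\|Q_\T\|_2\lesssim\lambda_R+2\mu\lambda_S$ up to Neumann factors, is less than $\lambda_S$ and reduces to $\lambda_S/\lambda_R>\xi/(1-4\xi\mu)$. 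Condition $\xi\mu<1/10$ makes the interval \eqref{eq:tuning-interval} nonempty. If the constants do not match exactly, I would adjust the Neumann bounds first, since they determine the $4$ and $5$, rather than change the certificate structure.
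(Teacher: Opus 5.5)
Your proposal follows the paper's proof. It uses the same certificate $Q=A+B$ with $A\in\T$ and $B\in\OmegaS$. It eliminates $A$ to get $(I-P_\Omega P_\T)B=\lambda_S\operatorname{sign}(S_0)-\lambda_R P_\Omega E_0$, which is inverted by a Neumann series with contraction factor $2\xi\mu$. It then makes the same two strict dual-feasibility checks and closes with the same subgradient argument via the compatible $F_H$ and $\T\cap\OmegaS=\{0\}$.

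The one defect is in the constants you flagged: your claimed reductions to the two endpoints do not follow from the bounds you wrote. Two corrections fix this.

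\begin{itemize}
\item The Neumann series gives $\|B\|_\infty\le(\lambda_S+\xi\lambda_R)/(1-2\xi\mu)$, not a $1-4\xi\mu$ denominator.
\item The spectral step must read $\|P_{\T^\perp}B\|_2\le\|B\|_2+\|P_\T B\|_2\le 3\mu\|B\|_\infty$. Only $\|P_\T Z\|_2\le 2\|Z\|_2$ is maintained for the projected tangent space, so your factor $2$ for $P_{\T^\perp}$ is not available.
\end{itemize}

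With these corrections both endpoints come out exactly.

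\begin{itemize}
\item The condition $3\mu(\lambda_S+\xi\lambda_R)<\lambda_R(1-2\xi\mu)$ is equivalent to $\lambda_S/\lambda_R<(1-5\xi\mu)/(3\mu)$.
\item Using $\|P_{\Omega^c}Q\|_\infty=\|P_{\Omega^c}A\|_\infty\le\xi\|A\|_2\le\xi(\lambda_R+2\mu\|B\|_\infty)$, the second check becomes $\xi(\lambda_R+2\mu\lambda_S)<\lambda_S(1-2\xi\mu)$. This is equivalent to $\lambda_S/\lambda_R>\xi/(1-4\xi\mu)$.
\end{itemize}

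So the $5\xi\mu$ and $4\xi\mu$ come from clearing the $1-2\xi\mu$ denominator, not from the geometric sums themselves.
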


The constants in Theorem~\ref{thm:separation} are not optimized.
Lemma~\ref{lem:trans} gives $\T\cap\OmegaS=\{0\}$ under the weaker condition
$2\xi\mu<1$. The algebraic identification statement is local to the true
tangent space and support, whereas the regularization-selected optimum is
unique over the feasible set of the stated convex program. Global algebraic
uniqueness over all rank--support pairs would require uniform transversality,
as in \citet{chandrasekaran2011}.

The interval \eqref{eq:tuning-interval} applies to the noiseless decomposition
of a known $M_0$. The stochastic error bound below instead uses the joint-cone
curvature and score conditions. To impose both results, the penalty ratio must
satisfy \eqref{eq:tuning-interval} and both penalties must dominate their
respective scores.

\subsection{Network recovery and connectedness inference}

For network perturbations decomposed as $\Delta_R+\Delta_S$, define the
penalty-weighted joint cone of perturbation pairs
\begin{equation}
  \begin{aligned}
  \mathcal C_\oplus
  =
  \big\{(\Delta_R,\Delta_S):\;&
  \lambda_R\,\Omega_*(P_{\T^\perp}\Delta_R)
  +\lambda_S\|P_{\Omega^c}\Delta_S\|_{1,a}
  \le
  3\big(\lambda_R\,\Omega_*(P_\T\Delta_R)+\lambda_S\|P_\Omega\Delta_S\|_{1,a}\big),\\
  &\sum_{i=1}^N\|\Delta_{S,i\cdot}\|_1^2\le 4d_N\|\Delta_S\|_F^2
  \big\}.
  \end{aligned}
  \label{eq:joint-cone}
\end{equation}
The first condition is the weighted cone implied by the basic inequality. The
second is stated row-wise because a global bound
$\|\Delta_S\|_1^2\le Cd_N\|\Delta_S\|_F^2$ fails when many rows are active.
Under \eqref{eq:row-degree}, each row of
$\Delta_S=\widehat S-S_0$ has at most $3d_N$ nonzero entries. Row-wise
Cauchy--Schwarz then gives the second condition.

\begin{lemma}
\label{lem:cone-equivalence}
Suppose the projected transversality margin satisfies
\begin{equation}
  2|\langle \Delta_R,\Delta_S\rangle|
  \le
  \kappa_{\mathrm{cone}}\{\|\Delta_R\|_F^2+\|\Delta_S\|_F^2\},
  \qquad (\Delta_R,\Delta_S)\in\mathcal C_\oplus,
  \label{eq:cone-transversality}
\end{equation}
for some $\kappa_{\mathrm{cone}}<1$. Then every
$(\Delta_R,\Delta_S)\in\mathcal C_\oplus$ obeys
\begin{equation}
  \|\Delta_R\|_F^2+\|\Delta_S\|_F^2
  \le
  \frac{1}{1-\kappa_{\mathrm{cone}}}
  \|\Delta_R+\Delta_S\|_F^2.
  \label{eq:cone-equivalence}
\end{equation}
\end{lemma}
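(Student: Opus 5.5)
The argument is a direct algebraic expansion of the Frobenius norm of the sum, combined with the transversality margin \eqref{eq:cone-transversality}; no structure of the cone $\mathcal C_\oplus$ beyond that margin is needed. Fix $(\Delta_R,\Delta_S)\in\mathcal C_\oplus$ and write $a=\|\Delta_R\|_F^2+\|\Delta_S\|_F^2$.

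First, expand the squared Frobenius norm using the trace inner product $\langle A,B\rangle=\tr(A^\top B)$:
\[
  \|\Delta_R+\Delta_S\|_F^2
  =\|\Delta_R\|_F^2+\|\Delta_S\|_F^2+2\langle\Delta_R,\Delta_S\rangle
  = a + 2\langle\Delta_R,\Delta_S\rangle .
\]

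Second, bound the cross term from below. Since $2\langle\Delta_R,\Delta_S\rangle\ge -2|\langle\Delta_R,\Delta_S\rangle|$, and the pair lies in $\mathcal C_\oplus$, the hypothesis \eqref{eq:cone-transversality} gives
\[
  2\langle\Delta_R,\Delta_S\rangle \ge -\kappa_{\mathrm{cone}}\,a .
\]
Substituting into the expansion yields
\[
  \|\Delta_R+\Delta_S\|_F^2\ge(1-\kappa_{\mathrm{cone}})\,a .
\]

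Third, because $\kappa_{\mathrm{cone}}<1$, the factor $1-\kappa_{\mathrm{cone}}$ is strictly positive. Dividing by it gives \eqref{eq:cone-equivalence}. The degenerate case $a=0$ holds trivially.

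There is essentially no obstacle: the lemma is a reverse triangle inequality under an angle condition. All the substance sits in verifying \eqref{eq:cone-transversality} itself, which the statement takes as a hypothesis. Under Assumption~\ref{ass:network}(iii), $\kappa_{\mathrm{cone}}<3/4$ then gives the constant $1/(1-\kappa_{\mathrm{cone}})<4$, which is what transfers the joint-cone curvature \eqref{eq:rsc} into separate control of $\|\Delta_R\|_F$ and $\|\Delta_S\|_F$ in Theorem~\ref{thm:error}.
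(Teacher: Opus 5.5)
Your proof is correct and is the same as the paper's. Both expand $\|\Delta_R+\Delta_S\|_F^2=\|\Delta_R\|_F^2+\|\Delta_S\|_F^2+2\langle\Delta_R,\Delta_S\rangle$, bound the cross term below by $-\kappa_{\mathrm{cone}}\{\|\Delta_R\|_F^2+\|\Delta_S\|_F^2\}$ using the transversality margin, and divide by $1-\kappa_{\mathrm{cone}}>0$.
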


\begin{assumption}
\label{ass:profile}
Let $\mathcal L_T(R,S;\widehat\theta_f)$ denote the residualized quadratic loss in \eqref{eq:penalized}, where $\widehat\theta_f$ collects the preliminary quantities used to construct $\widehat w_t^{\,\mathrm{res}}$ and $\widehat x_t$. Their replacement of the true dynamic parameters contributes asymptotically negligible score terms:
\[
  \|\nabla_M\mathcal L_T(R_0,S_0;\widehat\theta_f)
    -\nabla_M\mathcal L_T(R_0,S_0;\theta_{f,0})\|_{\max}
  =o_p(\lambda_S),
\]
and
\[
  \Omega_*^\circ\!\left\{\nabla_M\mathcal L_T(R_0,S_0;\widehat\theta_f)
    -\nabla_M\mathcal L_T(R_0,S_0;\theta_{f,0})\right\}
  =o_p(\lambda_R).
\]
The same first-stage replacement changes the restricted curvature by $o_p(1)$ on $\mathcal C_\oplus$. This condition is satisfied by a Neyman-orthogonal residualized score or by a first-stage estimator whose error is of smaller order than the network penalties.

For fixed support and rank, the condition follows from an orthogonal-score
expansion and a fixed-dimensional first-stage rate. With growing support, it is
maintained as a local condition on the residualized network score.
\end{assumption}

\begin{theorem}
\label{thm:error}
Under Assumptions~\ref{ass:network}, \ref{ass:rates},
\ref{ass:hd-concentration}, and \ref{ass:profile}, bounded adaptive weights,
and penalty levels
\begin{equation}
  \lambda_S\asymp a_{S,T},
  \qquad
  \lambda_R\asymp a_{R,T},
\end{equation}
with multiplicative constants sufficiently large relative to the score constants in Assumption~\ref{ass:hd-concentration}, any feasible estimator in \eqref{eq:penalized} that satisfies the basic objective inequality relative to $(R_0,S_0)$ obeys
\begin{equation}
  \|\widehat R-R_0\|_F^2+\|\widehat S-S_0\|_F^2
  \le
  \frac{C}{\underline\phi^2(1-\kappa_{\mathrm{cone}})^2}
  \{\lambda_R^2r+\lambda_S^2s_N\}
\end{equation}
with probability tending to one, where $s_N=|\A|$ and $C$ depends only on
the adaptive-weight bounds, $C_{\mathrm{tan}}$, and fixed numerical constants.
When $r=0$, the projected component and the terms involving $\lambda_R$ are
omitted.
\end{theorem}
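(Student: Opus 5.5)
The argument follows the standard decomposable-regularizer route of \citet{negahban2012}, adapted to the pair $(R,S)$ and the joint cone $\mathcal C_\oplus$. Write $\Delta_R=\widehat R-R_0$, $\Delta_S=\widehat S-S_0$, $\Delta=\Delta_R+\Delta_S$, and let $\mathcal L_T(R,S;\widehat\theta_f)$ be the quadratic loss in \eqref{eq:penalized}. Expanding the quadratic exactly, the basic objective inequality becomes
\[
T^{-1}\textstyle\sum_t\|\Delta\widehat x_t\|_2^2
\le 2\langle \widehat G,\Delta\rangle+\lambda_R\{\Omega_*(R_0)-\Omega_*(\widehat R)\}+\lambda_S\{\|S_0\|_{1,a}-\|\widehat S\|_{1,a}\}.
\]
Here $\widehat G$ is minus one half of the gradient at $(R_0,S_0)$ under the preliminary estimates. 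By Assumption~\ref{ass:profile}, $\widehat G=G_T^{\mathrm f}+o_p$ terms in both dual norms, and the curvature on the left is that of $\widehat x_t(\theta_{f,0})$ up to $o_p(1)$ on the cone. Since $\Delta_R$ and $\Delta_S$ are off-diagonal, I split the score term as $\langle\widehat G,\Delta_R\rangle+\langle\widehat G,\Delta_S\rangle$. I then bound it by $\Omega_*^\circ(\widehat G)\Omega_*(\Delta_R)+\|\widehat G\|_{\infty,a}^*\|\Delta_S\|_{1,a}$, using Lemma~\ref{lem:projected-norm} for the dual of $\Omega_*$. Choosing the constants so that $\lambda_R\ge 4\Omega_*^\circ(\widehat G)$ and $\lambda_S\ge4\|\widehat G\|^*_{\infty,a}$ with probability tending to one is possible by Assumption~\ref{ass:hd-concentration}, since $\lambda_S\asymp a_{S,T}$ and $\lambda_R\asymp a_{R,T}$. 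This makes each score term at most $\tfrac12\lambda\times$ the corresponding norm.

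Next I lower-bound the penalty differences by decomposability. For the sparse part, the weighted $\ell_1$ norm gives
\[
\|S_0+\Delta_S\|_{1,a}\ge\|S_0\|_{1,a}-\|P_\Omega\Delta_S\|_{1,a}+\|P_{\Omega^c}\Delta_S\|_{1,a}.
\]
For the projected part, Assumption~\ref{ass:network}(ii) gives a subgradient $E_0+F_{\Delta_R}\in\partial\Omega_*(R_0)$ with $\langle F_{\Delta_R},\Delta_R\rangle=\Omega_*(P_{\T^\perp}\Delta_R)$. Combined with $\Omega_*^\circ(E_0)=1$, this yields
\[
\Omega_*(\widehat R)\ge\Omega_*(R_0)-\Omega_*(P_\T\Delta_R)+\Omega_*(P_{\T^\perp}\Delta_R).
\]
Substituting both bounds and the triangle inequality $\Omega_*(\Delta_R)\le\Omega_*(P_\T\Delta_R)+\Omega_*(P_{\T^\perp}\Delta_R)$ gives two things. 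First, the left side is nonnegative, so the first (weighted cone) condition of $\mathcal C_\oplus$ holds. Second,
\[
T^{-1}\textstyle\sum_t\|\Delta\widehat x_t\|^2\le \tfrac32\{\lambda_R\Omega_*(P_\T\Delta_R)+\lambda_S\|P_\Omega\Delta_S\|_{1,a}\}.
\]
The row-wise condition of $\mathcal C_\oplus$ follows from the row-degree constraint \eqref{eq:row-degree}: each row of $\Delta_S$ has at most $\bar d_N+d_N=3d_N$ nonzeros, and row-wise Cauchy--Schwarz finishes it. So $(\Delta_R,\Delta_S)\in\mathcal C_\oplus$ on the high-probability event.

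To close the argument, I bound the right side in Frobenius norm. For the projected part, $\Omega_*(P_\T\Delta_R)\le C_{\mathrm{tan}}\sqrt{2r}\|P_\T\Delta_R\|_F\le C_{\mathrm{tan}}\sqrt{2r}\|\Delta_R\|_F$, using that $P_\T$ is a Frobenius contraction; I would assume it is an orthogonal projection. For the sparse part, $\|P_\Omega\Delta_S\|_{1,a}\le C_a\sqrt{s_N}\|\Delta_S\|_F$. The restricted curvature \eqref{eq:rsc} lower-bounds the left side by $\underline\phi\|\Delta\|_F^2$. Lemma~\ref{lem:cone-equivalence}, with $\kappa_{\mathrm{cone}}<3/4$ from Assumption~\ref{ass:network}(iii), gives $\|\Delta_R\|_F^2+\|\Delta_S\|_F^2\le(1-\kappa_{\mathrm{cone}})^{-1}\|\Delta\|_F^2$. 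Writing $D^2=\|\Delta_R\|_F^2+\|\Delta_S\|_F^2$ and using $\lambda_R\sqrt r\|\Delta_R\|_F+\lambda_S\sqrt{s_N}\|\Delta_S\|_F\le D(\lambda_R^2r+\lambda_S^2s_N)^{1/2}$, I get
\[
\underline\phi(1-\kappa_{\mathrm{cone}})D^2\le C D(\lambda_R^2r+\lambda_S^2s_N)^{1/2}.
\]
Dividing by $D$ and squaring gives the stated bound. When $r=0$, all $R$ terms are dropped.

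The main obstacle is the projected decomposability step. The nuclear-norm inequality is not automatic for $\Omega_*$, because of the infimum over diagonal completions. I would rely entirely on the maintained compatible-subgradient property in Assumption~\ref{ass:network}(ii), checking that $P_\T$ and $P_{\T^\perp}$ act on off-diagonal matrices consistently with the pairing $\langle\cdot,\cdot\rangle$. A secondary concern is absorbing the $o_p$ first-stage perturbations of the score and curvature from Assumption~\ref{ass:profile} into the constants. This only requires those perturbations to be $o_p(\lambda)$ and $o_p(\underline\phi)$ respectively.
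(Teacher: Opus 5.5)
Your proposal is correct and follows essentially the same route as the paper's proof. The steps match: the basic inequality, score events at half the penalty levels, decomposability of $\Omega_*$ and the weighted $\ell_1$ norm to obtain the weighted cone, the row-degree constraint for the row-wise condition, and then restricted curvature, the tangent and Cauchy--Schwarz bounds, and Lemma~\ref{lem:cone-equivalence} to reach the same $\underline\phi^{-2}(1-\kappa_{\mathrm{cone}})^{-2}$ bound; the differences are cosmetic. You expand the quadratic loss exactly from the start, whereas the paper uses convexity for the cone step. You also derive the projected decomposability inequality explicitly from the compatible subgradient $E_0+F_{\Delta_R}$, whereas the paper simply cites decomposability.
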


Because the first-step estimator targets rate optimality rather than
selection, support recovery is established for a second-step defactored
weighted LASSO. Let $\widehat R$ be the first-step projected-factor estimate. Let $r_f$ denote
the number of factors in the predictor design, $P_\Lambda$ the population
projector onto their loading space, and $\widehat P_{r_f}$ its POET estimate.
The factor dimension $r_f$ is distinct from the completion rank $r$. Define
\begin{equation}
  \widetilde x_t=(I-\widehat P_{r_f})\,\widehat x_t,
  \qquad
  \widetilde w_t=\widehat w_t^{\,\mathrm{res}}-\widehat R\,\widehat x_t,
  \label{eq:defactored-data}
\end{equation}
where $\widehat w_t^{\,\mathrm{res}}$ is defined in \eqref{eq:residualized-response}. For response row $i$, write $\mathcal J_i=\{1,\ldots,N\}\setminus\{i\}$, $\A_i=\{j\in\mathcal J_i:S_{0,ij}\ne0\}$, and $\A_i^c=\mathcal J_i\setminus\A_i$; the global edge support is $\A=\{(i,j):j\in\A_i\}$. The second-step estimator is the unconstrained row-wise weighted LASSO
\begin{equation}
  \widehat S^{(2)}
  =\arg\min_{S:\,\diag(S)=0}
  \underbrace{\frac1{2T}\sum_{t=1}^T
  \|\widetilde w_t-S\widetilde x_t\|_2^2}_{\mathcal L_T^{(2)}(S)}
  +\lambda_S\sum_{i\ne j}a_{ij}|S_{ij}| .
  \label{eq:two-step}
\end{equation}
Its score at $S_0$ decomposes into an estimated-defactored score plus the projected-factor estimation contamination
\begin{equation}
  \Delta_T
  =\frac1T\sum_{t=1}^T\bigl\{(R_0-\widehat R)\,\widehat x_t+S_0\widehat P_{r_f}\,\widehat x_t\bigr\}\widetilde x_t^\top .
  \label{eq:contamination}
\end{equation}

\begin{assumption}
\label{ass:defactor}
Let
\[
  \widehat{\widetilde\Gamma}=T^{-1}\sum_{t=1}^T\widetilde x_t\widetilde x_t^\top,
  \qquad
  \widetilde\Gamma=(I-P_\Lambda)\Gamma_x(I-P_\Lambda)
  =(I-P_\Lambda)\Sigma_{e,x}(I-P_\Lambda),
\]
and define the estimated-defactored stochastic score by
\[
  G_T^{\mathrm{df}}
  =\nabla_S\mathcal L_T^{(2)}(S_0)+\Delta_T,
\]
where $\mathcal L_T^{(2)}$ is the second-step loss. For
\[
  a_{\mathrm{df},T}
  =\sqrt{\frac{(\log T)^2\log N}{T}}+N^{-1/2}+m^{-1/2},
\]
assume
\begin{align*}
  \|\widehat{\widetilde\Gamma}-\widetilde\Gamma\|_{\max}
  &=O_p(a_{\mathrm{df},T}),\\
  \|G_T^{\mathrm{df}}\|_{\max}
  &=O_p\!\left(\sqrt{\frac{(\log T)^2\log N}{T}}
    +\sqrt{\frac{\log N}{m}}\right),\\
  \|\Delta_T\|_{\max}&=o_p(\lambda_S).
\end{align*}
These conditions concern the estimated projector constructed from
$\{\widehat x_t\}$. Since $\widehat P_{r_f}$ uses the full sample, we do not
deduce mixing of the transformed design from mixing of $x_t$. Primitive POET
conditions also require pervasiveness, an eigengap, restrictions on
$\Sigma_{e,x}$, and suitable tails and dependence \citep{fan2013}. Sample
splitting or cross-fitting provides another route. The projector $P_\Lambda$
acts on the predictor factor space and is unrelated to the singular spaces of
$L_0$. The leakage bound is imposed separately because the first-step error in
$\widehat R$ need not be smaller than $\lambda_S$.
\end{assumption}

\begin{theorem}
\label{thm:oracle}
Suppose the conditions of Theorem~\ref{thm:error} and
Assumption~\ref{ass:defactor} hold, and the multiplicative constant in
$\lambda_S$ is sufficiently large relative to the sparse-score constant and
the irrepresentability margin $\varepsilon$. Let
$\widetilde\Gamma=(I-P_\Lambda)\Gamma_x(I-P_\Lambda)$ and
$\widehat{\widetilde\Gamma}=T^{-1}\sum_t\widetilde x_t\widetilde x_t^\top$.
For $B\subseteq\mathcal J_i$, define
$D_{i,B}=\diag(a_{ij}:j\in B)$. Expressions involving an empty $\A_i$ are
vacuous. Suppose the nonempty active blocks are nonsingular and, with
probability tending to one,
\[
  \max_i\|\widetilde\Gamma_{\A_i\A_i}^{-1}\|_{\op,\infty}
  \le\phi_{\min}^{-1},
  \qquad
  \max_i\|\widehat{\widetilde\Gamma}_{\A_i\A_i}^{-1}\|_{\op,\infty}
  \le2\phi_{\min}^{-1}
\]
for some $\phi_{\min}>0$. Also assume, uniformly over response rows $i$, the weighted population and empirical irrepresentability conditions
\begin{equation}
  \max_i\left\|D_{i,\A_i^c}^{-1}
  \widetilde\Gamma_{\A_i^c\A_i}
  \widetilde\Gamma_{\A_i\A_i}^{-1}D_{i,\A_i}\right\|_{\op,\infty}
  \le 1-\varepsilon,
  \qquad
  \max_i\left\|D_{i,\A_i^c}^{-1}
  \widehat{\widetilde\Gamma}_{\A_i^c\A_i}
  \widehat{\widetilde\Gamma}_{\A_i\A_i}^{-1}D_{i,\A_i}\right\|_{\op,\infty}
  \le 1-\varepsilon/2
  \label{eq:irrep-main}
\end{equation}
with probability tending to one. Lemma~\ref{lem:irrep} gives primitive
diagonal-dominance and cross-leakage conditions sufficient for the population
inequality and shows how the estimated-defactored Gram rate transfers its
margin to the empirical inequality. Finally, assume the beta-min condition
\begin{equation}
  \min_{i:\A_i\ne\varnothing}\min_{j\in\A_i}|S_{0,ij}|
  \gg \lambda_S/\phi_{\min}
\end{equation}
holds. Then the second-step estimator recovers the support and signs:
\begin{equation}
  \p\{\supp(\widehat S^{(2)})=\A,\ \operatorname{sign}(\widehat S^{(2)}_\A)=\operatorname{sign}(S_{0,\A})\}\to1.
\end{equation}
\end{theorem}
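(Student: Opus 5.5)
The argument is a row-wise primal--dual witness (PDW) construction for the weighted LASSO \eqref{eq:two-step}, done uniformly over response rows $i$. Problem \eqref{eq:two-step} separates across rows because the loss is a sum of squared row residuals and the penalty is separable. So it suffices to analyze, for each $i$, the weighted LASSO of $\widetilde w_{i,t}$ on $\widetilde x_{\mathcal J_i,t}$, and then take a union over $i\le N$.

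For each row with $\A_i\ne\varnothing$, I will build the oracle solution $\check S_{i\A_i}$. It minimizes the row loss over vectors supported on $\A_i$, with the subgradient fixed at $z_{\A_i}=\operatorname{sign}(S_{0,i\A_i})$. For $\A_i=\varnothing$ I set $\check S_{i\cdot}=0$.

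\textbf{Score decomposition.} Write $g_i$ for row $i$ of the score at $S_0$; by the definition in Assumption~\ref{ass:defactor}, $\nabla_S\mathcal L_T^{(2)}(S_0)=G_T^{\mathrm{df}}-\Delta_T$. The stationarity conditions then give
\[
\check S_{i\A_i}-S_{0,i\A_i}=-\widehat{\widetilde\Gamma}_{\A_i\A_i}^{-1}\bigl(g_{i\A_i}+\lambda_S D_{i,\A_i}z_{\A_i}\bigr).
\]
Its $\ell_\infty$ norm is at most $2\phi_{\min}^{-1}(\|G_T^{\mathrm{df}}\|_{\max}+\|\Delta_T\|_{\max}+C_a\lambda_S)$, which is $O_p(\lambda_S/\phi_{\min})$. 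This uses the Assumption~\ref{ass:defactor} rates, $\|\Delta_T\|_{\max}=o_p(\lambda_S)$, and $\lambda_S\asymp a_{S,T}$ with a large constant.

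\textbf{Sign consistency on the support.} Beta-min, $\min|S_{0,ij}|\gg\lambda_S/\phi_{\min}$, implies that $\check S_{i\A_i}$ has no zero entries and carries the correct signs, so the choice $z_{\A_i}=\operatorname{sign}(S_{0,i\A_i})$ is self-consistent.

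\textbf{Strict dual feasibility off the support.} On $\A_i^c$ I define the dual variable by
\[
\lambda_S D_{i,\A_i^c}z_{\A_i^c}=-g_{i\A_i^c}-\widehat{\widetilde\Gamma}_{\A_i^c\A_i}\bigl(\check S_{i\A_i}-S_{0,i\A_i}\bigr).
\]
Substituting the oracle error gives
\[
z_{\A_i^c}=D_{i,\A_i^c}^{-1}\widehat{\widetilde\Gamma}_{\A_i^c\A_i}\widehat{\widetilde\Gamma}_{\A_i\A_i}^{-1}D_{i,\A_i}z_{\A_i}+\lambda_S^{-1}D_{i,\A_i^c}^{-1}\bigl(\widehat{\widetilde\Gamma}_{\A_i^c\A_i}\widehat{\widetilde\Gamma}_{\A_i\A_i}^{-1}g_{i\A_i}-g_{i\A_i^c}\bigr).
\]
I will bound the two pieces separately.
\begin{itemize}
\item The first term is at most $1-\varepsilon/2$ in sup norm, by the empirical irrepresentability bound in \eqref{eq:irrep-main}.
\item In the second term, $\widehat{\widetilde\Gamma}_{\A_i^c\A_i}\widehat{\widetilde\Gamma}_{\A_i\A_i}^{-1}=D_{i,\A_i^c}\,(\text{irrep matrix})\,D_{i,\A_i}^{-1}$. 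Its $\|\cdot\|_{\op,\infty}$ norm is therefore bounded by $(C_a/c_a)(1-\varepsilon/2)$.
\item Hence the second term is at most $c_a^{-1}\lambda_S^{-1}\bigl(1+C_a/c_a\bigr)\bigl(\|G_T^{\mathrm{df}}\|_{\max}+\|\Delta_T\|_{\max}\bigr)$. Choosing the constant in $\lambda_S$ large relative to the score constant and $\varepsilon$ makes this smaller than $\varepsilon/4$ with probability tending to one.
\end{itemize}
This yields $\|z_{\A_i^c}\|_\infty<1$ uniformly in $i$, which is strict dual feasibility.

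\textbf{Uniqueness.} Strict dual feasibility together with invertibility of $\widehat{\widetilde\Gamma}_{\A_i\A_i}$ gives, by the standard argument, that every solution vanishes on $\A_i^c$, so the LASSO solution is unique and equals $\check S_{i\cdot}$. Rows with empty $\A_i$ are covered by the same bound with the first term absent.

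\textbf{Main obstacle.} The delicate step is controlling the contamination $\Delta_T$ inside the dual certificate uniformly over rows. It contains the first-step error $(R_0-\widehat R)\widehat x_t$, whose Frobenius rate from Theorem~\ref{thm:error} need not be $o(\lambda_S)$. The plan therefore relies only on the imposed max-norm leakage bound $\|\Delta_T\|_{\max}=o_p(\lambda_S)$. I will verify that after multiplication by the weighted irrepresentability matrix it stays in sup norm rather than picking up a factor $d_N$; the $\|\cdot\|_{\op,\infty}$ control in \eqref{eq:irrep-main} and the $\phi_{\min}$ bound are exactly what make this hold. All events are taken as maxima over $i$, so a single probability-tending-to-one event covers every row simultaneously.
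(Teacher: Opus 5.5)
Your proposal is correct and follows essentially the same route as the paper's proof. Both use a row-wise primal--dual witness. The restricted solution on $\A_i$ is controlled by the empirical inverse-Gram bound and the score rates of Assumption~\ref{ass:defactor}, with $\|\Delta_T\|_{\max}=o_p(\lambda_S)$. Beta-min then delivers the signs. Finally, the empirical weighted irrepresentability bound $1-\varepsilon/2$ together with a large penalty constant gives strict dual feasibility $\|z_{\A_i^c}\|_\infty\le 1-\varepsilon/4$ on a single event uniform in $i$.
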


Theorem~\ref{thm:oracle} concerns support selection in the second-step
estimator, not the final inference refit, to which we now turn. We consider a
differentiable connectedness functional $g$ constructed from the response
matrix in \eqref{eq:response}; once the refit coordinates $\eta$ are
introduced below, $g$ is regarded as a function $g(\eta)$ of those
coordinates. The refit is parameterized by the identified component $R$. For a fixed
completion rank $r$, let
\[
  \mathcal R_r=\{P_{\mathrm{off}}(L):\rank(L)=r\}.
\]
Near $R_0$, use a smooth one-to-one chart $R(\vartheta)$ for the locally
regular part of $\mathcal R_r$. One explicit construction starts from a
nonsingular $r\times r$ submatrix of $L_0$ and, after row and column
permutations, writes
\begin{equation}
  L(\vartheta)=
  \begin{pmatrix}
    D & F\\
    G & GD^{-1}F
  \end{pmatrix},
  \qquad
  \vartheta=\{\operatorname{vec}(D)^\top,\operatorname{vec}(F)^\top,
  \operatorname{vec}(G)^\top\}^\top .
  \label{eq:rank-chart}
\end{equation}
This is a unique local chart for rank-$r$ matrices. Assume that
$P_{\mathrm{off}}$ has constant rank on the chart. After removing any locally
redundant coordinates, $R(\vartheta)=P_{\mathrm{off}}\{L(\vartheta)\}$ is
one-to-one and has a full-column-rank Jacobian. When $r=0$, set
$R(\vartheta)\equiv0$.

For a candidate sparse support $\mathcal B$, write
\[
  S_{\mathcal B}(\alpha)
  =\sum_{(i,j)\in\mathcal B}\alpha_{ij}e_ie_j^\top,
  \qquad
  \eta_{\mathcal B}
  =(\widetilde\omega^\top,\gamma,\beta,\vartheta^\top,\alpha^\top)^\top ,
\]
and let $\widehat q_t(\eta_{\mathcal B})$ be the recursion
\eqref{eq:q-recursion} driven by $\widehat z_{t-1}$ with
$M(\eta_{\mathcal B})=R(\vartheta)+S_{\mathcal B}(\alpha)$. The unpenalized
local-refit criterion and stability constraint are
\begin{equation}
  \widehat Q_{T,\mathcal B}(\eta_{\mathcal B})
  =
  \frac1T\sum_{t=1}^T
  \|\widehat z_t-\widehat q_t(\eta_{\mathcal B})\|_2^2,
  \qquad
  \kappa_{\mathcal B}(\eta_{\mathcal B})
  =
  \varrho\{(\gamma+\beta)I_N+R(\vartheta)+S_{\mathcal B}(\alpha)\}
  -\bar\varrho
  \le0,
  \quad \bar\varrho<1 .
  \label{eq:oracle-refit}
\end{equation}
The cap $\bar\varrho$ is fixed and chosen above the true spectral radius. Conditional on a fixed $\gamma$ and the corresponding filtered initialization, the same matrix target can be written using the filtered design:
\begin{equation}
  \widehat Q^{\,\mathrm{fil}}_{T,\mathcal B}(c_q,\beta,R,\alpha;\gamma)
  =
  \frac1T\sum_{t=1}^T
  \|\widehat z_t-c_q-\{\beta I_N+R+S_{\mathcal B}(\alpha)\}
  \widehat x_t(\gamma)\|_2^2 ,
  \qquad R\in\mathcal R_r,
  \label{eq:oracle-refit-filtered}
\end{equation}
with the same stability cap. Equations
\eqref{eq:oracle-refit}--\eqref{eq:oracle-refit-filtered} define the numerical
refit. A factorized algorithm may search for a constrained stationary point.
An ordinary SVD of the unrestricted coefficient matrix does not solve this
problem unless the design makes that projection exact. At an interior local
solution, the KKT conditions are
\begin{equation}
  \nabla_{(\widetilde\omega,\gamma,\beta,\alpha)}
  \widehat Q_{T,\mathcal B}=0,
  \qquad
  \left\{\frac{\partial\operatorname{vec}R(\vartheta)}
  {\partial\vartheta^\top}\right\}^{\!\top}
  \operatorname{vec}(\nabla_R\widehat Q_{T,\mathcal B})=0,
  \qquad
  \kappa_{\mathcal B}(\eta_{\mathcal B})<0.
  \label{eq:oracle-kkt}
\end{equation}
If the stability cap binds and the stable-transition matrix has a unique,
real, simple eigenvalue of maximal modulus in a neighborhood of the solution,
the corresponding KKT conditions are
\begin{equation}
\begin{split}
  &\nabla_{(\widetilde\omega,\gamma,\beta,\alpha)}
  \widehat Q_{T,\mathcal B}
  +\tau_{\mathcal B}
  \nabla_{(\widetilde\omega,\gamma,\beta,\alpha)}
  \kappa_{\mathcal B}=0,\\
  &\left\{\frac{\partial\operatorname{vec}R(\vartheta)}
  {\partial\vartheta^\top}\right\}^{\!\top}
  \operatorname{vec}\!\left(
  \nabla_R\widehat Q_{T,\mathcal B}
  +\tau_{\mathcal B}\nabla_R\kappa_{\mathcal B}\right)=0,\\
  &\tau_{\mathcal B}\ge0,\qquad
  \kappa_{\mathcal B}(\eta_{\mathcal B})=0 .
\end{split}
\label{eq:oracle-kkt-active}
\end{equation}
Otherwise, \eqref{eq:oracle-kkt-active} is interpreted as the corresponding
normal-cone KKT inclusion for the stability-feasible set.

For each support $\mathcal B$, let $\bar\eta_{\mathcal B}$ be the measurable
initialization from the regularized fit. A fixed measurable rule
$\mathcal S_T(\mathcal B,\bar\eta_{\mathcal B})$ selects a constrained-KKT
point in its local basin. This point satisfies \eqref{eq:oracle-kkt} when the
cap is inactive and \eqref{eq:oracle-kkt-active} when it is active. The chart
and active-coordinate parameterization impose the rank and support
restrictions. No global minimum of the nonconvex criterion is required.

\begin{assumption}
\label{ass:oracle-functional}
For the true support $\A$ and fixed completion rank $r$, the following local
conditions hold. The projected-rank set admits the identifiable twice
continuously differentiable chart $R(\vartheta)$ above; the true coordinate
$\eta_{\A,0}$ is an interior point of that chart; and, for some fixed
$\delta_{\mathrm{stab}}>0$,
\[
  \varrho\{(\gamma_0+\beta_0)I_N+R_0+S_0\}
  \le 1-\delta_{\mathrm{stab}}<\bar\varrho .
\]
The measurable initialization satisfies
$\bar\eta_{\A}\to_p\eta_{\A,0}$, and the local-solution rule
$\mathcal S_T(\A,\bar\eta_{\A})$ selects, with probability tending to one, a
constrained-KKT point $\widetilde\eta_{\A}$ in the local chart with
$\|\widetilde\eta_{\A}-\eta_{\A,0}\|_2=o_p(1)$. No global optimality
condition is imposed.
Let $J_{\A,t}=\partial q_t(\eta_{\A,0})/\partial\eta_{\A}^\top$. The oracle population criterion is locally identified and has nonsingular Hessian
\begin{equation}
  H_{\A}=2\E(J_{\A,t}^\top J_{\A,t})\succ0.
  \label{eq:oracle-hessian}
\end{equation}
At this local KKT solution, the feasible oracle score has the same first-order expansion as its latent-log-IV counterpart, and the relevant score array satisfies the central limit theorem and stochastic equicontinuity needed for the local quadratic expansion of \eqref{eq:oracle-refit}.

For each reported connectedness functional $g$, the local oracle refit admits
\begin{equation}
  \sqrt T\{g(\widetilde\eta_{\A})-g(\eta_{\A,0})\}
  =
  T^{-1/2}\sum_{t=1}^T\psi_{g,t}+o_p(1),
  \label{eq:oracle-linear}
\end{equation}
where $\{\psi_{g,t}\}$ is a martingale-difference or sufficiently mixing influence-function array satisfying a central limit theorem with long-run variance
\begin{equation}
  V_g
  =\lim_{T\to\infty}\Var\!\left(T^{-1/2}\sum_{t=1}^T\psi_{g,t}\right)
  =\sum_{\ell=-\infty}^{\infty}\Cov(\psi_{g,t},\psi_{g,t-\ell})
  \in(0,\infty),
  \label{eq:long-run-vg}
\end{equation}
where the covariance sum is absolutely summable. If the influence function is
a martingale difference, all nonzero-lag covariances vanish. For fixed $N$,
support, and rank, \eqref{eq:oracle-linear} follows from the local quadratic
expansion and the delta method. With a growing nuisance dimension, it is
maintained only for the low-dimensional projection defined by $g$.
\end{assumption}

\begin{theorem}
\label{thm:conn-inference}
Let $\widehat\A=\supp(\widehat S^{(2)})$, where $\widehat S^{(2)}$ is the second-step estimator \eqref{eq:two-step}. Define $\widetilde\eta_{\widehat\A}$ by applying the fixed measurable local-solution rule $\mathcal S_T$ to the criterion and constraints in \eqref{eq:oracle-refit}, initialized at $\bar\eta_{\widehat\A}$. If Theorem~\ref{thm:oracle} and Assumption~\ref{ass:oracle-functional} hold and $g(\eta)$ is locally continuously differentiable at $\eta_{\A,0}$ with denominator bounded away from zero, then
\begin{equation}
  \sqrt T\{g(\widetilde\eta_{\widehat\A})-g(\eta_{\A,0})\}
  \xrightarrow{d}
  \mathcal N(0,V_g).
\end{equation}
A consistent influence-function variance estimator gives asymptotically valid confidence intervals: the ordinary plug-in sandwich is sufficient for a martingale-difference score, while a HAC estimator is used under mixing dependence.
For connectedness measures based on absolute response masses, local
differentiability requires every response entry inside an absolute value to be
either bounded away from zero or locally fixed at zero by a maintained
restriction and excluded from the active derivative.
\end{theorem}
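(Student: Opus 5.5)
The argument is the standard oracle-transfer argument: on the event that the selected support equals the true one, the refit coincides with the oracle refit, and Assumption~\ref{ass:oracle-functional} supplies the limit law for the oracle.

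\textbf{Step 1: the selection event.} Let $E_T=\{\widehat\A=\A\}$. By Theorem~\ref{thm:oracle}, $\p(E_T)\to1$. The sign statement in that theorem is not needed here; only support equality matters.

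\textbf{Step 2: coincidence with the oracle refit.} The local-solution rule $\mathcal S_T$ is a fixed measurable map of the pair $(\mathcal B,\bar\eta_{\mathcal B})$. The criterion \eqref{eq:oracle-refit} and the stability constraint depend on the data only through $\mathcal B$ and $\{\widehat z_t\}$. Therefore, on $E_T$,
\[
\widetilde\eta_{\widehat\A}=\mathcal S_T(\A,\bar\eta_\A)=\widetilde\eta_\A \quad\text{identically}.
\]
The one point needing care is that $\bar\eta_{\mathcal B}$ must be defined support-wise, that is, as a measurable function of $\mathcal B$ and the data. It then takes the oracle value on $E_T$, and no data-dependent interaction between selection and initialization remains.

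\textbf{Step 3: transfer of the limit law.} Decompose
\[
\sqrt T\{g(\widetilde\eta_{\widehat\A})-g(\eta_{\A,0})\}
=\sqrt T\{g(\widetilde\eta_{\A})-g(\eta_{\A,0})\}
+\sqrt T\{g(\widetilde\eta_{\widehat\A})-g(\widetilde\eta_\A)\}1_{E_T^c}.
\]
The second term is nonzero only on $E_T^c$. Hence for every $\epsilon>0$ its probability of exceeding $\epsilon$ is at most $\p(E_T^c)\to0$, so it is $o_p(1)$ with no rate needed; in particular, the possibly unbounded factor $\sqrt T$ is harmless. For the first term, Assumption~\ref{ass:oracle-functional} gives the expansion \eqref{eq:oracle-linear}, and the influence-function array satisfies a CLT with variance $V_g\in(0,\infty)$. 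Slutsky's lemma then yields $\mathcal N(0,V_g)$.

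\textbf{Step 4: where the regularity conditions enter.} Local continuous differentiability of $g$, together with a denominator $D_{\mathrm{off}}$ bounded away from zero, is what justifies the delta-method step inside \eqref{eq:oracle-linear}. For the absolute-value functionals \eqref{eq:connectedness-functionals}, this requires the following. Every entry of $\Phi(\eta)=(I-(\gamma+\beta)I-M)^{-1}M$ that appears inside an absolute value must either be bounded away from zero, so that $|\cdot|$ is smooth locally with a constant sign, or be fixed at zero by the restriction. Stability with margin $\delta_{\mathrm{stab}}$ keeps the inverse smooth. Consistency, $\widetilde\eta_\A\to_p\eta_{\A,0}$, places the estimate in this smooth neighborhood with probability tending to one.

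\textbf{Step 5: variance estimation.} Plug $\widetilde\eta_{\widehat\A}$ into $\nabla g$ and into the sandwich built from $H_\A$ and the score. On $E_T$ this is the oracle plug-in. Its consistency follows from consistency of $\widetilde\eta_\A$, continuity of the Jacobian and Hessian envelopes, and ergodic laws of large numbers; in the mixing case, HAC consistency follows under a bandwidth $b_T\to\infty$ with $b_T/T\to0$. The event $E_T^c$ is again negligible, so the resulting Wald intervals are asymptotically valid.

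\textbf{Main obstacle.} I expect the main difficulty to be Step~2: making precise that the initialization and selection rule are fixed measurable functions of the support, so that the refit is exactly the oracle refit on $E_T$ rather than merely close to it. I would handle this by defining everything support-wise before any data enter. The result is pointwise in the true parameter, consistent with the non-uniformity remark.
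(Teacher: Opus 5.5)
Your proposal is correct and follows essentially the same route as the paper: support recovery from Theorem~\ref{thm:oracle}, exact coincidence $\widetilde\eta_{\widehat\A}=\mathcal S_T(\A,\bar\eta_{\A})=\widetilde\eta_{\A}$ on the selection event, the oracle limit law from Assumption~\ref{ass:oracle-functional}, and smoothness of $\Phi(\eta)=(I_N-A(\eta))^{-1}M(\eta)$ together with the non-kink condition for the absolute-value functionals. The differences are cosmetic. The paper transfers the limit law with the bounded-Lipschitz bound $2\|\varphi\|_\infty\p(\widehat\A\ne\A)$ rather than your indicator decomposition plus Slutsky. It also writes out the influence function $\psi_{g,t}=\nabla g(\eta_{\A,0})^\top H_{\A}^{-1}s_{\A,t}$, with $s_{\A,t}=2J_{\A,t}^\top\widetilde u_t$, from the local KKT expansion, whereas you take \eqref{eq:oracle-linear} directly as given.
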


\begin{remark}
\label{rem:oracle-scope}
The result is pointwise and relies on
$\p(\widehat\A=\A)\to1$. It is not uniform over networks with spillover
coefficients approaching zero \citep{leeb2005}. The confidence intervals
therefore require detectable true edges. In finite samples, unconditional
selected-estimator coverage, coverage conditional on target-edge selection,
and fixed-support oracle coverage are distinct quantities.
\end{remark}

\begin{corollary}
\label{cor:net-tests}
For each asset $i$, the null hypothesis $H_0:C_i^{\mathrm{net}}=0$ can be tested by a Wald statistic based on Theorem~\ref{thm:conn-inference}. Simultaneous bands for the vector of node-level connectedness measures can be constructed using the corresponding multivariate delta-method covariance, while the scalar $C^{\mathrm{total}}$ has an ordinary pointwise confidence interval.
\end{corollary}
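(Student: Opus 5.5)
The corollary follows from Theorem~\ref{thm:conn-inference} applied to a vector of functionals, combined with the continuous mapping theorem. I would first pass from scalar $g$ to the vector $g=(C_1^{\mathrm{net}},\ldots,C_N^{\mathrm{net}},C^{\mathrm{total}})^\top$, or to any fixed-dimensional subvector. For fixed $N$, support and rank, \eqref{eq:oracle-linear} holds coordinatewise by the local quadratic expansion and the delta method. Stacking the coordinates gives a joint linear representation
\[
\sqrt T\{g(\widetilde\eta_{\A})-g(\eta_{\A,0})\}=T^{-1/2}\textstyle\sum_t\psi_{g,t}+o_p(1),
\]
with $\psi_{g,t}=\nabla g(\eta_{\A,0})\,H_{\A}^{-1}s_t$, where $s_t$ is the oracle score. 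The Cram\'er--Wold device, applied through the scalar version of Theorem~\ref{thm:conn-inference} to each linear combination $c^\top g$, then yields joint asymptotic normality with long-run covariance $V_g=\nabla g\,H_{\A}^{-1}\mathcal V\,H_{\A}^{-1}\nabla g^\top$. The selection step is handled exactly as in that theorem: on the event $\{\widehat\A=\A\}$, which has probability tending to one by Theorem~\ref{thm:oracle}, the selected and oracle refits coincide, so the limit law transfers.

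For the Wald test of $H_0:C_i^{\mathrm{net}}=0$, I would use the statistic
\[
W_i=T\,\{C_i^{\mathrm{net}}(\widetilde\eta_{\widehat\A})\}^2/\widehat V_{g_i}.
\]
Here $\widehat V_{g_i}$ is the plug-in sandwich (or HAC) estimator. Its consistency is asserted in Theorem~\ref{thm:conn-inference}, and I would justify it by uniform convergence of the sample Hessian and score outer products near $\eta_{\A,0}$, together with consistency of $\widetilde\eta_{\widehat\A}$. Under $H_0$ and $V_{g_i}>0$, Slutsky's lemma gives $W_i\to_d\chi^2_1$.

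For simultaneous bands, I would use the max statistic
\[
\max_i\sqrt T\,|\widehat C_i^{\mathrm{net}}-C_i^{\mathrm{net}}|/\widehat V_{g_i}^{1/2}.
\]
By the continuous mapping theorem it converges to $\max_i|Z_i|$ with $Z\sim\mathcal N(0,\mathrm{corr}(V_g))$, and its quantile can be simulated from $\mathcal N(0,\widehat{\mathrm{corr}})$. A Bonferroni band is a simpler alternative. One caveat: because $\sum_iC_i^{\mathrm{net}}=0$, the full net vector has a singular covariance. A joint Wald test therefore needs a generalized inverse (degrees of freedom $N-1$), or the max statistic, which needs only positive diagonal entries. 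The scalar interval for $C^{\mathrm{total}}$ is the case $g=C^{\mathrm{total}}$ of Theorem~\ref{thm:conn-inference}; its denominator $D_{\mathrm{all}}(\Phi)\ge D_{\mathrm{off}}(\Phi)>0$ is bounded away from zero.

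The main obstacle is differentiability of the absolute-value functionals. Both $C_i^{\mathrm{net}}$ and $C^{\mathrm{total}}$ involve $|\Phi_{k\ell}|$ and the normalizer $D_{\mathrm{off}}(\Phi)$, and $\nabla g$ exists only if each response entry is bounded away from zero or structurally zero, which is the proviso in Theorem~\ref{thm:conn-inference}. I would verify this for the stacked vector. The response matrix $\Phi=(I-A)^{-1}M$ is smooth in $\eta$ on the stability region, since $\varrho(A)\le1-\delta_{\mathrm{stab}}$ keeps $I-A$ invertible. Ratios with $D_{\mathrm{off}}(\Phi)>0$ are smooth wherever the signs of the entries are locally constant. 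I would state the corollary under this maintained sign condition for all entries entering the chosen functionals, and I would note that $V_g$ nondegenerate must be assumed for each tested coordinate.
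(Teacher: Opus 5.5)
Your proposal is correct and follows essentially the same route as the paper's proof: the scalar limit from Theorem~\ref{thm:conn-inference}, combined with a consistent plug-in variance and Slutsky's lemma, gives $W_i\xrightarrow{d}\chi^2_1$ under $H_0$; the multivariate version with the continuous mapping theorem gives simultaneous (max-$t$) bands calibrated from the estimated Gaussian law; and the scalar case $g=C^{\mathrm{total}}$ gives its pointwise interval. Two of your additions go beyond the paper and are correct: you derive the joint limit by Cram\'er--Wold from the stacked linear representation, and you note that $\sum_iC_i^{\mathrm{net}}\equiv0$ makes the full net-vector covariance singular, so a joint Wald test would need a generalized inverse, whereas the max statistic needs only positive diagonal entries.
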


\section{Simulations}

\input{simulation_section}

\section{Empirical Analysis}
\label{sec:empirical}

\subsection{Data and volatility measurement}

The application uses split-adjusted intraday bars for the nine original Select
Sector SPDR ETFs: XLB, XLE, XLF, XLI, XLK, XLP, XLU, XLV, and XLY. The
balanced one-minute panel runs from July 26, 2007 through December 2025.
We exclude XLRE because its 2015 inception would remove the global financial
crisis from a balanced panel. SPY is used only in the common-day data-quality
filter. Including a broad index in the sector network could turn mechanical
portfolio overlap into estimated transmission. We exclude DIA, QQQ, and IWM
for the same reason.

The FirstRate Data files contain one-minute open, high, low, close, and volume
bars. We use the regular-session grid from 9:30 a.m. to 3:59 p.m. New York
time, which gives $m=390$ returns on a full trading day. For duplicate
timestamps, we retain the last observation. A day is kept when every sector ETF
and SPY has at least 371 observed minutes.
Previous-tick interpolation fills isolated missing prices; early closes and
days with more missing observations are removed. Of 4,780 source dates, 4,182
remain in the one-minute panel and 4,626 remain in the separately constructed
five-minute panel. Missing one-minute dates are concentrated early in the
sample: the GFC subsample contains 344 one-minute days and 585 five-minute
days. The Online Appendix gives annual retention rates. Overnight returns are
excluded because the target is regular-session continuous variation.

Let $r_{i,t,k}$ be the $k$th one-minute log return and define bipower variation by
\[
  BV_{i,t}=\frac{\pi}{2}\sum_{k=2}^{m}
  |r_{i,t,k}|\,|r_{i,t,k-1}|.
\]
To limit finite-activity jump contamination, we retain
\[
  \widetilde r_{i,t,k}
  =
  r_{i,t,k}\,
  \mathbf 1\!\left\{
    |r_{i,t,k}|
    \le 4\sqrt{BV_{i,t}/m}
  \right\},
\]
following the threshold principle of \citet{mancini2009}. The baseline daily proxy applies the finite-sample-corrected two-scale realized-variance estimator of \citet{zhang2005} to the truncated returns, using 20 subsampling grids. We denote it by $\widehat{IV}^{\,TS}_{i,t}$ and use
\[
  \widehat z_{i,t}
  =
  \log\!\left(\max\{\widehat{IV}^{\,TS}_{i,t},10^{-14}\}\right)
\]
in estimation. This construction separates the empirical roles of jump truncation and noise correction rather than treating raw one-minute realized variance as noise-free.
The daily threshold is not rescaled by a separate intraday periodicity estimate.

We use two alternative high-frequency measures. The first is threshold
realized variance, $\sum_k\widetilde r_{i,t,k}^2$. The second is a Parzen
realized kernel computed from the truncated returns
\citep{barndorff2008}, with fixed bandwidth 20. Before taking logs, a negative
or near-zero kernel estimate is set to two percent of threshold realized
variance; this floor does not bind on any retained sector-day. We also construct
a five-minute panel ($m=78$) and use threshold realized variance. A winsorized
open-to-close squared return provides the daily comparison in
Proposition~\ref{prop:nonequiv}; its thresholds are estimated from the training
sample.

The TSRV and realized-kernel estimates are noise-robust empirical
implementations; their in-fill rates differ from the $m^{-1/2}$ proxy used in
Theorem~\ref{thm:error}.

\subsection{Estimation and evaluation design}

The empirical implementation follows the estimator used in the Monte Carlo
study. In each estimation
window, validation QLIKE selects the projected-nuclear and sparse penalties,
and a row-wise ranked GIC with multiplier 0.25 and degree cap three selects the
forecast support. The active sparse coefficients and dynamic parameters are
then refitted subject to stability, conditional on the penalized projected
component. The GIC is a finite-sample forecasting device; inference instead
uses the weighted-LASSO selector covered by Theorem~\ref{thm:oracle}.

Connectedness inference instead uses the support selected by the second-step
weighted LASSO in \eqref{eq:two-step} and fixes the completion rank at one
before bootstrap estimation. A local constrained-KKT refit jointly estimates
the dynamic parameters, active sparse coefficients, and rank-constrained
completion. The resulting intervals are conditional on the selected support
and fixed rank, as in Theorem~\ref{thm:conn-inference}.

All fitted recursions satisfy \eqref{eq:stability}. Level forecasts use a
smearing correction estimated from training residuals. No full-sample
transformation or tuning parameter enters an out-of-sample forecast.

Forecast evaluation uses annual expanding windows. The first estimation window
ends in December 2016 and produces one-day-ahead forecasts for 2017. The
window then expands one year at a time through 2025. Predictor-factor rank,
penalty, and support selection, as well as refitting and the smearing
correction, use only observations available before each forecast year. The
primary loss is QLIKE,
\[
  \ell_{i,t}(\widehat h_{i,t})
  =
  \frac{\widehat{IV}_{i,t}}{\widehat h_{i,t}}
  -\log\!\left(
    \frac{\widehat{IV}_{i,t}}{\widehat h_{i,t}}
  \right)-1,
\]
whose ranking is robust to an imperfect volatility proxy under the conditions
in \citet{patton2011}. We also report log-scale mean squared error. Forecast
comparisons use average loss differences and HAC Diebold--Mariano statistics
\citep{diebold1995}. We use Bartlett--Newey--West standard errors with bandwidth
$\lfloor T_{\mathrm{eval}}^{1/3}\rfloor$, which equals 13 in the baseline
evaluation. A positive QLIKE gain favors the proposed model.

Deletion diagnostics set $R$, $S$, or all cross-volatility coefficients to
zero while holding the remaining full-model estimates fixed. The benchmarks
are a diagonal log-realized-volatility autoregression, HAR
\citep{corsi2009}, a stable unrestricted VAR in log realized volatility, and
the network estimator fitted to the daily squared-return proxy. The deletion
diagnostics locate predictive content within the fitted model; they are not
separately tuned estimators.

We interpret $M_{ij}$ as transmission from sector $j$ to sector $i$ and
compute the measures in \eqref{eq:connectedness-functionals}. Annual point
estimates use the expanding-window fits. Full-sample inference uses a 10-day
circular block bootstrap \citep{politis1992} of the joint fixed-support,
fixed-rank refit. Each draw re-estimates $(\gamma,\beta,R,S)$ under the same
stability and stationarity conditions as the full-sample estimator. Results
are reported only if at least 95 percent of the requested draws pass this
common screen. The bootstrap is used as a finite-sample approximation to the
conditional long-run law in Theorem~\ref{thm:conn-inference}; standard block
bootstrap validity uses a block length that diverges but is $o(T)$, and the
10-day finite-sample choice retains two trading weeks of short-run dependence.
Edge stability is evaluated separately with
one-minute TSRV, one-minute realized kernel, and five-minute threshold RV.

The regime analysis covers the global financial crisis (July 2007--2009),
the pre-pandemic period (2013--2019), COVID (2020--2021), monetary tightening
(2022--2023), and 2024--2025. Because early one-minute coverage is limited,
we report each regime beside its five-minute counterpart and treat the
decomposition as descriptive.

\subsection{Empirical results}
\label{sec:empirical-results}

The one-minute panel contains 4,182 trading days. Across sectors, the median
ratio of truncated two-scale RV to threshold RV ranges from 0.66 to 0.82, and
the correlation between their logs ranges from 0.91 to 0.96. The recursive
evaluation contains 2,239 one-day-ahead forecasts from 2017 through 2025.

Table~\ref{tab:emp-forecast} reports the forecast results. The proposed model
has mean QLIKE 0.1770, which is lower than that of the stable VAR, the
diagonal autoregression, and the daily-return network by 0.0165, 0.0244, and
0.1296, respectively, using unrounded losses; all three HAC tests reject equal
predictive accuracy at conventional levels. Relative to HAR, the reduction is
0.0026, or 1.5 percent of HAR loss, with a HAC statistic of 1.15
($p=0.2515$), and the proposed model has lower annual QLIKE in four of the nine
forecast years. The model has lower QLIKE than the multivariate and daily-proxy
benchmarks, while its advantage over HAR is positive but not statistically
significant.

\begin{table}[htbp]
\centering
\caption{Recursive one-day-ahead forecast comparison}
\label{tab:emp-forecast}
\small
\input{generated_empirical/table_forecast_main.tex}
\end{table}

The deletion diagnostics attribute most of the forecast improvement to the
projected factor component. Removing the selected sparse component changes
average QLIKE little in this panel.

Table~\ref{tab:emp-robustness} repeats the recursive exercise with a
one-minute Parzen realized kernel and a separately constructed five-minute
threshold-RV panel. Across all three proxies the gain over HAR is positive but
statistically insignificant, whereas the gain over the VAR is positive and
significant in each case. QLIKE levels are not compared across rows because
the evaluation proxy changes. We therefore retain one-minute truncated
two-scale RV as the baseline and use the five-minute results as a
sampling-frequency check.

\begin{table}[htbp]
\centering
\caption{Sampling-frequency and volatility-proxy robustness}
\label{tab:emp-robustness}
\small
\resizebox{\textwidth}{!}{\input{generated_empirical/table_sampling_proxy_robustness.tex}}
\end{table}

The network estimates are less stable across volatility proxies than the
forecast rankings. No directed sparse edge satisfies the frequency and sign
criteria under all three proxies, and the top transmitter and receiver change
across specifications. We therefore do not interpret the point rankings as a
stable sector hierarchy.

Table~\ref{tab:emp-connectedness-inference} reports full-sample connectedness
inference. The second-step weighted LASSO selects an empty sparse support, so
the refit is conditional on this support and a prespecified completion rank of
one. All 499 requested circular block bootstrap draws pass the common stability
and constrained-KKT screen. Total connectedness is 0.9126, with a pointwise 95
percent interval of $[0.9058,0.9194]$. Energy has positive net connectedness
after simultaneous max-$t$ adjustment, whereas the simultaneous intervals for
the other sectors include zero. Thus the full-sample evidence concerns
systemwide propagation through the projected factor component, not individual
sparse transmission channels. The Online Appendix reports completion-rank and
response-horizon sensitivity.

\begin{table}[htbp]
\centering
\caption{Connectedness inference}
\label{tab:emp-connectedness-inference}
\footnotesize
\input{generated_empirical/table_connectedness_main.tex}
\end{table}
\FloatBarrier

Energy transmits 26.72 percent of normalized off-diagonal response mass and
receives 9.86 percent, giving a net position of 16.86 percentage points. This
pattern is consistent with energy-price volatility being transmitted to other
sectors through input-cost and inflation channels during the sample period. It
remains a model-implied log-volatility response, not an externally identified
causal shock.

Figure~\ref{fig:emp-network-regimes} summarizes the regime estimates. Total
connectedness is similar across the one- and five-minute panels, whereas the
net transmitter--receiver positions change across regimes. These differences
are descriptive because early one-minute coverage is limited and the node
estimates vary with the volatility proxy. Total connectedness is the
off-diagonal share of absolute cumulative responses, not a forecast-error
variance share.

\begin{figure}[htbp]
\centering
\begin{subfigure}{0.96\textwidth}
\centering
\includegraphics[width=\textwidth]{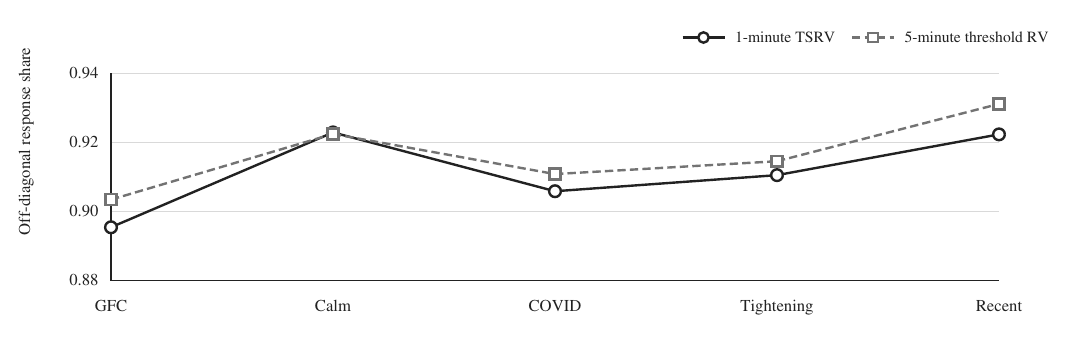}
\caption{Total connectedness.}
\label{fig:emp-network-total}
\end{subfigure}

\vspace{0.4em}

\begin{subfigure}{0.96\textwidth}
\centering
\includegraphics[width=\textwidth]{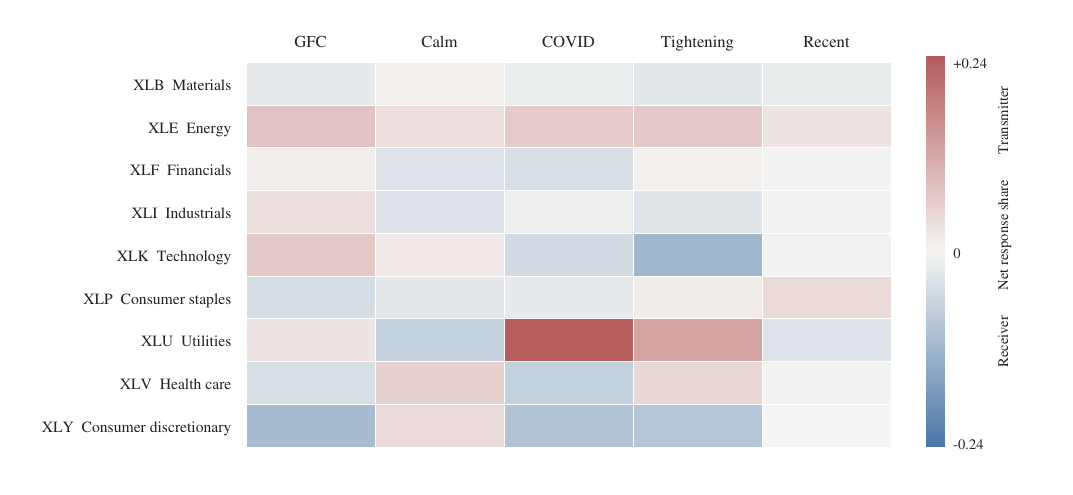}
\caption{Net connectedness.}
\label{fig:emp-network-net}
\end{subfigure}
\caption{Connectedness across market regimes}
\label{fig:emp-network-regimes}
\end{figure}
\FloatBarrier

The Online Appendix reports the factor--sparse decomposition and the selected
sparse channels. The composition is less stable than total connectedness:
component norms and selected edges change with the sampling rule.

\section{Conclusion}

We propose a continuous-time model for volatility transmission identified from
high-frequency observations. Its daily recursion describes an unknown directed
network in integrated volatility. For a fixed network template, feasible
log-volatility estimation is first-order equivalent to estimation based on
latent integrated volatility. For an unknown network, regularization selects
the relevant structure and a local refit provides conditional inference for
response-based connectedness.

In the sector-ETF application, the network has lower average out-of-sample
QLIKE than the VAR, diagonal, and daily-proxy benchmarks. Its average QLIKE is
also lower than HAR, although the HAC test does not reject equal predictive
accuracy. The full-sample weighted LASSO selects an empty sparse support;
conditional inference for the rank-one projected factor component identifies
Energy as a net transmitter. Forecast comparisons are stable across the main
high-frequency proxies, but sparse edges and annual node rankings are not. The
model therefore captures predictive cross-sectional volatility dependence,
while detailed network interpretation remains sensitive to volatility
measurement.

Several extensions are left for future work. Extending the recovery theory to
noisy-price volatility estimators requires incorporating their
estimator-specific in-fill rates. Relaxing the $N^2/m=o(1)$ requirement calls
for sharper mixed-norm bounds on the cross terms in the restricted-curvature
argument. Finally, a growing-$N$ central limit theory for the defactored score
would replace the conditional oracle transfer with unconditional inference on
the selected network.

\clearpage
\appendix

\section{Proof of the Exponential Embedding}
\label{app:embedding}

\begin{proof}
We give an explicit construction of a positive within-day variance process whose conditional integrated volatility satisfies the proposed daily network recursion. The construction proceeds in four steps: (1) the filtration, (2) the conditional-mean state, (3) the within-day variance factor, and (4) verification of the conditional-moment identity, positivity, and measurability. A final step records the properties of the resulting innovation.

\emph{Step 1: Filtration.} Let $(W^\sigma_{i,t}(u))_{u\in[0,1]}$,
$i=1,\ldots,N$, $t=1,2,\ldots$, be standard Brownian motions independent
across $(i,t)$ and independent of $B$, $J$, and $\F_0$. Enlarge the
probability space if needed. Let $\{\F_s\}_{s\ge0}$ be the augmented,
right-continuous filtration generated by $\F_0$, $B$, $J$, and
$\{W^\sigma_{i,\tau}(u):\tau-1+u\le s\}$. Then
$W^\sigma_{i,t}(\cdot)$ is independent of $\F_{t-1}$ and
$W^\sigma_{i,t}(u)$ is $\F_{t-1+u}$-measurable.

\emph{Step 2: Conditional-mean state.} Given the $\F_0$-measurable initial conditions $h_{i,0}>0$ and $IV_{i,0}>0$, define $y_t$ recursively by \eqref{eq:vector-recursion} and set $h_{i,t}=\exp(y_{i,t})$. By induction: $y_{t-1}$ and $z_{t-1}=\log IV_{t-1}$ are $\F_{t-1}$-measurable, so $y_t$ is $\F_{t-1}$-measurable, and since $\omega_i$ is finite and $M$ has finite row sums, $y_{i,t}$ is finite whenever $y_{t-1}$ and $z_{t-1}$ are. Hence $h_{i,t}=\exp(y_{i,t})$ is finite, strictly positive, and $\F_{t-1}$-measurable. Signed entries of $M$ create no positivity problem because the spillover terms enter before exponentiation; this is the reason for working on the log scale.

  \emph{Step 3: Within-day variance factor.} For $s\in[t-1,t)$, write
$u=s-(t-1)\in[0,1)$ and fix constants $\upsilon_i\ge0$. Define
\begin{equation}
  \mathcal E_{i,t}(u)
  =
  \exp\!\left\{\upsilon_i W^\sigma_{i,t}(u)
  -\frac12\upsilon_i^2u\right\},
  \qquad
  \sigma_{i,t-1+u}^2=h_{i,t}\,\mathcal E_{i,t}(u).
\end{equation}
Then $\mathcal E_{i,t}(u)>0$ and, as a Dol\'eans-Dade exponential of
$\upsilon_iW^\sigma_{i,t}$, the process
$u\mapsto\mathcal E_{i,t}(u)$ is a continuous positive martingale on $[0,1]$
with respect to $\{\F_{t-1+u}\}_u$, with
\begin{equation}
  \E\{\mathcal E_{i,t}(u)\mid\F_{t-1}\}=1,
  \qquad u\in[0,1].
  \label{eq:mart-norm}
\end{equation}
Within each open day, $\sigma_{i,\cdot}^2$ is therefore a continuous positive semimartingale (an $\F_{t-1}$-measurable constant times a geometric Brownian motion). At an integer $t$, define the process by its right-hand value, $\sigma_{i,t}^2=h_{i,t+1}\mathcal E_{i,t+1}(0)=h_{i,t+1}$; this value is $\F_t$-measurable, while the left limit is $h_{i,t}\mathcal E_{i,t}(1)$. Thus the half-open-day concatenation is right-continuous with left limits and adapted. On each finite horizon it has only finitely many integer-time jumps, so the concatenated process is a positive c\`adl\`ag semimartingale. Endpoint values do not affect the daily integrals below.

\emph{Step 4: Conditional-moment identity.} The daily integrated volatility generated by this construction is
\begin{equation}
  IV_{i,t}
  =
  \int_{t-1}^{t}\sigma_{i,s}^2\,ds
  =
  h_{i,t}\int_0^1 \mathcal E_{i,t}(u)\,du,
  \label{eq:iv-construct}
\end{equation}
where the integral is finite and strictly positive almost surely because $u\mapsto\mathcal E_{i,t}(u)$ is continuous and strictly positive on the compact interval $[0,1]$. Since $h_{i,t}$ is $\F_{t-1}$-measurable and $\mathcal E_{i,t}\ge0$, Tonelli's theorem and \eqref{eq:mart-norm} give
\begin{equation}
  \E(IV_{i,t}\mid\F_{t-1})
  =
  h_{i,t}\int_0^1\E\{\mathcal E_{i,t}(u)\mid\F_{t-1}\}\,du
  =
  h_{i,t}.
\end{equation}
Hence $h_{i,t}=\E(IV_{i,t}\mid\F_{t-1})=\exp(y_{i,t})$ and the recursive definition of $y_t$ is exactly \eqref{eq:log-recursion}. The realized log volatility $z_{i,t}=\log IV_{i,t}$ is well defined and $\F_t$-measurable by \eqref{eq:iv-construct}, so it can drive the next day's state, closing the induction. Combining this variance process with the price equation \eqref{eq:price} completes the continuous-time embedding.

\emph{Step 5: Properties of the innovation.} The construction yields
\begin{equation}
  u_{i,t}:=z_{i,t}-y_{i,t}=\log\!\int_0^1\mathcal E_{i,t}(u)\,du .
\end{equation}
Because $\log$ is concave, Jensen's inequality applied twice gives the two-sided bound
\[
  -\frac{\upsilon_i^2}{4}
  =\int_0^1\E\{\log \mathcal E_{i,t}(u)\}\,du
  \;\le\;
  \E(u_{i,t}\mid\F_{t-1})
  \;\le\;
  \log\E\!\left\{\int_0^1\mathcal E_{i,t}(u)\,du\;\Big|\;\F_{t-1}\right\}=0,
\]
where the lower bound uses
$\log\int_0^1\mathcal E\,du\ge\int_0^1\log\mathcal E\,du$ and
$\E\log\mathcal E_{i,t}(u)=-\upsilon_i^2u/2$. Thus
$c_{0,i}=\E(u_{i,t}\mid\F_{t-1})\in[-\upsilon_i^2/4,0]$ is finite. It is
nonrandom because $W^\sigma_{i,t}$ is independent of $\F_{t-1}$ and has the
same law on every day. It equals zero only when $\upsilon_i=0$, and
$\widetilde u_{i,t}=u_{i,t}-c_{0,i}$ is a martingale difference. In this
construction, $\{u_{i,t}\}_t$ is i.i.d.\ across days. Assumption~\ref{ass:dep}
imposes beta-mixing separately because spectral stability alone does not imply
the required Markov-chain regularity. More general within-day dynamics can
replace the exponential-martingale factor if they preserve
\eqref{eq:mart-norm} and the constant Jensen-gap condition
\[
  \E\!\left[\log\!\left\{\int_0^1\mathcal E_{i,t}(u)\,du\right\}\middle|\F_{t-1}\right]
  =c_{0,i},
\]
with nonrandom $c_{0,i}$; the first condition alone embeds $h_t$, but the second is also required for the shifted log recursion used in estimation.
\end{proof}

\section{Proofs for Estimation Results}
\label{app:estimation-proofs}

\subsection{Proof of Proposition~\ref{prop:staged-id}}

\begin{proof}
We prove the two identification steps separately. Throughout Stage-0,
$W^\star$ is fixed, deterministic, correctly normalized, and zero diagonal,
with $M_0=\rho_0W^\star$. Let
\[
  \theta=(\widetilde\omega^\top,\gamma,\beta,\rho)^\top
\]
belong to a parameter space $\Theta_0$ on which $|\gamma|<1$ and the induced recursion is stable. For each $\theta\in\Theta_0$, define the filtered conditional mean by
\begin{equation}
  q_t(\theta)
  =
  \widetilde\omega+\gamma q_{t-1}(\theta)+(\beta I_N+\rho W^\star)z_{t-1}.
  \label{eq:stage0-filter}
\end{equation}
Under $|\gamma|<1$, this filter has the causal representation
\begin{equation}
  q_t(\theta)
  =
  \sum_{k=0}^{\infty}\gamma^k\widetilde\omega
  +
  \sum_{k=0}^{\infty}\gamma^k(\beta I_N+\rho W^\star)z_{t-1-k},
  \label{eq:stage0-causal}
\end{equation}
so $q_t(\theta)$ is measurable with respect to the observed past $\sigma(z_{t-1},z_{t-2},\ldots)$. The series converges absolutely, almost surely and in $L^2$, because $\E\|z_t\|_2^2<\infty$ by Assumption~\ref{ass:dep}, the weights are geometric, and $W^\star$ is a fixed normalized matrix; in implementation the filter is started at a fixed initial value $q_0$, and the initialization error decays as $C\gamma^t$ uniformly over the compact stable parameter set, so it affects neither the population criterion below nor any of the asymptotic statements.

Let $\theta_0$ denote the true Stage-0 parameter and, consistently with the recentering in \eqref{eq:q-recursion}, write
\[
  z_t=q_t(\theta_0)+\widetilde u_t,
  \qquad
  \E(\widetilde u_t\mid\F_{t-1})=0.
\]
Define the population criterion
\[
  Q_0(\theta)=\E\|z_t-q_t(\theta)\|_2^2,
\]
which is finite for every $\theta\in\Theta_0$ by stationarity, $\E\|z_t\|_2^2<\infty$, and the square-summability of the filter weights. Because $q_t(\theta_0)-q_t(\theta)$ is $\F_{t-1}$-measurable, the martingale-difference property gives
\[
  \E\!\left[\widetilde u_t^\top\{q_t(\theta_0)-q_t(\theta)\}\right]=0.
\]
Therefore
\begin{equation}
  Q_0(\theta)-Q_0(\theta_0)
  =
  \E\|q_t(\theta)-q_t(\theta_0)\|_2^2
  \ge0.
\end{equation}
Thus any population minimizer must satisfy $q_t(\theta)=q_t(\theta_0)$ almost surely.

It remains to show that equality of conditional means implies equality of parameters. Let
\[
  \Delta_\omega=\widetilde\omega-\widetilde\omega_0,\quad
  \Delta_\gamma=\gamma-\gamma_0,\quad
  \Delta_\beta=\beta-\beta_0,\quad
  \Delta_\rho=\rho-\rho_0.
\]
Subtracting \eqref{eq:stage0-filter} at $\theta$ and $\theta_0$, and using $q_t(\theta)=q_t(\theta_0)$ and $q_{t-1}(\theta)=q_{t-1}(\theta_0)$ almost surely, gives
\begin{equation}
  \Delta_\omega
  +\Delta_\gamma q_{t-1}(\theta_0)
  +\Delta_\beta z_{t-1}
  +\Delta_\rho W^\star z_{t-1}
  =0
  \qquad\text{a.s.}
  \label{eq:stage0-rank}
\end{equation}
Equivalently, define the random block regressor
\[
  \mathcal Z^{(0)}_{t-1}
  =
  \begin{bmatrix}
    I_N & q_{t-1}(\theta_0) & z_{t-1} & W^\star z_{t-1}
  \end{bmatrix},
\]
where the last three blocks are $N$-vectors multiplying the scalar
coefficients $(\Delta_\gamma,\Delta_\beta,\Delta_\rho)$. The Stage-0 rank
condition is that
$\E\{(\mathcal Z^{(0)}_{t-1})^\top\mathcal Z^{(0)}_{t-1}\}$ is nonsingular on
$\mathbb R^N\times\mathbb R^3$, or, equivalently, if constants
$(a_0,b_1,b_2,b_3)\in\mathbb R^N\times\mathbb R^3$ satisfy
\[
  a_0+b_1q_{t-1}(\theta_0)+b_2z_{t-1}+b_3W^\star z_{t-1}=0
  \qquad\text{a.s.},
\]
then $a_0=0$ and $b_1=b_2=b_3=0$. Applying this condition to
\eqref{eq:stage0-rank} yields
\[
  \Delta_\omega=0,\qquad
  \Delta_\gamma=\Delta_\beta=\Delta_\rho=0.
\]
Hence $\theta_0$ is the unique minimizer of $Q_0$, and Stage-0 is identified.

For Stage 1, take $(\widetilde\omega,\gamma,\beta)$ as identified. The conditional mean equation can be written
\[
  q_t-\widetilde\omega-\gamma q_{t-1}-\beta z_{t-1}=Mz_{t-1}.
\]
Suppose two matrices $M_1$ and $M_2$ generate the same conditional mean. Then
\[
  (M_1-M_2)z_{t-1}=0
  \qquad\text{a.s.}
\]
Let $\Gamma_z=\E(z_{t-1}z_{t-1}^\top)$, which exists by Assumption~\ref{ass:dep}, and let $\mathcal M\subseteq\R^N$ denote the linear space of admissible rows of $M$ (for the unrestricted zero-diagonal parameterization, $\mathcal M_i=\{a\in\R^N:a_i=0\}$ for row $i$). Suppose $\Gamma_z$ is positive definite on $\mathcal M_i$ for every $i$, so $a^\top\Gamma_z a>0$ for all nonzero $a\in\mathcal M_i$. Then for each row $a_i^\top$ of $A=M_1-M_2$, which lies in $\mathcal M_i$,
\[
  0=\E(a_i^\top z_{t-1})^2=a_i^\top\Gamma_z a_i
\]
implies $a_i=0$. Thus $A=0$ and $M_1=M_2$, so the combined network matrix is identified. The zero-diagonal restriction assigns own feedback to $\beta I_N$ and cross feedback to $M$. Finally, once $M_0$ is identified, Theorem~\ref{thm:separation} gives local tangent--support identifiability and the unique regularization-selected convex optimum for $M_0=R_0+S_0$; it does not assert global algebraic uniqueness over all decompositions. This proves the staged identification claim.
\end{proof}

\subsection{Proof of Proposition~\ref{prop:nonequiv}}

\begin{proof}
It suffices to show that replacing $IV_t$ by the daily squared return creates a non-vanishing measurement-error component in the regressor. To distinguish this level-scale benchmark from the filtered log design in \eqref{eq:filtered-design}, let $v_t$ denote the latent integrated-volatility regressor and let the daily-return proxy be
\[
  v_t^{r}=v_t+\nu_t,
  \qquad
  \E(\nu_t\mid v_t)=0,\qquad
  \E(v_t\nu_t^\top)=0.
\]
Let the latent conditional mean contain the feedback term $K_0v_t$, where
$K_0=\beta I_N+M$. The pseudo-true daily coefficient is the population linear
projection coefficient
\[
  K^r
  =
  \E(K_0v_tv_t^{r\top})\{\E(v_t^rv_t^{r\top})\}^{-1}.
\]
Using $\E(v_t\nu_t^\top)=0$ gives
\[
  \E(K_0v_tv_t^{r\top})=K_0\Sigma_v,
  \qquad
  \E(v_t^rv_t^{r\top})=\Sigma_v+\Sigma_\nu,
\]
and therefore \eqref{eq:daily-attenuation}. Unless
$\Sigma_\nu=0$ or the attenuation matrix happens to act as the identity on the
row space of $K_0$, the pseudo-true feedback matrix differs from $K_0$.

The scalar attenuation formula follows by taking one representative feedback
coefficient. If the true conditional mean contains the term $b_0v_t$ and
$v_t^r=v_t+\nu_t$ with $\Var(\nu_t)=\sigma_\nu^2>0$, then
\[
  b^{r}
  =
  \frac{\Cov(v_t^r,b_0v_t)}{\Var(v_t^r)}
  =
  b_0\,\frac{\Var(v_t)}{\Var(v_t)+\Var(\nu_t)}
  =
  b_0\,\lambda,
\]
where
\[
  \lambda=
  \frac{\Var(IV)}{\Var(IV)+\sigma_\nu^2}<1.
\]
If $\Sigma_\nu=\kappa\Sigma_v$ for some $\kappa>0$, the multivariate
attenuation matrix reduces to the common factor $(1+\kappa)^{-1}$; otherwise
it is a non-scalar matrix distortion.

The error $\nu_t=r_t^2-IV_t$ remains a daily-frequency quantity. Under the
leverage-free benchmark, $r_t$ is conditionally Gaussian with variance
$IV_t$. Hence $\Var(\nu_t\mid\F^\sigma)=2IV_t^2$ and
$\sigma_\nu^2=\E(2IV_t^2)>0$. This proves part (i).

For part (ii), work conditionally on the volatility path $\F^\sigma$. The independence and no-leverage conditions give $r_{i,t}\mid\F^\sigma\sim\mathcal N(0,IV_{i,t})$, so $r_{i,t}^2/IV_{i,t}\mid\F^\sigma\sim\chi^2_1$ and \eqref{eq:log-proxy} holds with $\varepsilon_{i,t}=\log(r_{i,t}^2/IV_{i,t})$. The proxy error is independent of the volatility path and i.i.d.\ across days in this benchmark; contemporaneous cross-asset dependence may be retained in $\Var(\varepsilon_t)$. The moments follow from the Gamma representation $\chi^2_1=\mathrm{Gamma}(1/2,2)$: for $Q\sim\mathrm{Gamma}(k,s)$, $\E\log Q=\psi(k)+\log s$ and $\Var\log Q=\psi'(k)$, where $\psi$ is the digamma function. With $k=1/2$ and $s=2$, $\psi(1/2)=-\gamma_{\mathrm E}-2\log2$ gives $\E(\varepsilon_{i,t})=-\gamma_{\mathrm E}-\log2$, and $\psi'(1/2)=\pi^2/2$ gives the variance. The exponential left tail follows from $\p(\varepsilon_{i,t}<-a)=\p(\chi^2_1<e^{-a})\asymp e^{-a/2}$ as $a\to\infty$. The mean is absorbed by the intercept; the variance is fixed at the daily frequency.

For the projection algebra, the contaminated design
\[
  e_t=\sum_{k\ge0}\gamma^k\{\varepsilon_{t-1-k}-\E(\varepsilon_t)\}
\]
involves only lagged daily proxy errors. It is therefore independent of the date-$t$ objects $z_{i,t}$, $\widetilde u_{i,t}$, and $\varepsilon_{i,t}$, and of the design signal $x_t$. Hence, after demeaning,
\[
  \E(z^r_{i,t}x^{r\top}_t)
  =\E(z_{i,t}x_t^\top)
  =K_{0,i\cdot}\,\Gamma_x,
  \qquad
  \E(x^r_tx^{r\top}_t)=\Gamma_x+\Sigma_{\mathrm d},
\]
where the first equality uses $\E(\varepsilon_{i,t}x_t^\top)=0$, $\E(z_{i,t}e_t^\top)=0$, and the martingale-difference property $\E(\widetilde u_{i,t}x_t^\top)=0$, and the second uses $\E(x_te_t^\top)=0$. The geometric sum $\Var(e_t)=\sum_{k\ge0}\gamma^{2k}\Var(\varepsilon_t)=(1-\gamma^2)^{-1}\Var(\varepsilon_t)$ gives $\Sigma_{\mathrm d}$, and \eqref{eq:log-attenuation} follows from the population linear-projection formula. Because $\Sigma_{\mathrm d}$ is a fixed positive definite matrix that does not depend on $m$, $K^{r,\log}\neq K_0$ whenever $K_0\neq0$; in the isotropic special case stated in the proposition, the attenuation is the common factor $\lambda_{\log}<1$.

By contrast, a high-frequency estimator satisfies $\widehat z_{i,t}=z_{i,t}+O_p(m^{-1/2})$ on the localized event of Lemma~\ref{lem:logiv-filter}, so the analogous contamination covariance is $O(m^{-1})$ and the feasible realized-volatility criterion has the same probability limit as the latent-$IV$ criterion as $m\to\infty$. Hence the daily-proxy linear projection and the high-frequency integrated-volatility projection are not asymptotically equivalent in this benchmark, on either scale.
\end{proof}

\begin{lemma}
\label{lem:logiv-filter}
Suppose Assumptions~\ref{ass:price}, \ref{ass:dep}, \ref{ass:rates}, and \ref{ass:orth} hold for the fixed-dimensional staged recursion. Then the localization event
\[
  \mathcal L_{N,T}
  =\left\{\min_{i,t}IV_{i,t}>2c_{N,T},
  \max_{i,t}|\widehat{IV}_{i,t}/IV_{i,t}-1|\le1/2\right\}
\]
has probability tending to one. On $\mathcal L_{N,T}$ trimming is inactive and
\[
  \widehat z_t=z_t+\zeta_t,\qquad
  \zeta_{i,t}
  =
  b^z_{i,t}+\mathring\zeta_{i,t}+r^z_{i,t},
\]
where $\mathring\zeta_{i,t}$ is the conditionally centered leading discretization term,
$\max_{i,t}|b^z_{i,t}|=O_p(m^{-1})$ in the no-noise case, with the corresponding pre-averaged bias rate under microstructure noise, and
$T^{-1}\sum_{t=1}^T\|r^z_t\|=O_p(m^{-1})$ in the no-noise case. Moreover, uniformly over the compact stable parameter set,
\[
  \widehat q_t(\theta)-q_t(\theta)
  =
  \sum_{k=1}^{t-1}G_{t,k}(\theta)\zeta_{t-k}+r_{q,t}(\theta),
  \qquad
  \|G_{t,k}(\theta)\|\le C\bar q^k,\quad \bar q<1,
\]
and
\[
  \sup_{\theta}T^{-1}\sum_{t=1}^T\|r_{q,t}(\theta)\|
  \le
  C\,T^{-1}\sum_{t=1}^T\|\zeta_t\|^2+o_p(T^{-1/2}).
\]
Consequently the feasible score differs from the infeasible score by a bias
term of order $O_p(\sqrt T/m)$ without noise,
$O_p(\sqrt T/\sqrt m)$ under noise, plus a centered stochastic term controlled
by Assumption~\ref{ass:orth} and a quadratic remainder of the same order.
\end{lemma}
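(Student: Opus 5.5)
The proof has four parts: localization, a second-order expansion of the log, propagation of the plug-in error through the stable filter, and a decomposition of the score difference.

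\textbf{Localization.} First, $\p(\min_{i,t}IV_{i,t}\le 2c_{N,T})\to0$ follows from a union bound and Markov's inequality on $IV^{-p}$. Each term satisfies $\p(IV_{i,t}\le 2c_{N,T})\le (2c_{N,T})^p\sup\E IV^{-p}$, and summing over the $NT$ pairs gives order $NTc_{N,T}^p=o(1)$ by Assumption~\ref{ass:price}. Second, $\max_{i,t}|\delta_{i,t}|\le 1/2$ with probability tending to one. The bias part is bounded by $\max|b_{i,t}|/m=O_p(m^{-1})$. For the centered part, conditional sub-Gaussianity with scale $Cm^{-1/2}$ and a union bound over $NT$ terms give a bound of order $m^{-1/2}\sqrt{\log(NT)}$. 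This tends to zero because $\log(NT)/m=o(1)$ by Assumption~\ref{ass:rates}; for fixed $N$ it suffices that $\log T/m\to0$, which follows from $T=o(m^2)$. On $\mathcal L_{N,T}$ we have $\widehat{IV}\ge IV/2>c_{N,T}$, so trimming is inactive.

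\textbf{Expansion of the log.} On $\mathcal L_{N,T}$, write $\zeta=\log(1+\delta)=\delta+r$ with $|r|\le C\delta^2$ for $|\delta|\le1/2$. Set $b^z=b/m$, $\mathring\zeta=\eta^{IV}$, and put estimator-specific higher-order terms into $b^z$ and $r^z$. Then $T^{-1}\sum_t\|r^z_t\|\le C\,T^{-1}\sum_t\|\delta_t\|^2=O_p(N/m)=O_p(m^{-1})$ for fixed $N$, using the conditional second-moment bound.

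\textbf{Filter propagation.} The feasible and infeasible filters obey the same linear recursion, $q_t=\widetilde\omega+\gamma q_{t-1}+Kz_{t-1}$. Because the filter is linear in the driving sequence, the difference $\widehat q_t(\theta)-q_t(\theta)$ equals $\sum_{k\ge1}\gamma^{k-1}K\zeta_{t-k}$ exactly, with an initialization term bounded by $C\gamma^t$. So $G_{t,k}=\gamma^{k-1}K$, and $\bar q=\sup_\theta|\gamma|<1$ holds uniformly by compactness. The remainder $r_{q,t}$ then contains only the initialization error. That error is $O(T^{-1})$ after averaging, hence $o_p(T^{-1/2})$. The quadratic term in the statement is included to cover the gradient filters $\nabla\widehat q_t$, which are also linear in $\zeta$.

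\textbf{Score difference.} The feasible score is $-2T^{-1}\sum_t\nabla\widehat q_t^\top(\widehat z_t-\widehat q_t)$. Substituting $\widehat z_t=z_t+\zeta_t$ and $\widehat q_t=q_t+\sum_k G_{t,k}\zeta_{t-k}$, the $\sqrt T$-scaled difference from the infeasible score splits into three pieces.
\begin{enumerate}[label=(\alph*)]
\item Terms linear in the bias $b^z$: each is bounded by $\sqrt T\max|b^z|$ times an $O_p(1)$ average of weights with geometric sums, giving $O_p(\sqrt T/m)$.
\item Terms linear in $\mathring\zeta$: these take the form $T^{-1/2}\sum_t g_t^\top\mathring\zeta_t$ with $\mathcal G$-measurable weights $g_t$, built from infeasible gradients, $\widetilde u$, and geometric filter coefficients. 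After reindexing the double sum, their square averages are $O_p(1)$, so the conditional variance display of Assumption~\ref{ass:orth} gives $O_p(\varsigma_m)=o_p(1)$.
\item Terms quadratic in $\zeta$ (products of the $\zeta$ in $\nabla\widehat q_t$ with $\zeta$ in the residual, plus the $r^z$ terms): by Cauchy--Schwarz and the geometric filter these are bounded by $\sqrt T\,T^{-1}\sum_t\|\zeta_t\|^2=O_p(\sqrt T/m)$.
\end{enumerate}
The noisy case follows by replacing $m^{-1/2}$ with $m^{-1/4}$ throughout.

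The main obstacle is step (b). The weights multiplying $\mathring\zeta_{t-k}$ mix past and future innovations $\widetilde u_t$, so they are not adapted to the natural filtration. This is why conditioning on $\mathcal G$, which contains the whole innovation sequence, is needed. The careful check is that after exchanging the order of summation, each $\mathring\zeta_s$ carries a weight $\sum_{k}\nabla q_{s+k}^\top G\,\widetilde u_{s+k}$-type with geometric decay, and that this weight has a bounded average square under the moments in Assumption~\ref{ass:dep}.
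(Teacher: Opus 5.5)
Your proposal is correct and follows essentially the same route as the paper's proof. Both arguments use the union-bound localization, the expansion $|\log(1+x)-x|\le x^2$ with $b^z=b/m$ and $\mathring\zeta=\eta^{IV}$, the exact linear filter identity with $G_{t,k}=\gamma^{k-1}K(\theta)$ and an initialization-only remainder, and the same bias/centered/quadratic split of the score. Your observation adds something the paper leaves implicit. The weights on $\mathring\zeta_s$ coming from $(\nabla\widehat q_t-\nabla q_t)^\top\widetilde u_t$ involve future innovations, so the conditioning $\sigma$-field $\mathcal G$ must contain the whole sequence $\{\widetilde u_t\}$; the paper covers this only through the definition of $\mathcal G$ in Assumption~\ref{ass:orth}.
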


\begin{proof}
\emph{Step 1: Localization and log expansion.} By the negative-moment condition and a union bound,
\[
  \p\!\left(\min_{i,t}IV_{i,t}\le2c_{N,T}\right)
  \le (2c_{N,T})^p\sum_{i,t}\E(IV_{i,t}^{-p})=o(1).
\]
The conditional relative-error tail in Assumption~\ref{ass:price}, together with the bias order, gives
\[
  \p\!\left(\max_{i,t}|\delta_{i,t}|>1/2\right)
  \le 2NT\exp(-cm)+o(1)=o(1),
\]
where the last equality uses $\log(NT)/m=o(1)$. This is the no-noise bound; the fixed-dimensional noisy case is analogous under its stated pre-averaging rate and in-fill condition. Hence $\p(\mathcal L_{N,T})\to1$. On this event $\widehat{IV}_{i,t}>c_{N,T}$, so trimming is inactive. Since $|\log(1+x)-x|\le x^2$ for $|x|\le1/2$,
\[
  \zeta_{i,t}
  =\log\widehat{IV}_{i,t}-\log IV_{i,t}
  =\delta_{i,t}+\rho_{i,t},
  \qquad
  |\rho_{i,t}|\le\delta_{i,t}^2 .
\]
Assumption~\ref{ass:price} imposes an estimator-specific relative-error
expansion. Its rates are motivated by stable-limit theory for jump-robust
estimators \citep{jacod2012,mancini2009} and noise-robust estimators
\citep{jacod2009,barndorff2008}; joint jump-noise settings require the
corresponding estimator-specific modification. Under the maintained
expansion, the relative error decomposes as
\[
  \delta_{i,t}
  =b_{i,t}/m+\eta^{IV}_{i,t},
\]
where the conditional bias is uniformly $O_p(m^{-1})$ and
$\eta^{IV}_{i,t}$ is conditionally centered with $O_p(m^{-1/2})$ scale
(respectively $O_p(m^{-1/4})$ under noise, in which case all subsequent
$m$-rates are adjusted accordingly). Setting
$b^z_{i,t}=b_{i,t}/m$, $\mathring\zeta_{i,t}=\eta^{IV}_{i,t}$, and
$r^z_{i,t}=\rho_{i,t}$ gives the stated decomposition; the remainder obeys
$T^{-1}\sum_t\|r^z_t\|=O_p(m^{-1})$ in the fixed-dimensional no-noise case
because $T^{-1}\sum_t\delta_{i,t}^2=O_p(m^{-1})$ coordinatewise. The
high-dimensional versions used below follow from the conditional vector-tail
condition in Assumption~\ref{ass:orth}.

\emph{Step 2: Exact linear filter perturbation.} The conditional-mean filter is \emph{linear} in the lagged log-volatility inputs: for the staged parameterization,
\[
  q_t(\theta)=\widetilde\omega+\gamma q_{t-1}(\theta)+K(\theta)z_{t-1},
  \qquad K(\theta)=\beta I_N+\rho W^\star,
\]
and the feasible filter $\widehat q_t(\theta)$ satisfies the same recursion with $z$ replaced by $\widehat z=z+\zeta$. Subtracting the two recursions eliminates $\widetilde\omega$ and yields the exact identity
\[
  \widehat q_t(\theta)-q_t(\theta)
  =\gamma\{\widehat q_{t-1}(\theta)-q_{t-1}(\theta)\}
  +K(\theta)\zeta_{t-1},
\]
which iterates to
\[
  \widehat q_t(\theta)-q_t(\theta)
  =\sum_{k=1}^{t-1}\gamma^{k-1}K(\theta)\,\zeta_{t-k}
  +\gamma^{t-1}\{\widehat q_1(\theta)-q_1(\theta)\}.
\]
Hence the filter weights are exactly $G_{t,k}(\theta)=\gamma^{k-1}K(\theta)$, and $\|G_{t,k}(\theta)\|_2\le C\bar q^k$ with $\bar q=\sup_{\Theta_f}|\gamma|<1$ holds uniformly over the compact stable parameter set because $K(\theta)$ is bounded on it; no expansion in the generated regressors is needed since the recursion is linear in them. The remainder $r_{q,t}(\theta)=\gamma^{t-1}\{\widehat q_1(\theta)-q_1(\theta)\}$ collects only the initialization difference; it is bounded by $C\bar q^{t-1}\|\zeta_0\|$ (with $\zeta_0$ the plug-in error entering the common initialization), so
\[
  \sup_{\theta}T^{-1}\sum_{t=1}^T\|r_{q,t}(\theta)\|
  \le \frac{C}{T(1-\bar q)}\|\zeta_0\|
  =O_p(T^{-1}),
\]
which is smaller than the stated bound. The stated bound also covers the general network step, in which $K(\theta)=\beta I_N+M$: the identity above holds verbatim with this $K(\theta)$, and boundedness on the compact stable set again gives the geometric weights. The quadratic term $C\,T^{-1}\sum_t\|\zeta_t\|^2$ quoted in the lemma arises not from the filter but from the log expansion in Step 1 (the $r^z_t$ component of $\zeta_t$) and from products of two $\zeta$'s in the score expansion of Step 3; stating the bound with this term keeps the lemma applicable to both uses.

\emph{Step 3: Score decomposition.} Substituting $\widehat z_t=z_t+\zeta_t$ and the exact filter identity into the feasible score
$\nabla\widehat Q_T(\theta)=-\tfrac2T\sum_t\nabla\widehat q_t(\theta)^\top\{\widehat z_t-\widehat q_t(\theta)\}$
produces three groups of terms: (a) the infeasible score; (b) terms linear in
$\{\zeta_t\}$ with adapted, geometrically weighted coefficients; and (c)
terms quadratic in $\{\zeta_t\}$. Assumption~\ref{ass:orth} controls the
centered linear terms. The bias terms are
$O_p(\sqrt T/m)$ without noise and $O_p(\sqrt T/\sqrt m)$ under noise. The
quadratic terms have the same orders because they are bounded by
$C\sqrt T\,T^{-1}\sum_t\|\zeta_t\|^2$. The difference between
$\nabla\widehat q_t$ and $\nabla q_t$ is linear in lagged $\zeta$ and is
included in these two groups.
\end{proof}

\subsection{Proof of Theorem~\ref{thm:asym-normal}}

\begin{proof}
Let $\widehat Q_T$ be the feasible criterion and $Q_T$ the infeasible
criterion using $z_t$, both restricted to the fixed-dimensional staged
parameter $\theta_f$. Write $\Theta_f$ for the compact stable parameter set.
Proposition~\ref{prop:staged-id} gives a unique population minimizer
$\theta_{f,0}$. The map $\theta_f\mapsto\ell_t(\theta_f)$ is continuous
(indeed smooth, by the causal representation \eqref{eq:stage0-causal}, whose
geometric weights are smooth in $\theta_f$), and it is dominated by
\[
  C\left(1+\sum_{k\ge0}\bar q^k\|z_{t-1-k}\|_2\right)^2.
\]
This envelope has finite expectation under Assumption~\ref{ass:dep} by
Minkowski's inequality and geometric summability. Strict stationarity and
ergodicity give the pointwise law of large numbers, and compactness of
$\Theta_f$ plus dominated continuity upgrade it to the uniform law
\[
  \sup_{\theta_f\in\Theta_f}|Q_T(\theta_f)-Q(\theta_f)|=o_p(1),
  \qquad
  Q(\theta_f)=\E\ell_t(\theta_f),
\]
by the standard argument for extremum estimators \citep[Lemma~2.4 and Theorem~2.1]{newey1994}.
On the localized event of Lemma~\ref{lem:logiv-filter}, $\widehat z_t=z_t+\zeta_t$ and the feasible filter is a geometrically weighted perturbation of the infeasible filter. The mean-value theorem and the summability of the filter derivatives give
\[
  \sup_{\theta_f\in\Theta_f}|\widehat Q_T(\theta_f)-Q_T(\theta_f)|
  \le
  C\,T^{-1}\sum_{t=1}^T(1+\|z_t\|+\|\widehat z_t\|)\|\zeta_t\|
  +
  C\,T^{-1}\sum_{t=1}^T\|\zeta_t\|^2
  =o_p(1).
\]
Hence $\widehat\theta_f\to_p\theta_{f,0}$.

For asymptotic normality, the feasible first-order condition and a mean-value expansion around $\theta_{f,0}$ give
\begin{equation}
  \sqrt T(\widehat\theta_f-\theta_{f,0})
  =-\widehat{\mathcal H}_T(\bar\theta)^{-1}
  \sqrt T\nabla\widehat Q_T(\theta_{f,0}),
\end{equation}
where $\bar\theta$ lies between $\widehat\theta_f$ and $\theta_{f,0}$. The same ergodic LLN, applied to the Hessian class in a neighborhood of $\theta_{f,0}$, gives
\[
  \widehat{\mathcal H}_T(\bar\theta)\xrightarrow{p}\mathcal H.
\]
For the score, Lemma~\ref{lem:logiv-filter} gives the feasible residual expansion
\[
  \widehat{\widetilde u}_t(\theta_{f,0})
  =
  \widehat z_t-\widehat q_t(\theta_{f,0})
  =
  \widetilde u_t+\zeta_t-\sum_{k\ge1}G_{t,k}(\theta_{f,0})\zeta_{t-k}+r^u_t,
\]
where $\|G_{t,k}(\theta_{f,0})\|\le C\bar q^k$ and $T^{-1}\sum_t\|r^u_t\|=O_p(T^{-1}\sum_t\|\zeta_t\|^2)+o_p(T^{-1/2})$. Therefore the feasible score admits the expansion
\[
  \sqrt T\,\nabla\widehat Q_T(\theta_{f,0})
  =
  \sqrt T\,\nabla Q_T(\theta_{f,0})
  -
  \frac{2}{\sqrt T}\sum_{t=1}^T
  J_t^\top\left\{\zeta_t-\sum_{k\ge1}G_{t,k}(\theta_{f,0})\zeta_{t-k}\right\}
  +
  R_T,
\]
where $J_t=\nabla q_t(\theta_{f,0})\in\R^{N\times p_f}$ and
\[
  \|R_T\|
  \le
  C\sqrt T\,T^{-1}\sum_{t=1}^T
  \{\|\zeta_t\|^2+\|\zeta_{t-1}\|^2\}
  =
  \begin{cases}
    O_p(\sqrt T/m), & \text{without noise},\\
    O_p(\sqrt T/\sqrt m), & \text{under noise}.
  \end{cases}
\]
Decompose $\zeta_t=b^z_t+\mathring\zeta_t+r^z_t$ into its conditional bias, centered discretization error, and Taylor remainder. For the centered part, collect the coefficient of $\mathring\zeta_s$ across the contemporaneous and lagged terms in the $N\times p_f$ matrix
\[
  H_s
  :=
  J_s-\sum_{\substack{k\ge1\\s+k\le T}}
  G_{s+k,k}(\theta_{f,0})^\top J_{s+k}.
\]
This expression distinguishes the filter weights
$G_{t,k}\in\R^{N\times N}$ from the Jacobians
$J_t\in\R^{N\times p_f}$. Each column $H_{s,\ell}$ is a
$\mathcal G$-measurable vector weight, and, because $p_f$ is fixed,
$T^{-1}\sum_s\|H_s\|_F^2=O_p(1)$ by stationarity,
$\E\|J_t\|_F^2<\infty$, and the geometric decay
$\|G_{t,k}\|_2\le C\bar q^k$. Applying
Assumption~\ref{ass:orth}(i)--(ii) and its conditional-Chebyshev calculation
column by column therefore gives
\[
  T^{-1/2}\sum_{t=1}^T
  J_t^\top\left\{\mathring\zeta_t-\sum_{k\ge1}G_{t,k}(\theta_{f,0})\mathring\zeta_{t-k}\right\}
  =
  \begin{cases}
    O_p(m^{-1/2}), & \text{without noise},\\
    O_p(m^{-1/4}), & \text{under noise},
  \end{cases}
  =o_p(1),
\]
while the bias and Taylor-remainder components are $O_p(\sqrt T/m)$ without noise and $O_p(\sqrt T/\sqrt m)$ under noise. The rate restrictions $T=o(m^2)$ and $T=o(m)$ make these terms and $R_T$ negligible. Hence
\[
  \sqrt T\,\nabla\widehat Q_T(\theta_{f,0})
  =
  \sqrt T\,\nabla Q_T(\theta_{f,0})+o_p(1).
\]
The infeasible first-order condition has the analogous expansion
\[
  \sqrt T(\widetilde\theta_f-\theta_{f,0})
  =
  -\mathcal H^{-1}\sqrt T\,\nabla Q_T(\theta_{f,0})+o_p(1).
\]
The feasible Hessian converges to the same matrix $\mathcal H$. Subtracting
the two first-order expansions and using the preceding score equivalence gives
\[
  \sqrt T(\widehat\theta_f-\widetilde\theta_f)=o_p(1),
\]
which establishes \eqref{eq:first-order-equivalence}.

At the true parameter,
\[
  \nabla\ell_t(\theta_{f,0})
  =
  -2\nabla q_t(\theta_{f,0})^\top \widetilde u_t,
\]
where $\nabla q_t(\theta_{f,0})$ is $\F_{t-1}$-measurable and $\widetilde u_t$ is a martingale difference. The score moment imposed in Assumption~\ref{ass:dep} gives square integrability and the Lindeberg condition. Hence the summands form a stationary, ergodic martingale-difference sequence with variance matrix $\mathcal V_0$. The martingale CLT \citep[Corollary~3.1]{hall1980}, applied through Cram\'er--Wold, gives
\[
  \sqrt T\,\nabla Q_T(\theta_{f,0})
  \xrightarrow{d}\mathcal N(0,\mathcal V_0).
\]
Combining the score expansion, Hessian convergence, and Slutsky's theorem yields
\[
  \sqrt T(\widehat\theta_f-\theta_{f,0})
  \xrightarrow{d}
  \mathcal N(0,\mathcal H^{-1}\mathcal V_0\mathcal H^{-1}).
\]
The sandwich estimator is consistent by the same uniform LLN applied to the sample Hessian and the sample outer product of scores evaluated at $\widehat\theta_f$; this fixed-dimensional score is a martingale difference, so no HAC correction is needed for Theorem~\ref{thm:asym-normal}.
\end{proof}

\section{Proofs for Network Results}
\label{app:network-proofs}

\subsection{Proof of Proposition~\ref{prop:vma}}

\begin{proof}
Let
\[
  K=\beta I_N+M,
  \qquad
  A=\gamma I_N+K=(\gamma+\beta)I_N+M.
\]
At the true parameter, $\widetilde u_t=z_t-q_t$ satisfies
$\E(\widetilde u_t\mid\F_{t-1})=0$. Substituting
$q_{t-1}=z_{t-1}-\widetilde u_{t-1}$ into
the recentered recursion \eqref{eq:q-recursion} gives the observed log-IV
representation
\begin{equation}
  z_t
  =
  \widetilde\omega+A z_{t-1}+\widetilde u_t-\gamma\widetilde u_{t-1}.
  \label{eq:arma-proof}
\end{equation}
Thus the log-volatility process is a stable vector ARMA(1,1) recursion with
innovation difference $w_t=\widetilde u_t-\gamma\widetilde u_{t-1}$. Under \eqref{eq:stability},
Gelfand's formula $\varrho(A)=\lim_k\|A^k\|_2^{1/k}$ implies that for any
$\bar q\in(\varrho(A),1)$ there is $C<\infty$ with $\|A^k\|_2\le C\bar q^k$ for all
$k\ge0$ \citep[Corollary~5.6.13]{horn2013}. The moment condition in the
proposition gives $\E\|\widetilde u_t\|_2<\infty$, so the series
\begin{equation}
  z_t
  =
  (I_N-A)^{-1}\widetilde\omega
  +
  \sum_{k=0}^{\infty}A^k(\widetilde u_{t-k}-\gamma\widetilde u_{t-k-1})
  \label{eq:vma-proof}
\end{equation}
converges absolutely almost surely and in $L^1$: indeed
\[
  \sum_k\E\|A^kw_{t-k}\|_2
  \le
  C(1+|\gamma|)\E\|\widetilde u_t\|_2\sum_k\bar q^k
  <\infty,
\]
so the
partial sums are almost surely Cauchy by monotone convergence applied to
$\sum_k\|A^kw_{t-k}\|_2$. Direct substitution shows \eqref{eq:vma-proof}
solves \eqref{eq:arma-proof}, and the solution is strictly stationary because
it is a fixed measurable function of the stationary sequence
$\{\widetilde u_{t-k}\}_{k\ge0}$. It is also the unique strictly stationary solution: if
$z_t'$ is another one, then $k$ backward substitutions of
\eqref{eq:arma-proof} give $z_t-z_t'=A^k(z_{t-k}-z_{t-k}')$, and since
$\|z_{t-k}-z_{t-k}'\|_2$ is tight (by stationarity) while $\|A^k\|_2\to0$
geometrically, $z_t=z_t'$ almost surely \citep[cf.][Section~3.1]{brockwell1991}.

For connectedness, consider a one-unit perturbation in the lagged cross-network
input entering through $Mz_{t-1}$ while holding own feedback fixed. The
  horizon-$h$ response is $A^hM$, so the cumulative response over the first
  $H$ horizons, $0,\ldots,H-1$, is
  \[
  \Phi_H=\sum_{h=0}^{H-1}A^hM.
\]
The geometric bound implies $\Phi_H\to (I_N-A)^{-1}M$, because the Neumann
series $\sum_{h\ge0}A^h$ is absolutely convergent. Hence the finite-horizon and
infinite-horizon response matrices are well defined. Connectedness measures
formed from off-diagonal entries or shares of these response matrices therefore
exist whenever their normalizing denominators are nonzero.
\end{proof}

\subsection{Primitive sufficient conditions for projected-norm regularity}
\label{sec:projected-primitives}

The regularity properties collected in Assumption~\ref{ass:network}(ii) are
high-level. The following lemma derives the closedness, injectivity, and
effective-rank parts from the diagonal leakage of the standard tangent space.
It also exhibits the canonical subgradient in the empirically relevant
\emph{hollow} case, in which the minimum completion itself has zero diagonal.
Decomposability, the compatible-subgradient property, and the spectral tangent
bound remain imposed in Assumption~\ref{ass:network}(ii).

\begin{lemma}
\label{lem:projected-primitives}
Let $L_0=U_0\Sigma_0V_0^\top$ be the minimum-nuclear-norm completion of $R_0$, let $\T_{\mathrm{std}}$ be its standard tangent space, and define the diagonal leakage
\[
  \delta_\T=\max\{\|\diag(H)\|_F:H\in\T_{\mathrm{std}},\ \|H\|_F=1\}.
\]
\begin{enumerate}[label=(\roman*)]
\item If $\delta_\T<1$, then $P_{\mathrm{off}}$ is injective on $\T_{\mathrm{std}}$, $\T=P_{\mathrm{off}}(\T_{\mathrm{std}})$ is a closed subspace, and every $H\in\T$ has a unique tangent completion $\widetilde H\in\T_{\mathrm{std}}$ with
\[
  \|\widetilde H\|_F\le(1-\delta_\T^2)^{-1/2}\|H\|_F .
\]
\item Under the same condition, the effective-rank bound holds with an explicit constant:
\[
  \Omega_*(H)\le\sqrt{2r}\,(1-\delta_\T^2)^{-1/2}\,\|H\|_F,
  \qquad H\in\T .
\]
\item (Hollow case.) If in addition $\diag(L_0)=0$ and $\diag(U_0V_0^\top)=0$, then $R_0=L_0$, $\Omega_*(R_0)=\|L_0\|_*$, and $E_0=U_0V_0^\top\in\T$ is a canonical projected subgradient: $\Omega_*^\circ(E_0)=1$, $\langle E_0,R_0\rangle=\Omega_*(R_0)$, and $\|E_0\|_\infty\le\xi(\T)$.
\end{enumerate}
\end{lemma}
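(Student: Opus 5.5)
Part (i) rests on an orthogonal splitting. For $H\in\T_{\mathrm{std}}$, the Frobenius-orthogonal decomposition $H=P_{\mathrm{off}}(H)+\diag(H)$ gives $\|P_{\mathrm{off}}(H)\|_F^2=\|H\|_F^2-\|\diag(H)\|_F^2\ge(1-\delta_\T^2)\|H\|_F^2$. This is the whole engine of part (i). Injectivity of $P_{\mathrm{off}}$ on $\T_{\mathrm{std}}$ is immediate from the lower bound. Since $\T_{\mathrm{std}}$ is a finite-dimensional subspace, its linear image $\T=P_{\mathrm{off}}(\T_{\mathrm{std}})$ is a subspace of $\R^{N\times N}$ and is automatically closed. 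I would define the tangent completion $\widetilde H$ as the unique preimage of $H$ in $\T_{\mathrm{std}}$. The norm bound $\|\widetilde H\|_F\le(1-\delta_\T^2)^{-1/2}\|H\|_F$ is the same inequality applied to $\widetilde H$. The maximum defining $\delta_\T$ is attained by compactness of the unit sphere, so the strict inequality $\delta_\T<1$ is what is used.

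For part (ii), take $H\in\T$ and its completion $\widetilde H=U_0X^\top+YV_0^\top$. Because $P_{\mathrm{off}}(\widetilde H)=H$, the definition \eqref{eq:projected-nuclear} gives $\Omega_*(H)\le\|\widetilde H\|_*$. Every element of $\T_{\mathrm{std}}$ has rank at most $2r$, since it is a sum of two rank-$\le r$ terms. Hence $\|\widetilde H\|_*\le\sqrt{2r}\|\widetilde H\|_F$ by Cauchy--Schwarz on the singular values. Combining this with part (i) yields the stated bound. It also shows that $C_{\mathrm{tan}}=(1-\delta_\T^2)^{-1/2}$ works in Assumption~\ref{ass:network}(ii).

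Part (iii) is the hollow case.
\begin{itemize}
\item \emph{Identity of $R_0$ and the norm value.} If $\diag(L_0)=0$, then $R_0=P_{\mathrm{off}}(L_0)=L_0$. Since $L_0$ is the minimum completion, $\Omega_*(R_0)=\|L_0\|_*$.
\item \emph{Membership in $\T$.} Set $E_0=U_0V_0^\top$. It lies in $\T_{\mathrm{std}}$ (take $X=V_0/2$, $Y=U_0/2$). It is hollow by hypothesis, so $E_0=P_{\mathrm{off}}(E_0)\in\T$.
\item \emph{Dual norm.} Lemma~\ref{lem:projected-norm} gives $\Omega_*^\circ(E_0)=\|P_{\mathrm{off}}E_0\|_2=\|U_0V_0^\top\|_2=1$. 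If $r=0$ the statement is vacuous, so assume $r\ge1$.
\item \emph{Alignment.} $\langle E_0,R_0\rangle=\tr(V_0U_0^\top U_0\Sigma_0V_0^\top)=\tr\Sigma_0=\|L_0\|_*=\Omega_*(R_0)$. Together with the dual-norm value, this shows $E_0\in\partial\Omega_*(R_0)$ by the standard characterization of norm subgradients.
\item \emph{Entrywise bound.} $E_0\in\T$ and $\|E_0\|_2\le1$, so $E_0$ is feasible in the maximization defining $\xi(\T)$. Therefore $\|E_0\|_\infty\le\xi(\T)$.
\end{itemize}

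No step is deep. The main obstacle I expect is bookkeeping in part (iii): the pairing $\langle\cdot,\cdot\rangle$ and the dual norm must be taken on the off-diagonal space. Because $E_0$ is hollow, the pairing with $R_0$ and with $L_0$ coincide, and I would check this explicitly.
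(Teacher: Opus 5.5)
Your proposal is correct and follows the paper's proof essentially step for step. In (i) you use the Frobenius-orthogonal diagonal/off-diagonal split; in (ii) the rank-$2r$ Cauchy--Schwarz bound applied to the feasible completion $\widetilde H$; and in (iii) a direct check of hollowness, the dual norm via Lemma~\ref{lem:projected-norm}, the pairing $\tr\Sigma_0$, and the definition of $\xi(\T)$. The only caveat concerns your side remark on $C_{\mathrm{tan}}$: to serve as the fixed constant in Assumption~\ref{ass:network}(ii), $(1-\delta_\T^2)^{-1/2}$ needs $\delta_\T$ bounded away from one uniformly in $N$, as the paper notes.
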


\begin{proof}
(i) For $H\in\T_{\mathrm{std}}$, orthogonality of the diagonal and off-diagonal parts gives $\|P_{\mathrm{off}}(H)\|_F^2=\|H\|_F^2-\|\diag(H)\|_F^2\ge(1-\delta_\T^2)\|H\|_F^2$. Hence $P_{\mathrm{off}}(H)=0$ forces $H=0$, injectivity holds, and the displayed bound follows by inverting; $\T$ is closed as the image of a finite-dimensional subspace under a linear map.

(ii) Every element of $\T_{\mathrm{std}}$ has rank at most $2r$, so $\|\widetilde H\|_*\le\sqrt{2r}\|\widetilde H\|_F$. Since $\widetilde H$ is a feasible completion of $H$, $\Omega_*(H)\le\|\widetilde H\|_*\le\sqrt{2r}\|\widetilde H\|_F\le\sqrt{2r}(1-\delta_\T^2)^{-1/2}\|H\|_F$ by part (i). The theorems that invoke the effective-rank bound absorb the factor $(1-\delta_\T^2)^{-1/2}$ into their constants provided $\delta_\T$ is bounded away from one.

(iii) If $\diag(L_0)=0$, then $R_0=P_{\mathrm{off}}(L_0)=L_0$, and $\Omega_*(R_0)\le\|L_0\|_*$ with equality by minimality of the completion. The matrix $E_0=U_0V_0^\top$ lies in $\T_{\mathrm{std}}$ and is zero-diagonal by hypothesis, hence $E_0=P_{\mathrm{off}}(E_0)\in\T$. By Lemma~\ref{lem:projected-norm} and $\diag(E_0)=0$, $\Omega_*^\circ(E_0)=\|P_{\mathrm{off}}(E_0)\|_2=\|U_0V_0^\top\|_2=1$. The pairing satisfies
\[
  \langle E_0,R_0\rangle=\langle U_0V_0^\top,L_0\rangle=\tr(V_0U_0^\top U_0\Sigma_0V_0^\top)=\tr(\Sigma_0)=\|L_0\|_*=\Omega_*(R_0),
\]
so $E_0$ attains the defining equality of the subdifferential characterization in the proof of Lemma~\ref{lem:projected-norm}. Finally $\|E_0\|_\infty\le\xi(\T)$ holds by the definition of $\xi(\T)$ applied to $E_0\in\T$ with $\|E_0\|_2=1$.
\end{proof}

\subsection{Proof of Lemma~\ref{lem:projected-norm}}

\begin{proof}
The feasible set $\{L:P_{\mathrm{off}}(L)=R\}$ is a nonempty closed affine subspace, and the nuclear norm is convex, positively homogeneous, and coercive, so the infimum in \eqref{eq:projected-nuclear} is attained by Weierstrass' theorem; the quotient construction therefore defines a finite convex gauge on the off-diagonal linear space, and subadditivity and homogeneity are inherited from $\|\cdot\|_*$ because the feasible sets add: if $P_{\mathrm{off}}(L_1)=R_1$ and $P_{\mathrm{off}}(L_2)=R_2$, then $P_{\mathrm{off}}(L_1+L_2)=R_1+R_2$. Definiteness holds because $\Omega_*(R)=0$ forces the attaining completion to satisfy $\|L\|_*=0$, hence $L=0$ and $R=P_{\mathrm{off}}(L)=0$. Thus $\Omega_*$ is a norm.

For the dual norm, use the adjoint identity
\[
  \langle Z,P_{\mathrm{off}}(L)\rangle
  =
  \langle P_{\mathrm{off}}(Z),L\rangle .
\]
Then
\[
  \Omega_*^\circ(Z)
  =
  \sup_{\Omega_*(R)\le1}\langle Z,R\rangle
  =
  \sup_{\|L\|_*\le1}\langle P_{\mathrm{off}}(Z),L\rangle
  =
  \|P_{\mathrm{off}}(Z)\|_2,
\]
where the last equality is the usual nuclear/operator norm duality.

The subdifferential of a quotient norm is
also characterized by
\[
  \partial\Omega_*(R)
  =
  \{Z:\Omega_*^\circ(Z)\le1,\ \langle Z,R\rangle=\Omega_*(R)\}.
\]
The existence of the canonical exposed subgradient, decomposability relative
to $(\T,\T^\perp)$, the compatible-subgradient property, and the
effective-rank bound are imposed as projected-norm regularity conditions in
Assumption~\ref{ass:network}.
\end{proof}

\subsection{Proof of Lemma~\ref{lem:trans}}

\begin{proof}
Let $P_\T$ and $P_\Omega$ denote the orthogonal projections onto $\T$ and $\OmegaS$. For any $Z\in\OmegaS$, the definitions of $\xi(\T)$ and $\mu(\OmegaS)$ imply
\[
  \|P_\Omega P_\T Z\|_\infty
  \le \|P_\T Z\|_\infty
  \le \xi(\T)\|P_\T Z\|_2.
\]
By the spectral tangent-projection bound imposed in Assumption~\ref{ass:network}(ii), $\|P_\T Z\|_2\le 2\|Z\|_2$ (this is automatic for the unprojected rank-$r$ tangent space, since $P_{\T_{\mathrm{std}}}(Z)=P_{U_0}Z+(I-P_{U_0})ZP_{V_0}$ is a sum of two spectral contractions). By the definition of $\mu(\OmegaS)$, $\|Z\|_2\le \mu(\OmegaS)\|Z\|_\infty$ for $Z\in\OmegaS$. Hence
\[
  \|P_\Omega P_\T Z\|_\infty
  \le 2\xi(\T)\mu(\OmegaS)\|Z\|_\infty.
\]
If $2\xi(\T)\mu(\OmegaS)<1$, then $P_\Omega P_\T$ is a contraction on $(\OmegaS,\|\cdot\|_\infty)$. Therefore $I-P_\Omega P_\T$ is invertible on $\OmegaS$ with inverse
\[
  (I-P_\Omega P_\T)^{-1}=\sum_{k=0}^{\infty}(P_\Omega P_\T)^k.
\]
Finally, if $Z\in\T\cap\OmegaS$, then $Z=P_\Omega P_\T Z$, and the contraction inequality gives
\[
  \|Z\|_\infty\le 2\xi(\T)\mu(\OmegaS)\|Z\|_\infty.
\]
Since the contraction constant is strictly less than one, $Z=0$. Thus $\T\cap\OmegaS=\{0\}$.
\end{proof}

\subsection{Proof of Theorem~\ref{thm:separation}}

\begin{proof}
Lemma~\ref{lem:trans} gives $\T\cap\OmegaS=\{0\}$. It remains to verify exact recovery by the noiseless convex program in the identifiable variables $(R,S)$. Let $\lambda_R>0$ and write $\gamma_\lambda=\lambda_S/\lambda_R$. By homogeneity set $\lambda_R=1$ in the construction below.

Let $E_0\in\partial\Omega_*(R_0)\cap\T$ be the canonical projected subgradient from Assumption~\ref{ass:network}, normalized so that $\Omega_*^\circ(E_0)=1$ and $\|E_0\|_\infty\le\xi(\T)$. We seek a dual certificate $Q=A+B$ with $A\in\T$ and $B\in\OmegaS$ satisfying
\begin{equation}
  P_\T Q=\lambda_R E_0,\quad
  \Omega_*^\circ(P_{\T^\perp}Q)<\lambda_R,\quad
  P_\Omega Q=\lambda_S\operatorname{sign}(S_0),\quad
  \|P_{\Omega^c}Q\|_\infty<\lambda_S.
  \label{eq:certificate-conditions}
\end{equation}
Since $A\in\T$ and $B\in\OmegaS$, we have $P_\T Q=A+P_\T B$ and $P_\Omega Q=P_\Omega A+B$, so the two equality constraints are equivalent to
\[
  A+P_\T B=E_0,
  \qquad
  P_\Omega A+B=\gamma_\lambda\operatorname{sign}(S_0).
\]
Eliminating $A=E_0-P_\T B$ gives
\[
  (I-P_\Omega P_\T)B
  =
  \gamma_\lambda\operatorname{sign}(S_0)-P_\Omega E_0
  =:D\in\OmegaS .
\]
By Lemma~\ref{lem:trans}, $I-P_\Omega P_\T$ is invertible on $(\OmegaS,\|\cdot\|_\infty)$ with Neumann-series inverse. Since
\[
  \|P_\Omega P_\T Z\|_\infty\le2\xi\mu\|Z\|_\infty,
  \qquad Z\in\OmegaS,
\]
\[
  \|B\|_\infty
  \le
  \sum_{k\ge0}(2\xi\mu)^k\|D\|_\infty
  \le
  \frac{\gamma_\lambda+\xi}{1-2\xi\mu},
\]
where $\xi=\xi(\T)$, $\mu=\mu(\OmegaS)$, and $\|D\|_\infty\le\gamma_\lambda\|\operatorname{sign}(S_0)\|_\infty+\|E_0\|_\infty\le\gamma_\lambda+\xi$.

For the first dual-feasibility inequality, note $P_{\T^\perp}Q=P_{\T^\perp}(A+B)=P_{\T^\perp}B=B-P_\T B$. Both $B\in\OmegaS$ and $P_\T B\in\T$ are zero-diagonal, so $\Omega_*^\circ(P_{\T^\perp}B)=\|P_{\mathrm{off}}(B-P_\T B)\|_2=\|B-P_\T B\|_2$ by Lemma~\ref{lem:projected-norm}. Using the spectral tangent-projection bound of Assumption~\ref{ass:network}(ii) and $\|B\|_2\le\mu\|B\|_\infty$ for $B\in\OmegaS$,
\[
  \Omega_*^\circ(P_{\T^\perp}Q)
  \le\|B\|_2+\|P_\T B\|_2
  \le3\|B\|_2
  \le3\mu\|B\|_\infty
  \le
  \frac{3\mu(\gamma_\lambda+\xi)}{1-2\xi\mu},
\]
which is strictly smaller than one (recall the normalization $\lambda_R=1$) if and only if
\[
  \gamma_\lambda < \frac{1-5\xi\mu}{3\mu}.
\]
For the second, since $P_{\Omega^c}Q=P_{\Omega^c}A$ and $\|P_\T B\|_\infty\le\xi\|P_\T B\|_2\le2\xi\|B\|_2\le2\xi\mu\|B\|_\infty$,
\[
  \|P_{\Omega^c}Q\|_\infty
  \le
  \|E_0\|_\infty+\|P_\T B\|_\infty
  \le
  \xi+2\xi\mu\|B\|_\infty
  \le
  \frac{\xi(1+2\mu\gamma_\lambda)}{1-2\xi\mu},
\]
which is strictly smaller than $\gamma_\lambda$ if and only if $\xi+2\xi\mu\gamma_\lambda<\gamma_\lambda(1-2\xi\mu)$, that is,
\[
  \gamma_\lambda>\frac{\xi}{1-4\xi\mu}.
\]
Writing $x=\xi\mu$, the interval \eqref{eq:tuning-interval} is nonempty if and only if $3x/(1-4x)<1-5x$, that is, $20x^2-12x+1>0$ with $x<1/4$; the roots of the quadratic are $x=1/10$ and $x=1/2$, so the interval is nonempty exactly when $\xi\mu<1/10$, and a certificate exists for every $\gamma_\lambda$ in \eqref{eq:tuning-interval}.

Now consider any feasible perturbation $(R_0+\Delta,S_0-\Delta)$ with $\Delta\ne0$; feasibility forces $\Delta$ to be zero-diagonal. By decomposability of $\Omega_*$ relative to $(\T,\T^\perp)$ from Assumption~\ref{ass:network}(ii), there exists $F\in\T^\perp$ with $\Omega_*^\circ(F)\le1$, $\langle F,\Delta\rangle=\Omega_*(P_{\T^\perp}\Delta)$, and $E_0+F\in\partial\Omega_*(R_0)$; similarly there exists $G$ supported on $\Omega^c$ with $\|G\|_\infty\le1$, $\langle G,-\Delta\rangle=\|P_{\Omega^c}\Delta\|_1$, and $\operatorname{sign}(S_0)+G\in\partial\|S_0\|_1$. The subgradient inequalities then give
\begin{align*}
  \Omega_*(R_0+\Delta)
  &\ge
  \Omega_*(R_0)+\langle E_0,\Delta\rangle
  +\Omega_*(P_{\T^\perp}\Delta),\\
  \|S_0-\Delta\|_1
  &\ge
  \|S_0\|_1-\langle \operatorname{sign}(S_0),P_\Omega\Delta\rangle
  +\|P_{\Omega^c}\Delta\|_1 .
\end{align*}
Multiply the first by $\lambda_R$, the second by $\lambda_S$, and add. By the certificate equalities $\lambda_RE_0=P_\T Q$ and $\lambda_S\operatorname{sign}(S_0)=P_\Omega Q$,
\[
  \lambda_R\langle E_0,\Delta\rangle-\lambda_S\langle\operatorname{sign}(S_0),P_\Omega\Delta\rangle
  =\langle P_\T Q,\Delta\rangle-\langle P_\Omega Q,\Delta\rangle
  =-\langle P_{\T^\perp}Q,P_{\T^\perp}\Delta\rangle
   +\langle P_{\Omega^c}Q,P_{\Omega^c}\Delta\rangle,
\]
using $\langle P_\T Q,\Delta\rangle=\langle Q,\Delta\rangle-\langle P_{\T^\perp}Q,P_{\T^\perp}\Delta\rangle$ and $\langle P_\Omega Q,\Delta\rangle=\langle Q,\Delta\rangle-\langle P_{\Omega^c}Q,P_{\Omega^c}\Delta\rangle$. Bounding the two pairings by the dual-norm inequalities $|\langle P_{\T^\perp}Q,P_{\T^\perp}\Delta\rangle|\le\Omega_*^\circ(P_{\T^\perp}Q)\,\Omega_*(P_{\T^\perp}\Delta)$ and $|\langle P_{\Omega^c}Q,P_{\Omega^c}\Delta\rangle|\le\|P_{\Omega^c}Q\|_\infty\|P_{\Omega^c}\Delta\|_1$, the objective change satisfies
\begin{align*}
  &\lambda_R\{\Omega_*(R_0+\Delta)-\Omega_*(R_0)\}
  +\lambda_S\{\|S_0-\Delta\|_1-\|S_0\|_1\}\\
  &\qquad\ge
  \{\lambda_R-\Omega_*^\circ(P_{\T^\perp}Q)\}\,\Omega_*(P_{\T^\perp}\Delta)
  +\{\lambda_S-\|P_{\Omega^c}Q\|_\infty\}\,\|P_{\Omega^c}\Delta\|_1 .
\end{align*}
Both coefficients are positive by \eqref{eq:certificate-conditions}. The
objective therefore increases unless $P_{\T^\perp}\Delta=0$ and
$P_{\Omega^c}\Delta=0$. In that case
$\Delta\in\T\cap\OmegaS=\{0\}$ by Lemma~\ref{lem:trans}. Hence
$(R_0,S_0)$ is the unique optimum of the stated convex program. Its local
algebraic identifiability on the true tangent space and support follows
separately from Lemma~\ref{lem:trans}.
\end{proof}

\subsection{Proof of Lemma~\ref{lem:cone-equivalence}}

\begin{proof}
For $\Delta=\Delta_R+\Delta_S$,
\[
  \|\Delta\|_F^2
  =
  \|\Delta_R\|_F^2+\|\Delta_S\|_F^2
  +2\langle\Delta_R,\Delta_S\rangle .
\]
The cone transversality condition \eqref{eq:cone-transversality} implies
\[
  \|\Delta\|_F^2
  \ge
  (1-\kappa_{\mathrm{cone}})\{\|\Delta_R\|_F^2+\|\Delta_S\|_F^2\},
\]
which is \eqref{eq:cone-equivalence}.
\end{proof}

\subsection{Proof of Theorem~\ref{thm:error}}

\begin{proof}
Let $\Delta_R=\widehat R-R_0$, $\Delta_S=\widehat S-S_0$, and $\Delta_M=\Delta_R+\Delta_S$. By the basic objective inequality assumed in Theorem~\ref{thm:error} (which follows, in particular, from global optimality) and feasibility of $(R_0,S_0)$,
\[
  \mathcal L(\widehat R,\widehat S)-\mathcal L(R_0,S_0)
  +
  \lambda_R\{\Omega_*(\widehat R)-\Omega_*(R_0)\}
  +
  \lambda_S(\|\widehat S\|_{1,a}-\|S_0\|_{1,a})
  \le0,
\]
where $\mathcal L$ is the quadratic loss. Let
\[
  G=\nabla_M\mathcal L(R_0+S_0)
\]
be the score with respect to the combined matrix $M=R+S$. At the true
first-stage quantities, $G=-2G_T^{\mathrm f}$; replacing them by preliminary
estimates adds the remainder controlled by Assumption~\ref{ass:profile}.
Thus the factor of two is absorbed by the sufficiently large constants in
the stated penalty levels. Work on the event
\[
  \Omega_*^\circ(G)\le \lambda_R/2,
  \qquad
  \|G\|_{\infty,a}^*\le \lambda_S/2,
\]
whose probability tends to one by the score conditions in Assumption~\ref{ass:hd-concentration}, Assumption~\ref{ass:profile}, and the stated penalty levels. Convexity gives
\[
  \mathcal L(\widehat R,\widehat S)-\mathcal L(R_0,S_0)
  \ge
  \langle G,\Delta_M\rangle .
\]
Using $\langle G,\Delta_R\rangle\ge-\Omega_*^\circ(G)\Omega_*(\Delta_R)$ and
$\langle G,\Delta_S\rangle\ge-\|G\|_{\infty,a}^*\|\Delta_S\|_{1,a}$, the basic
inequality implies
\begin{align*}
  &\lambda_R\{\Omega_*(R_0)-\Omega_*(R_0+\Delta_R)\}
  +\lambda_S(\|S_0\|_{1,a}-\|S_0+\Delta_S\|_{1,a}) \\
  &\hspace{1.5cm}\ge
  \langle G,\Delta_M\rangle
  \ge
  -\frac{\lambda_R}{2}\Omega_*(\Delta_R)
  -\frac{\lambda_S}{2}\|\Delta_S\|_{1,a} .
\end{align*}
For the projected nuclear norm, let $\Delta_{R,\T}=P_\T\Delta_R$ and
$\Delta_{R,\T^\perp}=P_{\T^\perp}\Delta_R$. The subspace pair
$(\T,\T^\perp)$ is decomposable for $\Omega_*$, so
\[
  \Omega_*(R_0+\Delta_R)
  \ge
  \Omega_*(R_0)+\Omega_*(\Delta_{R,\T^\perp})-\Omega_*(\Delta_{R,\T}) .
\]
Similarly, because $S_0$ is supported on $\Omega$,
\[
  \|S_0+\Delta_S\|_{1,a}
  \ge
  \|S_0\|_{1,a}+\|P_{\Omega^c}\Delta_S\|_{1,a}-\|P_\Omega\Delta_S\|_{1,a} .
\]
Substituting these two decomposability inequalities into the left side of the basic inequality, and the triangle inequalities $\Omega_*(\Delta_R)\le\Omega_*(P_\T\Delta_R)+\Omega_*(P_{\T^\perp}\Delta_R)$ and $\|\Delta_S\|_{1,a}\le\|P_\Omega\Delta_S\|_{1,a}+\|P_{\Omega^c}\Delta_S\|_{1,a}$ into the right side, and collecting terms gives
\[
  \frac{\lambda_R}{2}\,\Omega_*(P_{\T^\perp}\Delta_R)
  +\frac{\lambda_S}{2}\,\|P_{\Omega^c}\Delta_S\|_{1,a}
  \le
  \frac{3\lambda_R}{2}\,\Omega_*(P_\T\Delta_R)
  +\frac{3\lambda_S}{2}\,\|P_\Omega\Delta_S\|_{1,a},
\]
which is exactly the weighted penalty constraint in \eqref{eq:joint-cone}. Because both $\widehat S$ (by the constraint \eqref{eq:row-degree}) and $S_0$ (by Assumption~\ref{ass:network}(iii)) have row degree at most $\bar d_N=2d_N$ and $d_N$ respectively, each row of $\Delta_S$ has at most $3d_N$ nonzero entries, and row-wise Cauchy--Schwarz gives $\sum_i\|\Delta_{S,i\cdot}\|_1^2\le4d_N\|\Delta_S\|_F^2$.
Thus $(\Delta_R,\Delta_S)\in\mathcal C_\oplus$. On this cone, the
restricted-curvature condition \eqref{eq:rsc} in
Assumption~\ref{ass:hd-concentration} gives
\[
  \mathcal L(\widehat R,\widehat S)-\mathcal L(R_0,S_0)
  -\langle G,\Delta_M\rangle
  =\frac1T\sum_{t=1}^T\|\Delta_M\widehat x_t\|_2^2
  \ge
  \underline\phi\|\Delta_M\|_F^2,
\]
where the equality holds exactly because $\mathcal L$ is quadratic in $M$. Combining this with the basic inequality, the score bounds, the decomposability inequalities, and dropping the (nonnegative) $\T^\perp$ and $\Omega^c$ terms on the left yields
\[
  \underline\phi\|\Delta_M\|_F^2
  \le
  \frac{3\lambda_R}{2}\,\Omega_*(P_\T\Delta_R)+\frac{3\lambda_S}{2}\,\|P_\Omega\Delta_S\|_{1,a}.
\]
By Assumption~\ref{ass:network}(ii) and $|\A|=s_N$,
\[
  \Omega_*(P_\T\Delta_R)
  \le C_{\mathrm{tan}}\sqrt{2r}\,\|\Delta_R\|_F,
  \qquad
  \|P_\Omega\Delta_S\|_{1,a}\le C_a\sqrt{s_N}\,\|\Delta_S\|_F,
\]
where the second bound is Cauchy--Schwarz on the $s_N$ active entries. By the Cauchy--Schwarz inequality on the two-term sum and Lemma~\ref{lem:cone-equivalence},
\[
  \underline\phi\|\Delta_M\|_F^2
  \le
  C\bigl(\lambda_R^2r+\lambda_S^2s_N\bigr)^{1/2}
  \bigl(\|\Delta_R\|_F^2+\|\Delta_S\|_F^2\bigr)^{1/2}
  \le
  \frac{C\bigl(\lambda_R^2r+\lambda_S^2s_N\bigr)^{1/2}}
       {(1-\kappa_{\mathrm{cone}})^{1/2}}\,\|\Delta_M\|_F,
\]
so $\|\Delta_M\|_F\le C\underline\phi^{-1}(1-\kappa_{\mathrm{cone}})^{-1/2}(\lambda_R^2r+\lambda_S^2s_N)^{1/2}$. A final application of Lemma~\ref{lem:cone-equivalence} converts this into
\[
  \|\Delta_R\|_F^2+\|\Delta_S\|_F^2
  \le
  \frac{1}{1-\kappa_{\mathrm{cone}}}\,\|\Delta_M\|_F^2
  \le
  \frac{C}{\underline\phi^2(1-\kappa_{\mathrm{cone}})^2}
  \{\lambda_R^2r+\lambda_S^2s_N\},
\]
where the final constant $C$ depends only on fixed numerical constants,
$C_{\mathrm{tan}}$, and the weight bounds $(c_a,C_a)$. This proves the stated
error bound.
\end{proof}

\subsection{Primitive content of the irrepresentability condition}
\label{sec:irrep}

Theorem~\ref{thm:oracle} takes the population and empirical irrepresentability conditions as hypotheses. The following lemma gives a primitive population condition and a deterministic perturbation argument conditional on the estimated-defactored Gram rate in Assumption~\ref{ass:defactor}. This explains why factor adjustment is useful for support selection without claiming that the POET rate follows from the network assumptions alone.

\begin{lemma}
\label{lem:irrep}
Write the filtered predictor design as $x_t=\Lambda f_t^x+e_t^x$ with $\Gamma_x=\Lambda\Sigma_f^x\Lambda^\top+\Sigma_{e,x}$, as in Assumption~\ref{ass:network}(iv). A pervasive factor can make the raw design violate irrepresentability because its cross-block Gram entries need not vanish. Work instead with the population-defactored design $\widetilde x_t^0=(I-P_\Lambda)x_t$ and $\widetilde\Gamma=(I-P_\Lambda)\Gamma_x(I-P_\Lambda)=(I-P_\Lambda)\Sigma_{e,x}(I-P_\Lambda)$, estimated using $(I-\widehat P_{r_f})\widehat x_t$. Rows with $\A_i=\varnothing$ are interpreted using the vacuous convention in Theorem~\ref{thm:oracle}. For nonempty active rows, suppose uniformly in $i$ that the active predictor block has diagonal-dominance gap $\sigma_-^2-\theta>0$ and bounded cross-leakage $\max_{j\in\A_i^c}\sum_{k\in\A_i}|\widetilde\Gamma_{jk}|\le\ell\rho_e$, with the weighted margin
\[
  \frac{C_a}{c_a}\frac{\ell\rho_e}{\sigma_-^2-\theta}\le1-\varepsilon .
\]
Then the row-wise population weighted irrepresentability condition holds:
\[
  \max_i\left\|D_{i,\A_i^c}^{-1}\widetilde\Gamma_{\A_i^c\A_i}
  \widetilde\Gamma_{\A_i\A_i}^{-1}D_{i,\A_i}\right\|_{\op,\infty}
  \le1-\varepsilon .
\]
Under the estimated-defactored Gram condition in Assumption~\ref{ass:defactor},
its empirical version holds with margin $1-\varepsilon/2$ with probability
tending to one provided $d_N a_{\mathrm{df},T}=o(\varepsilon)$.
\end{lemma}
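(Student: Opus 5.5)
The plan has two parts. First we prove the population irrepresentability bound by a row-wise matrix-norm argument. Then we transfer it to the empirical Gram matrix by perturbation.

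\emph{Population bound.} Fix a response row $i$ with $\A_i\ne\varnothing$ and write $\Gamma_{AA}=\widetilde\Gamma_{\A_i\A_i}$ and $\Gamma_{A^cA}=\widetilde\Gamma_{\A_i^c\A_i}$. Submultiplicativity of $\|\cdot\|_{\op,\infty}$ gives
\[
\|D_{i,\A_i^c}^{-1}\Gamma_{A^cA}\Gamma_{AA}^{-1}D_{i,\A_i}\|_{\op,\infty}
\le \|D_{i,\A_i^c}^{-1}\|_{\op,\infty}\,\|\Gamma_{A^cA}\|_{\op,\infty}\,\|\Gamma_{AA}^{-1}\|_{\op,\infty}\,\|D_{i,\A_i}\|_{\op,\infty}.
\]
The weight bounds control the outer factors by $c_a^{-1}$ and $C_a$. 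The cross-leakage hypothesis gives $\|\Gamma_{A^cA}\|_{\op,\infty}\le\ell\rho_e$.

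For the inverse block, we read the diagonal-dominance gap as follows: every diagonal entry of $\Gamma_{AA}$ is at least $\sigma_-^2$, and every off-diagonal row sum within the active block is at most $\theta$. Write $\Gamma_{AA}=\Delta(I-E)$ with $\Delta$ its diagonal part. Then $\|E\|_{\op,\infty}\le\theta/\sigma_-^2<1$. A Neumann series gives
\[
\|\Gamma_{AA}^{-1}\|_{\op,\infty}\le \sigma_-^{-2}(1-\theta/\sigma_-^2)^{-1}=(\sigma_-^2-\theta)^{-1}.
\]
This is the standard Ahlberg--Nilson--Varah bound. Multiplying the four factors and taking the maximum over $i$ yields the population inequality with margin $1-\varepsilon$.

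\emph{Empirical transfer.} Write $\widehat\Gamma$ for the empirical defactored Gram matrix and $\widetilde\Gamma$ for its population counterpart. Assumption~\ref{ass:defactor} gives $\|\widehat\Gamma-\widetilde\Gamma\|_{\max}=O_p(a_{\mathrm{df},T})$. Since each active block has at most $d_N$ columns, the blockwise errors satisfy $\|\widehat\Gamma_{AA}-\Gamma_{AA}\|_{\op,\infty}\le d_N\|\widehat\Gamma-\widetilde\Gamma\|_{\max}$, and the same bound holds for the $A^cA$ block. Both are therefore $O_p(d_Na_{\mathrm{df},T})=o_p(\varepsilon)$, uniformly in $i$.

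We then decompose the difference of the two irrepresentability products as
\[
\widehat\Gamma_{A^cA}\widehat\Gamma_{AA}^{-1}-\Gamma_{A^cA}\Gamma_{AA}^{-1}
=(\widehat\Gamma_{A^cA}-\Gamma_{A^cA})\widehat\Gamma_{AA}^{-1}
+\Gamma_{A^cA}\Gamma_{AA}^{-1}(\Gamma_{AA}-\widehat\Gamma_{AA})\widehat\Gamma_{AA}^{-1}.
\]
The empirical inverse is bounded by perturbing the Neumann argument: once $d_Na_{\mathrm{df},T}$ is small relative to $\sigma_-^2-\theta$, we get $\|\widehat\Gamma_{AA}^{-1}\|_{\op,\infty}\le 2(\sigma_-^2-\theta)^{-1}$. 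Equivalently, one can invoke the hypothesis of Theorem~\ref{thm:oracle}.

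We insert the weight matrices and use the fact that the population product is already bounded by $1-\varepsilon$, or more crudely by $\ell\rho_e/(\sigma_-^2-\theta)$. Then each term of the difference is $C\,(C_a/c_a)\,d_Na_{\mathrm{df},T}$, with $C$ depending on $(\sigma_-^2-\theta)^{-1}$ and $\ell\rho_e$. This is $o_p(\varepsilon)$, hence below $\varepsilon/2$ with probability tending to one. The triangle inequality then gives the $1-\varepsilon/2$ margin.

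\emph{Main obstacle.} The delicate step is uniformity over rows $i$. We handle it by applying the max-norm bound globally before passing to blocks, so no union bound is needed. A subtler point is that $\sigma_-^2-\theta$ and $\ell\rho_e$ must be treated as fixed constants, or at least bounded away from degenerate values. Otherwise the condition $d_Na_{\mathrm{df},T}=o(\varepsilon)$ must be strengthened to absorb $(\sigma_-^2-\theta)^{-2}$. I would state this dependence explicitly in the constant.
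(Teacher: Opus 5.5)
Your proposal is correct and follows essentially the same route as the paper: for the population margin, submultiplicativity with the weight bounds, the cross-leakage bound and Varah's inverse bound; for the empirical margin, a max-norm-to-block perturbation (at most $d_N$ active columns), a Neumann bound on the empirical active-block inverse and a two-term decomposition of the product difference, all driven by the single scalar $\|\widehat{\widetilde\Gamma}-\widetilde\Gamma\|_{\max}$, so uniformity in $i$ is automatic. The differences are cosmetic: you derive Varah's bound by a Neumann series under the slightly stronger reading (diagonal $\ge\sigma_-^2$, off-diagonal row sums $\le\theta$) instead of the paper's row-wise gap $|\widetilde\Gamma_{jj}|-\sum_{k\ne j}|\widetilde\Gamma_{jk}|\ge\sigma_-^2-\theta$ (for which citing Varah, as you note, suffices), and your resolvent-form split routes the second term through the population product rather than through $\widehat{\widetilde\Gamma}_{\A_i^c\A_i}$.
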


\begin{proof}
Fix a response row $i$. For the population statement, the cross-leakage assumption gives
\[
  \|\widetilde\Gamma_{\A_i^c\A_i}\|_{\op,\infty}
  =
  \max_{j\in\A_i^c}\sum_{k\in\A_i}|\widetilde\Gamma_{jk}|
  \le \ell\rho_e .
\]
The within-support diagonal dominance condition means that every row of
$\widetilde\Gamma_{\A_i\A_i}$ satisfies
\[
  |\widetilde\Gamma_{jj}|-\sum_{k\in\A_i,k\ne j}|\widetilde\Gamma_{jk}|
  \ge \sigma_-^2-\theta>0 .
\]
By Varah's bound for strictly diagonally dominant matrices \citep{varah1975},
\[
  \|\widetilde\Gamma_{\A_i\A_i}^{-1}\|_{\op,\infty}
  \le
  (\sigma_-^2-\theta)^{-1}.
\]
Therefore, by submultiplicativity,
\[
  \|D_{i,\A_i^c}^{-1}\widetilde\Gamma_{\A_i^c\A_i}
  \widetilde\Gamma_{\A_i\A_i}^{-1}D_{i,\A_i}\|_{\op,\infty}
  \le
  \frac{C_a}{c_a}\frac{\ell\rho_e}{\sigma_-^2-\theta}
  \le 1-\varepsilon .
\]

For the empirical statement, write
\[
  \Delta_\Gamma=\widehat{\widetilde\Gamma}-\widetilde\Gamma,
  \qquad
  \|\Delta_\Gamma\|_{\max}=O_p(a_{\mathrm{df},T}),
\]
by Assumption~\ref{ass:defactor}. Since $|\A_i|\le d_N$, the induced block
errors obey
\[
  \|\Delta_{\A_i^c\A_i}\|_{\op,\infty}
  \le d_N\|\Delta_\Gamma\|_{\max}
  =O_p(d_N a_{\mathrm{df},T}),
  \qquad
  \|\Delta_{\A_i\A_i}\|_{\op,\infty}
  \le d_N\|\Delta_\Gamma\|_{\max}
  =O_p(d_N a_{\mathrm{df},T}).
\]
If $d_N a_{\mathrm{df},T}=o(\varepsilon)$, then
\[
  \|\widetilde\Gamma_{\A_i\A_i}^{-1}\Delta_{\A_i\A_i}\|_{\op,\infty}
  \le
  \|\widetilde\Gamma_{\A_i\A_i}^{-1}\|_{\op,\infty}
  \|\Delta_{\A_i\A_i}\|_{\op,\infty}
  =o_p(1),
\]
so the Neumann expansion is valid with probability tending to one:
\[
  \widehat{\widetilde\Gamma}_{\A_i\A_i}^{-1}
  =
  (\widetilde\Gamma_{\A_i\A_i}+\Delta_{\A_i\A_i})^{-1}
  =
  (I+\widetilde\Gamma_{\A_i\A_i}^{-1}\Delta_{\A_i\A_i})^{-1}
  \widetilde\Gamma_{\A_i\A_i}^{-1}.
\]
Consequently
\[
  \|\widehat{\widetilde\Gamma}_{\A_i\A_i}^{-1}
  -\widetilde\Gamma_{\A_i\A_i}^{-1}\|_{\op,\infty}
  \le
  C\|\widetilde\Gamma_{\A_i\A_i}^{-1}\|_{\op,\infty}^2
  \|\Delta_{\A_i\A_i}\|_{\op,\infty}
  =
  O_p(d_N a_{\mathrm{df},T}).
\]
Now decompose the weighted empirical irrepresentability matrix as
\begin{align*}
  D_{i,\A_i^c}^{-1}\widehat{\widetilde\Gamma}_{\A_i^c\A_i}
  \widehat{\widetilde\Gamma}_{\A_i\A_i}^{-1}D_{i,\A_i}
  -
  D_{i,\A_i^c}^{-1}\widetilde\Gamma_{\A_i^c\A_i}
  \widetilde\Gamma_{\A_i\A_i}^{-1}D_{i,\A_i}
  &=
  D_{i,\A_i^c}^{-1}\Delta_{\A_i^c\A_i}
  \widetilde\Gamma_{\A_i\A_i}^{-1}D_{i,\A_i}\\
  &\quad+
  D_{i,\A_i^c}^{-1}\widehat{\widetilde\Gamma}_{\A_i^c\A_i}
  (\widehat{\widetilde\Gamma}_{\A_i\A_i}^{-1}
  -\widetilde\Gamma_{\A_i\A_i}^{-1})D_{i,\A_i} .
\end{align*}
The first term is $O_p(d_N a_{\mathrm{df},T})$ in
$\|\cdot\|_{\op,\infty}$. The second term has the same order because
\[
  \|D_{i,\A_i^c}^{-1}
  \widehat{\widetilde\Gamma}_{\A_i^c\A_i}\|_{\op,\infty}
  \le c_a^{-1}\{\ell\rho_e+o_p(1)\},
  \qquad
  \|D_{i,\A_i}\|_{\op,\infty}\le C_a.
\]
Thus
\[
  \left\|
  D_{i,\A_i^c}^{-1}\widehat{\widetilde\Gamma}_{\A_i^c\A_i}
  \widehat{\widetilde\Gamma}_{\A_i\A_i}^{-1}D_{i,\A_i}
  -
  D_{i,\A_i^c}^{-1}\widetilde\Gamma_{\A_i^c\A_i}
  \widetilde\Gamma_{\A_i\A_i}^{-1}D_{i,\A_i}
  \right\|_{\op,\infty}
  =
  O_p(d_N a_{\mathrm{df},T})
  =
  o_p(\varepsilon).
\]
All bounds are uniform in $i$ because $|\A_i|\le d_N$ and the diagonal-dominance and leakage constants are uniform. Combining the perturbation bound with the population margin $1-\varepsilon$ gives the empirical margin $1-\varepsilon/2$ simultaneously for all rows with probability tending to one.
\end{proof}

\subsection{Proof of Theorem~\ref{thm:oracle}}

\begin{proof}
The argument is a primal-dual witness construction in the tradition of
\citet{zhao2006} and \citet{wainwright2009}, with adaptive weights as in
\citet{zou2006}, applied row by row to the second-step defactored problem \eqref{eq:two-step}. The Euclidean loss separates across response rows. Let $\mathcal L_i$ denote the quadratic loss for response $i$; its predictor Gram is $\widehat{\widetilde\Gamma}$, and its score at $S_{0,i\cdot}$ is the corresponding row of the innovation score minus $\Delta_T$ in \eqref{eq:contamination}. After support recovery, the final connectedness refit re-estimates all selected sparse entries jointly with the identifiable projected-rank nuisance defined in \eqref{eq:oracle-refit}.

Fix $i$ and construct the witness row $\widetilde S_{i\cdot}$ by solving the restricted problem on $\A_i$ and setting $\widetilde S_{ij}=0$ for $j\in\A_i^c$. Its restricted KKT equation is
\begin{equation}
  0
  =
  \nabla_{\A_i}\mathcal L_i(S_{0,i\cdot})
  +
  \widehat{\widetilde\Gamma}_{\A_i\A_i}
  (\widetilde S_{i,\A_i}-S_{0,i,\A_i})
  +
  \lambda_S D_{i,\A_i}\operatorname{sign}(\widetilde S_{i,\A_i}),
  \label{eq:restricted-kkt}
\end{equation}
On the uniform inverse-Gram event in the theorem and the score event
$\|\nabla_{\A_i}\mathcal L_i(S_{0,i\cdot})\|_\infty\le c\lambda_S$,
$\|\widehat{\widetilde\Gamma}_{\A_i\A_i}^{-1}\|_{\op,\infty}
\le2/\phi_{\min}$.
Solving \eqref{eq:restricted-kkt} gives
\[
  \|\widetilde S_{i,\A_i}-S_{0,i,\A_i}\|_\infty
  \le C\lambda_S/\phi_{\min}.
\]
The beta-min condition therefore gives
\[
  \operatorname{sign}(\widetilde S_{i,\A_i})
  =\operatorname{sign}(S_{0,i,\A_i}).
\]
It remains to verify that the restricted row is also the unrestricted row solution. The KKT condition on $\A_i^c$ requires
\[
  Z_{i,\A_i^c}
  =
  \lambda_S^{-1}D_{i,\A_i^c}^{-1}
  \left[
    \nabla_{\A_i^c}\mathcal L_i(\widetilde S_{i\cdot})
  \right]
\]
to satisfy $\|Z_{i,\A_i^c}\|_\infty<1$. Because the loss is quadratic,
\[
  \nabla_{\A_i^c}\mathcal L_i(\widetilde S_{i\cdot})
  =
  \nabla_{\A_i^c}\mathcal L_i(S_{0,i\cdot})
  -
  \widehat{\widetilde\Gamma}_{\A_i^c\A_i}
  \widehat{\widetilde\Gamma}_{\A_i\A_i}^{-1}
  \{\nabla_{\A_i}\mathcal L_i(S_{0,i\cdot})
  +\lambda_S D_{i,\A_i}\operatorname{sign}(S_{0,i,\A_i})\}.
\]
For the quadratic loss this expansion is exact. The contamination
$\Delta_T$ enters its score and satisfies
$\|\Delta_T\|_{\max}=o_p(\lambda_S)$ by Assumption~\ref{ass:defactor}.
Theorem~\ref{thm:error} alone is insufficient because the first-step error
contains terms of orders $\lambda_R\sqrt r$ and
$\lambda_S\sqrt{s_N}$. Defactoring and the leakage condition reduce its
contribution to the sparse score below $\lambda_S$. The estimated-defactored
score satisfies
\[
  \|\nabla\mathcal L(S_0)+\Delta_T\|_\infty
  =
  O_p\!\left(\sqrt{\frac{(\log T)^2\log N}{T}}+\sqrt{\frac{\log N}{m}}\right)
  \le c\lambda_S
\]
with probability tending to one by Assumption~\ref{ass:defactor}. Since
$\|\Delta_T\|_{\max}=o_p(\lambda_S)$, the same event, after a harmless
adjustment of $c$, bounds the total loss gradient
$\|\nabla\mathcal L(S_0)\|_\infty$. The boundedness of the inverse adaptive
weights converts this into the corresponding weighted score bound. Uniformly
over $i$, the empirical weighted irrepresentability condition gives
\[
  \left\|D_{i,\A_i^c}^{-1}
  \widehat{\widetilde\Gamma}_{\A_i^c\A_i}
  \widehat{\widetilde\Gamma}_{\A_i\A_i}^{-1}D_{i,\A_i}
  \right\|_{\op,\infty}\le 1-\varepsilon/2.
\]
Combining the last three displays,
\[
  \|Z_{i,\A_i^c}\|_\infty
  \le
  (1-\varepsilon/2)\|\operatorname{sign}(S_{0,i,\A_i})\|_\infty
  +
  O_p(\|\nabla\mathcal L(S_0)\|_\infty/\lambda_S)
  +
  o_p(1)
  \le 1-\varepsilon/4
\]
after choosing the penalty constant large enough relative to the score
constant. Strict dual feasibility implies that the witness is the unique
solution for response row $i$. The score, inverse-Gram, beta-min, and
irrepresentability events hold uniformly over $i$, so intersecting them gives
$\widehat S^{(2)}=\widetilde S$ for every row with probability tending to one.
Combining strict dual feasibility with the beta-min sign argument yields
\[
  \p\{\supp(\widehat S^{(2)})=\A,\ \operatorname{sign}(\widehat S^{(2)}_\A)=\operatorname{sign}(S_{0,\A})\}\to1.
\]
\end{proof}

\subsection{Proof of Theorem~\ref{thm:conn-inference}}

\begin{proof}
Let $\A_0$ be the true support. By Theorem~\ref{thm:oracle}, $\p(\widehat\A=\A_0)\to1$. On this event, the support-indexed initialization and the fixed measurable local-solution rule are the same as those used when $\A_0$ is known. Hence $\widetilde\eta_{\widehat\A}=\mathcal S_T(\A_0,\bar\eta_{\A_0})=\widetilde\eta_{\A_0}$ on the support-recovery event; this identity concerns the same local constrained-KKT selection and does not invoke global optimality.

For the fixed-dimensional local oracle KKT refit, use the identifiable projected-rank chart \eqref{eq:rank-chart}. Let $s_{\A_0,t}=2J_{\A_0,t}^\top\widetilde u_t$ and let $H_{\A_0}$ be \eqref{eq:oracle-hessian}. The stability margin in Assumption~\ref{ass:oracle-functional} makes the cap in \eqref{eq:oracle-refit} locally inactive with probability tending to one, so the local KKT equation is \eqref{eq:oracle-kkt}. The consistent initialization, local identification, and nonsingular-Hessian conditions then give the usual first-order expansion
\[
  \sqrt T(\widetilde\eta_{\A_0}-\eta_{\A_0,0})
  =
  H_{\A_0}^{-1}T^{-1/2}\sum_{t=1}^T s_{\A_0,t}+o_p(1).
\]
For a differentiable connectedness functional,
\[
  \sqrt T\{g(\widetilde\eta_{\A_0})-g(\eta_{\A_0,0})\}
  =
  \nabla g(\eta_{\A_0,0})^\top
  H_{\A_0}^{-1}T^{-1/2}\sum_{t=1}^T s_{\A_0,t}+o_p(1).
\]
Under the score linearization and central limit theorem maintained in Assumption~\ref{ass:oracle-functional}, the fixed-dimensional influence function is therefore
\[
  \psi_{g,t}=\nabla g(\eta_{\A_0,0})^\top H_{\A_0}^{-1}s_{\A_0,t}.
\]
When the selected support or projected-rank nuisance dimension grows, this low-dimensional representation is imposed rather than derived.
Assumption~\ref{ass:oracle-functional} then gives
\[
  \sqrt T\{g(\widetilde\eta_{\A_0})-g(\eta_{\A_0,0})\}
  =
  T^{-1/2}\sum_{t=1}^T\psi_{g,t}+o_p(1)
  \xrightarrow{d}
  \mathcal N(0,V_g).
\]
For any bounded Lipschitz test function $\varphi$,
\[
  \left|
  \E\varphi\!\left[\sqrt T\{g(\widetilde\eta_{\widehat\A})-g(\eta_{\A_0,0})\}\right]
  -
  \E\varphi\!\left[\sqrt T\{g(\widetilde\eta_{\A_0})-g(\eta_{\A_0,0})\}\right]
  \right|
  \le
  2\|\varphi\|_\infty\,\p(\widehat\A\ne\A_0),
\]
so replacing $\widetilde\eta_{\A_0}$ by $\widetilde\eta_{\widehat\A}$
does not change the limit law.

It remains to record the differentiability of the connectedness map used in Assumption~\ref{ass:oracle-functional}. Let
$A(\eta)=(\gamma+\beta)I_N+M(\eta)$ and
$D_A(\eta)=I_N-A(\eta)$. On the stability region,
\[
  \Phi(\eta)=D_A(\eta)^{-1}M(\eta)
\]
is continuously differentiable, with differential
\[
  d\Phi
  =
  D_A^{-1}(dA)D_A^{-1}M+D_A^{-1}dM.
\]
If a connectedness functional uses absolute response masses, the local
non-kink condition stated in the theorem requires each relevant response entry
to be bounded away from zero or locally fixed at zero by a maintained
restriction. The latter entries are omitted from the active derivative.
The denominator bounded-away-from-zero condition then makes the quotient map
locally continuously differentiable. Consistency of the variance estimator follows from consistency of the oracle influence-function sequence: its sample second moment is used in the martingale-difference case and a standard HAC long-run variance estimator is used in the mixing case.
\end{proof}

\subsection{Proof of Corollary~\ref{cor:net-tests}}

\begin{proof}
The null $H_0:C_i^{\mathrm{net}}=0$ is the scalar smooth restriction
$g_i(\eta_{\A,0})=0$, where $g_i(\eta)=C_i^{\mathrm{net}}(\eta)$. Under the
regularity conditions of Theorem~\ref{thm:conn-inference},
\[
  \sqrt T\{g_i(\widetilde\eta_{\widehat\A})-g_i(\eta_{\A,0})\}
  \xrightarrow{d}
  \mathcal N(0,V_i).
\]
The plug-in estimator $\widehat V_i$ is consistent by
Theorem~\ref{thm:conn-inference}. Hence, under $H_0$,
\[
  W_i
  =
  T\,g_i(\widetilde\eta_{\widehat\A})^2/\widehat V_i
  \xrightarrow{d}\chi^2_1,
\]
which gives the stated Wald test. For simultaneous node-level bands, collect the relevant
node connectedness functionals in a vector $g(\eta)$. The multivariate version of
Theorem~\ref{thm:conn-inference} and the continuous mapping theorem give the
joint Gaussian limit with covariance given by the corresponding influence-function covariance. Critical values from this
estimated Gaussian law produce asymptotically valid simultaneous bands. Applying the scalar version directly to $C^{\mathrm{total}}$ gives its pointwise confidence interval.
\end{proof}

\bibliographystyle{apalike}
\bibliography{references}

\end{document}


\maketitle

This Online Appendix reports the primitive Gaussian example, additional
simulation results, data-retention details, tuning diagnostics, and empirical
robustness checks.

\section{A Primitive Gaussian Verification}

This section verifies the restricted-curvature and network-score conditions
under a Gaussian factor model. The result is a sufficient example rather than
a primitive characterization of the full semimartingale model.

\begin{proposition}
\label{prop:oa-gaussian}
Let the latent filtered design have the factor representation
\[
  x_t=\Lambda f_t+e_t,
\]
where the factor dimension is fixed, the rows of $\Lambda$ have uniformly
bounded Euclidean norm, and the eigenvalues of
$N^{-1}\Lambda^\top\Lambda$ are bounded above and away from zero. Suppose
$\{(f_t,e_t)\}$ is a centered stationary Gaussian triangular array with
geometric beta-mixing coefficients bounded by $C\exp(-ck)$, $f_t$ and $e_t$
are independent at every lead and lag,
\[
  c_f I\preceq\Var(f_t)\preceq C_f I,
  \qquad
  \phi_e I_N\preceq\Var(e_t)\preceq C_e I_N,
\]
and their coordinatewise sub-Gaussian norms are uniformly bounded.
Assume additionally that the factor and idiosyncratic autocovariance
operators are uniformly absolutely summable:
\[
  \sum_{h\in\mathbb Z}\|\operatorname{Cov}(f_t,f_{t-h})\|_2\le C,
  \qquad
  \sup_N\sum_{h\in\mathbb Z}
  \|\operatorname{Cov}(e_t,e_{t-h})\|_2\le C.
\]

Let $\widetilde u_t$ be an i.i.d.\ centered Gaussian $N$-vector, independent
of the complete design path, with covariance eigenvalues bounded above.
Model the feasible high-frequency perturbation by
\[
  \widehat x_t=x_t+v_t,
  \qquad
  \widehat\varepsilon_t^{\,\mathrm f}
  =\widetilde u_t+r_t,
\]
where $\{(v_t,r_t)\}$ is independent of
$\{(x_t,\widetilde u_t)\}$, is Gaussian and geometrically beta-mixing, and
satisfies
\[
  \max_j\{\|v_{j,t}\|_{\psi_2},\|r_{j,t}\|_{\psi_2}\}
  \le C m^{-1/2},
  \qquad
  \max\{\|\Var(v_t)\|_2,\|\Var(r_t)\|_2\}\le C/m.
\]
Assume $\log N=O(\log T)$ and
\[
  \frac{N^2(\log T)^2}{T}+\frac{N^2}{m}=o(1).
\]
Then, with probability tending to one,
\[
  \inf_{\Delta\ne0}
  \frac{T^{-1}\sum_{t=1}^T\|\Delta\widehat x_t\|_2^2}
       {\|\Delta\|_F^2}
  \ge \frac{\phi_e}{2}.
\]
Thus the restricted-curvature condition of the main paper holds on its joint
cone. Moreover, for
\[
  G_T^{\mathrm f}
  =T^{-1}\sum_{t=1}^T
  \widehat\varepsilon_t^{\,\mathrm f}\widehat x_t^\top,
\]
\[
  \|G_T^{\mathrm f}\|_{\max}
  =
  O_p\!\left(
    \log T\sqrt{\frac{\log N}{T}}
    +\sqrt{\frac{\log N}{m}}
  \right),
\]
and
\[
  \|P_{\mathrm{off}}(G_T^{\mathrm f})\|_2
  =
  O_p\!\left(
    \frac{N\log T}{\sqrt T}+\frac{N}{\sqrt m}
  \right).
\]
Because the adaptive weights are bounded above and away from zero and
$\Omega_*^\circ(G)=\|P_{\mathrm{off}}(G)\|_2$, these bounds imply the two
score orders in the main paper up to fixed constants.
\end{proposition}

\begin{proof}
Write $F=(f_1,\ldots,f_T)^\top$ and
$E=(e_1,\ldots,e_T)^\top$. The population Gram is
\[
  \Gamma_x
  =\Lambda\Var(f_t)\Lambda^\top+\Var(e_t)
  \succeq\phi_e I_N.
\]
Geometric blocking with blocks of length $C\log T$, followed by Gaussian
quadratic-form concentration on the approximately independent blocks, gives
\[
  \left\|T^{-1}F^\top F-\Var(f_t)\right\|_2
  =O_p(\log T/\sqrt T),
\]
\[
  \|T^{-1}F^\top E\|_2
  =O_p\!\left(\log T\sqrt{N/T}\right),
\]
and
\[
  \left\|T^{-1}E^\top E-\Var(e_t)\right\|_2
  =
  O_p\!\left(
    \log T\sqrt{N/T}+N\log T/T
  \right).
\]
Since $\|\Lambda\|_2=O(\sqrt N)$, expansion of
$T^{-1}\sum_tx_tx_t^\top$ around $\Gamma_x$ therefore yields
\[
  \left\|T^{-1}\sum_tx_tx_t^\top-\Gamma_x\right\|_2
  =O_p(N\log T/\sqrt T)=o_p(1).
\]
The same calculation and the assumed covariance scale of $v_t$ give
\[
  \left\|T^{-1}\sum_t
  \widehat x_t\widehat x_t^\top
  -T^{-1}\sum_tx_tx_t^\top\right\|_2
  =O_p(N/\sqrt m)=o_p(1).
\]
Weyl's inequality now implies
$\lambda_{\min}(T^{-1}\sum_t\widehat x_t\widehat x_t^\top)
\ge\phi_e/2$. The displayed curvature ratio is the quadratic form of this
sample Gram, so the claim holds for every matrix $\Delta$ and hence on the
smaller joint cone.

For the entrywise score, each coordinate of
$\widetilde u_tx_t^\top$ is a centered product of Gaussian variables with a
uniform sub-exponential norm. Applying the same geometric blocking argument
and taking a union bound over at most $N^2$ entries gives
\[
  \left\|T^{-1}\sum_t\widetilde u_tx_t^\top\right\|_{\max}
  =O_p\!\left(\log T\sqrt{\log N/T}\right).
\]
The terms containing $v_t$ or $r_t$ have the same product structure with one
factor of scale $m^{-1/2}$. Gaussian maximal inequalities and
$N^2/m=o(1)$ give
\[
  \left\|
    T^{-1}\sum_t
    \{\widehat\varepsilon_t^{\,\mathrm f}\widehat x_t^\top
      -\widetilde u_tx_t^\top\}
  \right\|_{\max}
  =
  O_p\!\left(\sqrt{\frac{\log N}{m}}\right).
\]

For the spectral score, decompose
\[
  T^{-1}\sum_t\widetilde u_tx_t^\top
  =
  (T^{-1}U^\top F)\Lambda^\top+T^{-1}U^\top E,
\]
where $U=(\widetilde u_1,\ldots,\widetilde u_T)^\top$. Gaussian operator-norm
concentration, again with the logarithmic blocking factor, gives
\[
  \|T^{-1}U^\top F\|_2
  =O_p(\log T\sqrt{N/T}),
\]
while the idiosyncratic term is of smaller order under the displayed growth
condition. Multiplication by $\|\Lambda\|_2=O(\sqrt N)$ gives
$O_p(N\log T/\sqrt T)$. Terms involving $v_t$ and $r_t$ are bounded
conservatively by $O_p(N/\sqrt m)$. Deleting the diagonal cannot increase
these orders by more than a fixed factor because
$\|P_{\mathrm{off}}(A)\|_2\le\|A\|_2+\|\diag(A)\|_2
\le2\|A\|_2$. This proves the projected-dual score bound.
\end{proof}

\section{Additional Simulation Results}

\subsection{Algorithm alignment}

Two hundred independent replications compare the completion-variable
projected-nuclear solution used in the paper with the earlier zero-diagonal
singular-value-thresholding surrogate. The completion implementation lowers
the penalized objective by 0.00032 on average, has mean KKT residual
$2.25\times10^{-6}$, satisfies the strict convergence criterion in 98 percent
of replications, and produces a stable recursion in every replication. The
final estimator retains the penalized projected component and does not impose
a selected completion rank; reported ranks are numerical diagnostics.

\subsection{Complete recovery and forecast diagnostics}

Table~\ref{tab:oa-sim-v5-headline} reports the complete decomposition, support,
sign, and rank diagnostics. Penalties are selected by replication-specific
validation QLIKE subject to $c_S/c_R=1$, followed by row-wise GIC support
selection with multiplier $0.25$; neither step uses the true support.
$\widehat r_{\mathrm{diag}}$ is a thresholded completion-rank diagnostic, not
a constraint on the final estimator, and
$\kappa_{\mathrm{sep}}=\|P_\Omega P_T\|$ is computed from the DGP tangent and
support spaces. Sign rate is the fraction of true edges with the correct
estimated sign. Exact support and exact sign refer to the full matrix, whereas
exact rows is the fraction of rows with exactly recovered support. Full-matrix
exact recovery is rare in the moderate-signal designs, so the row-wise and F1
measures are more informative.

\input{simulation_tables/table_v5_headline}

Table~\ref{tab:sim-v5-forecast} reports all component ablations and external
forecast benchmarks. Entries are Monte Carlo means, with Monte Carlo standard
errors in parentheses. Positive loss differences favor the full model.
Component ablations hold the fitted dynamic coefficients fixed and remove the
indicated network component; they are not separately retuned estimators.
Tuning, benchmark estimation, and smearing corrections use only the training
sample.

\input{simulation_tables/table_v5_forecast}

\subsection{Selected-estimator coverage and nonregular designs}

Table~\ref{tab:sim-v5-selected-coverage-full} reports unconditional and
target-selected coverage for the complete selected estimator. It also includes
the stronger-signal, beta-min, and near-zero response designs. The last design
is outside the smooth-functional region used for the headline connectedness
inference and records the resulting finite-sample failure rather than treating
it as a regular case. Entries are Monte Carlo means with standard errors in
parentheses. Each bootstrap draw repeats projected-component estimation,
predictor-rank selection, sparse-support selection, and dynamic estimation.
Coverage U is unconditional; coverage C conditions only on selection of the
reported target edge, rather than on recovery of the full support. Normal and
percentile denote the corresponding bootstrap intervals. Net connectedness is
evaluated at the DGP node with the largest absolute net response. This
experiment evaluates the practical selected estimator and is distinct from
the fixed-support, fixed-rank experiment reported in the main paper.

\input{simulation_tables/table_v5_coverage}

\subsection{All price-level volatility proxies}

Table~\ref{tab:oa-sim-v5-price} reports relative integrated-variance RMSE for
every candidate realized-volatility proxy, together with network recovery and
forecast performance. Parentheses following the selected proxy give its
selection frequency. The pre-specified rule uses observed lag-one return
autocovariance and threshold exceedances, but not the DGP jump or noise labels.
Network penalties use validation QLIKE and row-wise GIC with multiplier $0.25$
without knowledge of the true support.

\input{simulation_tables/table_v5_price}

\section{Additional Empirical Results}

\subsection{Data retention}

The one-minute and five-minute filters are applied separately. A date is
retained only when every sector ETF and the auxiliary SPY series satisfies the
common regular-session coverage rule. Duplicate timestamps use the last
observation, isolated gaps use previous-tick interpolation, and a leading gap
is initialized from the first observed bar's opening price. Early closes and
days with more extensive missingness are excluded. Annual retention ratios use
the number of source trading dates in the corresponding calendar year as the
denominator.

The Parzen realized kernel uses the fixed bandwidth $H=20$, with weight
$k\{h/(H+1)\}$ at lag $h$, where $k(\cdot)$ is the Parzen kernel. Its
two-percent threshold-RV floor is a numerical safeguard and does not bind on
any retained sector-day. The jump threshold uses daily bipower variation
without a separate intraday-periodicity rescaling. Thus the realized-kernel and
five-minute results vary the noise correction and sampling frequency while
retaining the same daily truncation convention.

\begin{table}[htbp]
\centering
\caption{Annual balanced-panel retention}
\label{tab:oa-emp-retention}
\small
\input{generated_empirical/table_retained_days.tex}
\end{table}

\subsection{Complete forecast results}

For forecasting, the candidate sparse coefficients in each row are ordered by
absolute magnitude. For row degree $d=0,\ldots,3$, the criterion is
\[
  \operatorname{GIC}_i(d)
  =
  T_i\log\{\operatorname{RSS}_i(d)/T_i\}
  +0.25\,d\log T_i .
\]
The selected degree minimizes this criterion. The inference support does not
use this GIC: it is selected by the second-step weighted LASSO analyzed in the
main paper. Diebold--Mariano standard errors use Bartlett weights and bandwidth
$\lfloor T_{\mathrm{eval}}^{1/3}\rfloor$; the baseline bandwidth is 13. In the
tables below, a forecast gain is benchmark QLIKE minus proposed-model QLIKE,
so a positive value favors the proposed model. Component removals hold the
remaining full-fit parameters fixed and are not independently retuned nested
estimators. The asset-level table uses HAR as the benchmark.

\begin{table}[htbp]
\centering
\caption{Complete recursive forecast comparison}
\label{tab:oa-emp-forecast}
\small
\resizebox{\textwidth}{!}{%
\input{generated_empirical/table_forecast_summary.tex}}
\end{table}

\begin{table}[htbp]
\centering
\caption{Forecast comparison by sector ETF}
\label{tab:oa-emp-assets}
\small
\input{generated_empirical/table_forecast_by_asset.tex}
\end{table}

\subsection{Tuning and cross-proxy network diagnostics}

Table~\ref{tab:oa-emp-penalty} summarizes the pointwise validation-QLIKE
choices over the nine expanding annual windows. The sparsity-first
one-standard-error rule is reported only as a diagnostic and is not used for
the headline estimates. In Table~\ref{tab:oa-emp-network-robustness}, a stable
edge must be selected in at least two thirds of the annual fits, attain sign
stability of at least 80 percent, and have the same mean sign under one-minute
TSRV, one-minute realized kernel, and five-minute threshold RV. Columns
transmit and rows receive. Completion ranks are diagnostics for the penalized
fit and are not imposed on the forecast refit.

\begin{table}[htbp]
\centering
\caption{Sparse-penalty grid diagnostics}
\label{tab:oa-emp-penalty}
\small
\resizebox{\textwidth}{!}{%
\input{generated_empirical/table_penalty_diagnostics.tex}}
\end{table}

\begin{table}[htbp]
\centering
\caption{Cross-proxy network robustness}
\label{tab:oa-emp-network-robustness}
\small
\textit{Panel A: sparse edges stable under all specifications}\\[2pt]
\input{generated_empirical/table_cross_proxy_edges.tex}
\par\smallskip
\textit{Panel B: annual connectedness rankings and rank diagnostics}\\[2pt]
\resizebox{\textwidth}{!}{%
\input{generated_empirical/table_cross_proxy_rankings.tex}}
\end{table}

\subsection{Connectedness-inference sensitivity}

The released inference run uses 499 circular block bootstrap draws. All draws
pass the common stability and constrained-KKT screen. The largest accepted
scaled stationarity residual is $5.0\times10^{-4}$, and the smallest stability
margin is 0.0218. For the full-sample fit, the joint raw residual is
$2.25\times10^{-4}$; the raw $\gamma$ residual is $1.21\times10^{-4}$ with
scaling factor 1.487, the completion residual is $2.25\times10^{-4}$, and the
remaining parameter-block residuals are below $1.4\times10^{-13}$. The
stability cap is inactive for the full-sample fit, while any active-cap
bootstrap solution must satisfy the constrained normal-cone KKT condition.
The rank-one completion and empty weighted-LASSO support are imposed exactly
in every retained draw. The largest raw joint residual among retained draws is
$7.53\times10^{-4}$.
The block length is 10 trading days and is fixed before examining the
intervals. It retains two trading weeks of short-run dependence. The
circular block bootstrap is used as a finite-sample estimate of the
influence-function law maintained in the main paper; its standard asymptotic
justification lets the block length diverge while remaining $o(T)$.
Table~\ref{tab:oa-emp-rank-sensitivity} re-estimates the full-sample local
refit at each indicated completion rank. The weighted-LASSO support is empty,
and rank zero is the no-network benchmark.
Table~\ref{tab:oa-emp-horizon-sensitivity} uses the rank-one full-sample refit;
its infinite-horizon row is based on the stable cumulative response matrix.

\begin{table}[htbp]
\centering
\caption{Completion-rank sensitivity}
\label{tab:oa-emp-rank-sensitivity}
\small
\input{generated_empirical/table_rank_sensitivity.tex}
\end{table}

\begin{table}[htbp]
\centering
\caption{Connectedness by response horizon}
\label{tab:oa-emp-horizon-sensitivity}
\small
\input{generated_empirical/table_horizon_sensitivity.tex}
\end{table}

\subsection{Regime diagnostics}

Each period and sampling frequency in
Table~\ref{tab:oa-emp-regimes} is filtered, estimated, and tuned separately.
Total connectedness is the off-diagonal share of absolute cumulative-response
mass. The component norms are Frobenius norms of the projected-factor and
selected sparse estimates.

\begin{table}[htbp]
\centering
\caption{Network composition across market regimes and sampling frequencies}
\label{tab:oa-emp-regimes}
\small
\resizebox{\textwidth}{!}{%
\input{generated_empirical/table_regime_robustness.tex}}
\end{table}

Figure~\ref{fig:oa-emp-sparse-regimes} displays the one-minute selected sparse
component in each regime. Columns transmit and rows receive; red and blue
denote positive and negative coefficients. The panels use a common symmetric
color scale. They describe local selected channels rather than the full
response network, which also includes the projected-factor component and
dynamic feedback.

\begin{figure}[htbp]
\centering
\begin{subfigure}[t]{0.47\textwidth}
\centering
\includegraphics[width=\textwidth]{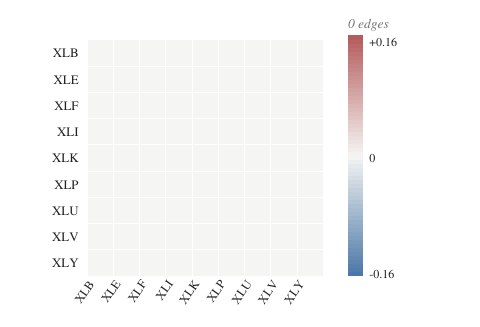}
\caption{Global financial crisis.}
\end{subfigure}
\hfill
\begin{subfigure}[t]{0.47\textwidth}
\centering
\includegraphics[width=\textwidth]{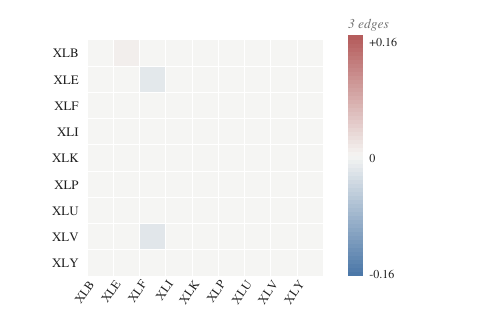}
\caption{Pre-pandemic period.}
\end{subfigure}

\medskip

\begin{subfigure}[t]{0.47\textwidth}
\centering
\includegraphics[width=\textwidth]{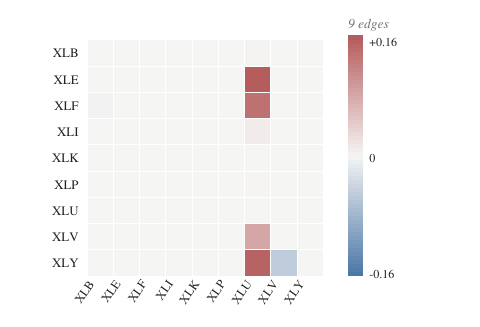}
\caption{COVID period.}
\end{subfigure}
\hfill
\begin{subfigure}[t]{0.47\textwidth}
\centering
\includegraphics[width=\textwidth]{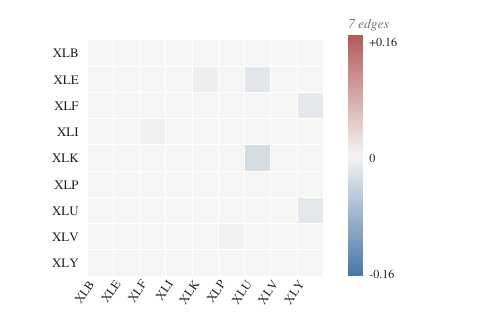}
\caption{Monetary tightening.}
\end{subfigure}

\medskip

\begin{subfigure}[t]{0.47\textwidth}
\centering
\includegraphics[width=\textwidth]{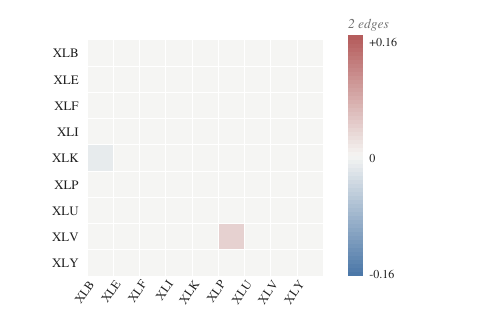}
\caption{Recent period.}
\end{subfigure}
\par\medskip
\caption{Selected sparse channels across market regimes}
\label{fig:oa-emp-sparse-regimes}
\end{figure}

%% file: simulation_section.tex
We conduct Monte Carlo experiments to examine the finite-sample performance of
the proposed procedure. The experiments consider the generated-regressor
approximation under joint in-fill asymptotics, network recovery and forecasting
under feasible tuning, and connectedness inference based on the fixed-support,
fixed-rank refit. Calibration and evaluation use independent seeds. We use 300
replications for the generated-regressor designs and 500 for the
network-recovery designs and the experiments that simulate intraday prices.
The Online Appendix reports algorithm diagnostics. Reported entries are Monte
Carlo means; parentheses give Monte Carlo standard errors where shown.

The baseline data-generating process is
\[
 q_t=\omega+\gamma q_{t-1}+(\beta I_N+M_0)z_{t-1},
 \qquad
 z_t=q_t+\widetilde u_t,
 \qquad
 M_0=R_0+S_0,\quad R_0=P_{\mathrm{off}}(L_0).
\]
The simulated log innovation is centered, so no Jensen recentering is needed
and $\omega$ here coincides with $\widetilde\omega$ in
\eqref{eq:q-recursion}.
Unless stated otherwise, $N=10$, $T=1000$, $m=390$,
$(\gamma,\beta)=(0.25,0.45)$, $\rank(L_0)=1$, and each row of $S_0$
has two signed edges of magnitude $0.08$. The innovation contains a pervasive
rank-one factor and idiosyncratic noise, and the transition matrix is scaled to
have spectral radius below $0.92$. Seventy percent of each sample is used for
estimation and 30 percent for forecast evaluation. In the generated-regressor
designs,
\[
 \widehat z_{i,t}
 =z_{i,t}+\log(\chi_m^2/m)+m^{-1},
\]
where $m^{-1}$ is the first-order log-bias correction.

All reported estimators are feasible. A preliminary recursion estimates
$(c_q,\gamma,\beta)$. The network step optimizes over a completion $L$, sets
$R=P_{\mathrm{off}}(L)$, and penalizes $\|L\|_*$. Validation QLIKE selects the
penalties in each replication. After defactoring, a row-wise information
criterion selects the sparse support, and the active coefficients and dynamic
parameters are refitted without shrinkage. The information criterion serves as
a feasible finite-sample proxy for the sparse-penalty choice in
Theorem~\ref{thm:oracle}. The true support and ranks are used only for
evaluation. The Online Appendix compares the completion algorithm with
zero-diagonal SVT and reports convergence diagnostics.

Table~\ref{tab:sim-v5-generated} compares the complete feasible fit with the
same estimator applied to latent integrated volatility, holding the feasible
penalty fixed in both fits. The $R$, $S$, and $M$ gaps are Frobenius distances
between the feasible and latent-IV estimates, and the QLIKE gap is feasible
minus latent. The feasible--latent $M$ and QLIKE gaps are large when the
intraday mesh is coarse. They fall to $0.076$ and $0.00007$ at
$(T,m)=(2500,1950)$ and to $0.041$ and $0.00004$ at $(5000,3900)$.
Increasing $T$ with $m=390$ improves parameter estimation but leaves the QLIKE
gap nearly unchanged. The result is consistent with the joint in-fill mechanism
in Theorem~\ref{thm:asym-normal}. It is not a direct simulation of that theorem
because the experiment re-estimates the complete network.

\input{simulation_tables/table_v5_generated}

Table~\ref{tab:sim-v5-headline} summarizes end-to-end network recovery and
forecasting. Exact rows is the fraction of rows whose sparse support is
recovered exactly, and the final column reports HAR QLIKE minus full-model
QLIKE. The HAR benchmark is estimated separately for each asset in log realized
volatility using its daily value and its 5-day and 22-day own-history averages.
As $T$ rises from 1000 to 2500, F1 increases from $0.610$ to $0.750$ and
$\|\widehat M-M_0\|_F$ falls from $0.314$ to $0.197$.
Recovery is weaker at larger $N$ and with a rank-2 coefficient component.
The full model has lower QLIKE than HAR in each design. Additional recovery
and forecast diagnostics are reported in the Online Appendix.

\input{simulation_tables/table_v5_headline_main}

Table~\ref{tab:sim-v6-oracle-inference} examines the local refit used for
connectedness inference. The design sets $N=5$, $T=5000$, $m=3900$, the
completion rank to one, and the sparse row degree to one. The support and rank
are fixed at their DGP values, while the preliminary parameters are estimated
and $(\gamma,\beta,R,S)$ are jointly refitted. Each replication uses 99
bootstrap draws and applies the same hard stability constraint and local-KKT
screen as the empirical analysis. All 500 replications and all 49,500
bootstrap draws pass the prespecified numerical gate. Bootstrap-normal
95 percent coverage is $0.944$ for the combined coefficient $M_{ij}$ at a
randomly selected true sparse edge and $0.956$ for net connectedness of the
DGP node with the largest absolute net response.
These results evaluate the fixed-support, fixed-rank conditioning experiment
in Theorem~\ref{thm:conn-inference}; they are not unconditional post-selection
coverage probabilities.

\input{simulation_tables/table_v6_oracle_inference_main}

Table~\ref{tab:sim-v5-price} replaces proxy-level measurement error
with an intraday price experiment. We simulate diffusion returns,
finite-activity jumps, additive market microstructure noise, and their
combination, then reconstruct integrated volatility using a pre-specified
choice among realized variance, truncated realized variance, a realized
kernel, and two-scale realized variance. Proxy IV RMSE is relative
integrated-variance RMSE, and the number following the selected proxy is its
selection frequency. The rule selects RV under diffusion, TRV under jumps, and
TSRV under noise in every replication. Jumps alone leave network recovery
nearly unchanged after truncation. Under noise, proxy RMSE rises from $0.072$
to $0.270$ and support F1 falls from about $0.61$ to $0.46$.

\input{simulation_tables/table_v5_price_main}

The simulations show that feasible and latent-IV estimation approach each
other under joint in-fill growth, feasible tuning recovers part of the sparse
network without using the true support or rank, and the fixed-support,
fixed-rank refit has coverage close to the nominal level for the combined
network and connectedness targets. Microstructure noise reduces
network-recovery accuracy.

%% file: simulation_tables/table_v5_generated.tex
\begin{table}[!htbp]
\centering
\caption{Feasible and latent-IV estimators}
\label{tab:sim-v5-generated}
\resizebox{\textwidth}{!}{%
\begin{tabular}{lrrrrrr}
\toprule
Design & $T$ & $m$ & $R$ gap & $S$ gap & $M$ gap & QLIKE gap \\
\midrule
Baseline & 1000 & 390 & 0.037 (0.002) & 0.164 (0.002) & 0.162 (0.002) & 0.00059 (0.00001) \\
Long $T$ & 2500 & 390 & 0.032 (0.001) & 0.100 (0.001) & 0.099 (0.001) & 0.00058 (0.00001) \\
Low mesh & 1000 & 78 & 0.056 (0.002) & 0.256 (0.002) & 0.259 (0.002) & 0.00250 (0.00003) \\
Joint infill I & 2500 & 1950 & 0.024 (0.001) & 0.077 (0.002) & 0.076 (0.002) & 0.00007 (0.00000) \\
Joint infill II & 5000 & 3900 & 0.015 (0.000) & 0.042 (0.002) & 0.041 (0.001) & 0.00004 (0.00000) \\
\bottomrule
\end{tabular}%
}
\end{table}

%% file: simulation_tables/table_v5_headline_main.tex
\begin{table}[!htbp]
\centering
\caption{Feasible network recovery and forecasting}
\label{tab:sim-v5-headline}
\small
\begin{tabular}{lrrrrrrr}
\toprule
Design & $N$ & $T$ & $r_0$ & $\|\widehat M-M_0\|_F$ &
F1 & Exact rows & HAR $-$ full \\
\midrule
Baseline & 10 & 1000 & 1 & 0.314 & 0.610 & 0.148 &
\shortstack{0.00111\\(0.00002)} \\
Large $N$ & 15 & 1000 & 1 & 0.440 & 0.556 & 0.079 &
\shortstack{0.00085\\(0.00002)} \\
Long $T$ & 10 & 2500 & 1 & 0.197 & 0.750 & 0.294 &
\shortstack{0.00132\\(0.00002)} \\
Rank 2 & 10 & 1000 & 2 & 0.321 & 0.586 & 0.120 &
\shortstack{0.00119\\(0.00003)} \\
\bottomrule
\end{tabular}
\end{table}

%% file: simulation_tables/table_v6_oracle_inference_main.tex
\begin{table}[!htbp]
\centering
\caption{Fixed-support, fixed-rank inference}
\label{tab:sim-v6-oracle-inference}
\small
\begin{tabular}{lrrrr}
\toprule
Target & MC & Coverage & MCSE & Length \\
\midrule
$M_{ij}$ & 500 & 0.944 & 0.010 & 0.068 \\
Net connectedness & 500 & 0.956 & 0.009 & 0.147 \\
\bottomrule
\end{tabular}
\end{table}

%% file: simulation_tables/table_v5_price_main.tex
\begin{table}[!htbp]
\centering
\caption{Price-level high-frequency stress designs}
\label{tab:sim-v5-price}
\small
\begin{tabular}{llrrrr}
\toprule
Design & Selected proxy & Proxy IV RMSE & $\|\widehat M-M_0\|_F$ & F1 & QLIKE \\
\midrule
Diffusion & RV (1.00) & 0.072 & 0.310 & 0.609 & 0.0403 \\
Jumps & TRV (1.00) & 0.072 & 0.312 & 0.614 & 0.0404 \\
Jumps and noise & TSRV (1.00) & 0.270 & 0.367 & 0.463 & 0.0467 \\
Noise & TSRV (1.00) & 0.270 & 0.368 & 0.463 & 0.0469 \\
\bottomrule
\end{tabular}
\end{table}

%% file: generated_empirical/table_forecast_main.tex
\begin{tabular}{lrrrrr}
\toprule
Model & QLIKE & Gain (\%) & $t_{\mathrm{HAC}}$ & $p$-value & Years won \\
\midrule
Proposed network & 0.1770 & 0.0000 & -- & -- & -- \\
Daily-return proxy & 0.3066 & 42.2752 & 9.4169 & $<0.0001$ & 9 \\
Diagonal AR & 0.2014 & 12.1425 & 7.4637 & $<0.0001$ & 9 \\
HAR & 0.1796 & 1.4538 & 1.1467 & 0.2515 & 4 \\
VAR & 0.1935 & 8.5273 & 5.3180 & $<0.0001$ & 9 \\
\bottomrule
\end{tabular}

%% file: generated_empirical/table_sampling_proxy_robustness.tex
\begin{tabular}{lrrrrrrrr}
\toprule
Specification & Days & Proposed & Gain vs. HAR & $t$ vs. HAR & $p$ vs. HAR & Gain vs. VAR & $t$ vs. VAR & $p$ vs. VAR \\
\midrule
1-minute TSRV & 2239 & 0.1770 & 0.0026 & 1.1467 & 0.2515 & 0.0165 & 5.3180 & $<0.0001$ \\
1-minute realized kernel & 2239 & 0.1574 & 0.0024 & 1.0598 & 0.2892 & 0.0141 & 5.3259 & $<0.0001$ \\
5-minute threshold RV & 2236 & 0.1265 & 0.0014 & 0.7687 & 0.4421 & 0.0108 & 5.5206 & $<0.0001$ \\
\bottomrule
\end{tabular}

%% file: generated_empirical/table_connectedness_main.tex
\begin{tabular}{lrrlr}
\toprule
Object & Estimate & SE & 95\% interval & Adjusted $p$ \\
\midrule
\multicolumn{5}{l}{\emph{Panel A: system connectedness (pointwise)}} \\
Total connectedness & 0.9126 & 0.0035 & $[0.9058,\,0.9194]$ & -- \\
\addlinespace
\multicolumn{5}{l}{\emph{Panel B: node net connectedness (simultaneous max-$t$)}} \\
XLE (Energy) & 0.1686 & 0.0499 & $[0.0341,\,0.3031]$ & 0.0080 \\
XLY (Consumer Discretionary) & 0.0792 & 0.0590 & $[-0.0797,\,0.2382]$ & 0.8060 \\
XLI (Industrials) & 0.0296 & 0.0633 & $[-0.1411,\,0.2003]$ & 1.0000 \\
XLP (Consumer Staples) & -0.0139 & 0.0602 & $[-0.1764,\,0.1485]$ & 1.0000 \\
XLK (Technology) & -0.0231 & 0.0538 & $[-0.1682,\,0.1220]$ & 1.0000 \\
XLB (Materials) & -0.0293 & 0.0649 & $[-0.2043,\,0.1457]$ & 1.0000 \\
XLV (Health Care) & -0.0400 & 0.0617 & $[-0.2064,\,0.1265]$ & 0.9960 \\
XLF (Financials) & -0.0728 & 0.0469 & $[-0.1993,\,0.0536]$ & 0.6500 \\
XLU (Utilities) & -0.0982 & 0.0447 & $[-0.2189,\,0.0224]$ & 0.1840 \\
\bottomrule
\end{tabular}

%% file: simulation_tables/table_v5_headline.tex
\begin{table}[!htbp]
\centering
\caption{Complete network-recovery results}
\label{tab:oa-sim-v5-headline}
\resizebox{\textwidth}{!}{%
\begin{tabular}{lrrrrrrrrr}
\toprule
\multicolumn{10}{l}{Panel A: decomposition recovery} \\
\addlinespace
Design & $N$ & $T$ & $m$ & $r_0$ & $\widehat r_{\mathrm{diag}}$ &
$\kappa_{\mathrm{sep}}$ & $\|\widehat R-R_0\|_F$ &
$\|\widehat S-S_0\|_F$ & $\|\widehat M-M_0\|_F$ \\
\midrule
Baseline & 10 & 1000 & 390 & 1 & 1.560 & 0.838 & 0.195 & 0.319 & 0.314 \\
Large $N$ & 15 & 1000 & 390 & 1 & 1.272 & 0.751 & 0.217 & 0.416 & 0.440 \\
Long $T$ & 10 & 2500 & 390 & 1 & 2.116 & 0.838 & 0.174 & 0.220 & 0.197 \\
Rank 2 & 10 & 1000 & 390 & 2 & 1.548 & 0.924 & 0.226 & 0.338 & 0.321 \\
Rank 0 & 10 & 1000 & 390 & 0 & 1.038 & 0.000 & 0.062 & 0.281 & 0.293 \\
\midrule
\multicolumn{10}{l}{Panel B: sparse-support and sign recovery} \\
\addlinespace
Design & Precision & Recall & F1 & FP & FN & Sign rate &
Exact support & Exact sign & Exact rows \\
\midrule
Baseline & 0.557 & 0.684 & 0.610 & 11.108 & 6.318 & 0.682 & 0.000 & 0.000 & 0.148 \\
Large $N$ & 0.468 & 0.692 & 0.556 & 24.010 & 9.254 & 0.690 & 0.000 & 0.000 & 0.079 \\
Long $T$ & 0.652 & 0.890 & 0.750 & 9.758 & 2.192 & 0.890 & 0.000 & 0.000 & 0.294 \\
Rank 2 & 0.529 & 0.666 & 0.586 & 12.164 & 6.672 & 0.663 & 0.000 & 0.000 & 0.120 \\
Rank 0 & 0.618 & 0.733 & 0.666 & 9.450 & 5.334 & 0.732 & 0.000 & 0.000 & 0.199 \\
\bottomrule
\end{tabular}%
}
\end{table}

%% file: simulation_tables/table_v5_forecast.tex
\begin{table}[!htbp]
\centering
\caption{Feasible out-of-sample forecast comparison}
\label{tab:sim-v5-forecast}
\begin{tabular}{lrrrr}
\toprule
\multicolumn{5}{l}{Panel A: component ablations} \\
\addlinespace
Design & Full QLIKE & No factor $-$ full & No sparse $-$ full &
Own only $-$ full \\
\midrule
Baseline & 0.0405 (0.0001) & 0.0001 (0.0000) & 0.0006 (0.0000) & 0.0010 (0.0000) \\
Large $N$ & 0.0406 (0.0001) & 0.0001 (0.0000) & 0.0004 (0.0000) & 0.0007 (0.0000) \\
Long $T$ & 0.0401 (0.0000) & 0.0001 (0.0000) & 0.0010 (0.0000) & 0.0013 (0.0000) \\
Rank 2 & 0.0405 (0.0001) & 0.0001 (0.0000) & 0.0007 (0.0000) & 0.0011 (0.0000) \\
Rank 0 & 0.0403 (0.0001) & 0.0000 (0.0000) & 0.0005 (0.0000) & 0.0007 (0.0000) \\
\midrule
\multicolumn{5}{l}{Panel B: benchmark loss differences} \\
\addlinespace
Design & Diagonal $-$ full & HAR $-$ full &
\multicolumn{2}{c}{VAR $-$ full} \\
\midrule
Baseline & 0.0015 (0.0000) & 0.0011 (0.0000) & \multicolumn{2}{c}{0.0010 (0.0000)} \\
Large $N$ & 0.0013 (0.0000) & 0.0008 (0.0000) & \multicolumn{2}{c}{0.0011 (0.0000)} \\
Long $T$ & 0.0018 (0.0000) & 0.0013 (0.0000) & \multicolumn{2}{c}{0.0009 (0.0000)} \\
Rank 2 & 0.0016 (0.0000) & 0.0012 (0.0000) & \multicolumn{2}{c}{0.0009 (0.0000)} \\
Rank 0 & 0.0013 (0.0001) & 0.0009 (0.0000) & \multicolumn{2}{c}{0.0010 (0.0000)} \\
\bottomrule
\end{tabular}
\end{table}

%% file: simulation_tables/table_v5_coverage.tex
\begin{table}[!htbp]
\centering
\caption{Selected-estimator coverage}
\label{tab:sim-v5-selected-coverage-full}
\small
\begin{tabular}{lllrrr}
\toprule
\multicolumn{6}{l}{Panel A: design and selection} \\
\addlinespace
Design & MC & Selected & $\min|\Phi_{ij}|$ &
\multicolumn{2}{c}{$\Pr(e\in\widehat{\mathcal A})$} \\
\midrule
Smooth margin & 500 & 406 & 0.09316 & \multicolumn{2}{c}{0.812 (0.017)} \\
Strong margin & 200 & 187 & 0.08235 & \multicolumn{2}{c}{0.935 (0.017)} \\
Near-zero response & 200 & 188 & 0.00110 & \multicolumn{2}{c}{0.940 (0.017)} \\
\midrule
\multicolumn{6}{l}{Panel B: interval performance} \\
\addlinespace
Design & Target & Interval & Cov. (U) & Cov. (C) & Length \\
\midrule
Smooth margin & $S_{ij}$ & Normal & 0.808 (0.018) & 0.968 (0.009) & 0.126 (0.001) \\
Smooth margin & $M_{ij}$ & Normal & 0.850 (0.016) & 0.970 (0.008) & 0.123 (0.001) \\
Smooth margin & Net connectedness & Percentile & 0.960 (0.009) & -- & 0.136 (0.001) \\
Smooth margin & Net connectedness & Normal & 0.950 (0.010) & -- & 0.141 (0.001) \\
Strong margin & $S_{ij}$ & Normal & 0.910 (0.020) & 0.973 (0.012) & 0.130 (0.002) \\
Strong margin & $M_{ij}$ & Normal & 0.905 (0.021) & 0.968 (0.013) & 0.125 (0.001) \\
Strong margin & Net connectedness & Percentile & 0.955 (0.015) & -- & 0.126 (0.002) \\
Strong margin & Net connectedness & Normal & 0.955 (0.015) & -- & 0.136 (0.002) \\
Near-zero response & $S_{ij}$ & Normal & 0.870 (0.024) & 0.926 (0.019) & 0.123 (0.002) \\
Near-zero response & $M_{ij}$ & Normal & 0.935 (0.017) & 0.968 (0.013) & 0.116 (0.001) \\
Near-zero response & Net connectedness & Percentile & 0.750 (0.031) & -- & 0.102 (0.001) \\
Near-zero response & Net connectedness & Normal & 0.885 (0.023) & -- & 0.111 (0.002) \\
\bottomrule
\end{tabular}
\end{table}

%% file: simulation_tables/table_v5_price.tex
\begin{table}[!htbp]
\centering
\caption{Price-level high-frequency stress designs}
\label{tab:oa-sim-v5-price}
\resizebox{\textwidth}{!}{%
\begin{tabular}{llrrrrrrrr}
\toprule
Design & Selected proxy & RV & TRV & RK & TSRV & Used proxy & $M$ error & F1 & QLIKE \\
\midrule
Diffusion & RV (1.00) & 0.072 & 0.072 & 0.240 & 0.270 & 0.072 & 0.310 & 0.609 & 0.0403 \\
Jumps & TRV (1.00) & 0.104 & 0.072 & 0.240 & 0.270 & 0.072 & 0.312 & 0.614 & 0.0404 \\
Jumps and noise & TSRV (1.00) & 0.327 & 0.311 & 0.241 & 0.270 & 0.270 & 0.367 & 0.463 & 0.0467 \\
Noise & TSRV (1.00) & 0.309 & 0.309 & 0.240 & 0.270 & 0.270 & 0.368 & 0.463 & 0.0469 \\
\bottomrule
\end{tabular}%
}
\end{table}

%% file: generated_empirical/table_retained_days.tex
\begin{tabular}{rrrrr}
\toprule
Year & 1-min days & 1-min ratio & 5-min days & 5-min ratio \\
\midrule
2007 & 16 & 0.0637 & 149 & 0.5936 \\
2008 & 141 & 0.5573 & 246 & 0.9723 \\
2009 & 187 & 0.7421 & 249 & 0.9881 \\
2010 & 187 & 0.7421 & 251 & 0.9960 \\
2011 & 214 & 0.8492 & 251 & 0.9960 \\
2012 & 224 & 0.8960 & 247 & 0.9880 \\
2013 & 245 & 0.9722 & 249 & 0.9881 \\
2014 & 232 & 0.9206 & 248 & 0.9841 \\
2015 & 246 & 0.9762 & 249 & 0.9881 \\
2016 & 251 & 0.9960 & 251 & 0.9960 \\
2017 & 249 & 0.9920 & 249 & 0.9920 \\
2018 & 248 & 0.9880 & 247 & 0.9841 \\
2019 & 248 & 0.9841 & 249 & 0.9881 \\
2020 & 250 & 0.9881 & 247 & 0.9763 \\
2021 & 251 & 0.9960 & 251 & 0.9960 \\
2022 & 250 & 0.9960 & 250 & 0.9960 \\
2023 & 248 & 0.9920 & 248 & 0.9920 \\
2024 & 248 & 0.9841 & 248 & 0.9841 \\
2025 & 247 & 0.9880 & 247 & 0.9880 \\
\bottomrule
\end{tabular}

%% file: generated_empirical/table_forecast_summary.tex
\begin{tabular}{lrrrrrr}
\toprule
Model & QLIKE & Gain & Gain (\%) & $t_{\mathrm{HAC}}$ & $p$-value & Years won \\
\midrule
Proposed network & 0.1770 & 0.0000 & 0.0000 & -- & -- & -- \\
Remove sparse component & 0.1771 & 0.0001 & 0.0587 & 0.7372 & 0.4610 & 4 \\
Remove factor component & 0.1943 & 0.0174 & 8.9399 & 3.7175 & 0.0002 & 6 \\
Remove cross-network & 0.1945 & 0.0175 & 9.0176 & 3.6294 & 0.0003 & 6 \\
Daily-return proxy & 0.3066 & 0.1296 & 42.2752 & 9.4169 & $<0.0001$ & 9 \\
Diagonal AR & 0.2014 & 0.0245 & 12.1425 & 7.4637 & $<0.0001$ & 9 \\
HAR & 0.1796 & 0.0026 & 1.4538 & 1.1467 & 0.2515 & 4 \\
VAR & 0.1935 & 0.0165 & 8.5273 & 5.3180 & $<0.0001$ & 9 \\
\bottomrule
\end{tabular}

%% file: generated_empirical/table_forecast_by_asset.tex
\begin{tabular}{lrrr}
\toprule
ETF & Proposed & HAR & Gain vs. HAR \\
\midrule
XLU & 0.1336 & 0.1423 & 0.0087 \\
XLP & 0.1536 & 0.1605 & 0.0069 \\
XLB & 0.1695 & 0.1727 & 0.0033 \\
XLV & 0.1670 & 0.1700 & 0.0031 \\
XLI & 0.1891 & 0.1912 & 0.0021 \\
XLY & 0.2103 & 0.2109 & 0.0007 \\
XLE & 0.1438 & 0.1439 & 0.0001 \\
XLF & 0.1871 & 0.1866 & -0.0005 \\
XLK & 0.2388 & 0.2380 & -0.0008 \\
\bottomrule
\end{tabular}

%% file: generated_empirical/table_penalty_diagnostics.tex
\begin{tabular}{lrrrrrr}
\toprule
Specification & Min $c_S$ & Median $c_S$ & Max $c_S$ & Grid upper & Pointwise upper hits & One-SE upper hits \\
\midrule
1-minute TSRV & 0.2000 & 1.6000 & 12.8000 & 12.8000 & 2 & 8 \\
1-minute realized kernel & 0.2000 & 1.6000 & 12.8000 & 12.8000 & 1 & 9 \\
5-minute threshold RV & 0.2000 & 0.8000 & 12.8000 & 12.8000 & 1 & 9 \\
\bottomrule
\end{tabular}

%% file: generated_empirical/table_cross_proxy_edges.tex
\begin{tabular}{lrrr}
\toprule
Directed edge & TSRV & RK & 5-min \\
\midrule
\multicolumn{4}{c}{No directed edge satisfies all cross-proxy stability criteria} \\
\bottomrule
\end{tabular}

%% file: generated_empirical/table_cross_proxy_rankings.tex
\begin{tabular}{llrlrrrr}
\toprule
Specification & Top transmitter & Tx net & Top receiver & Rx net & Min $\widehat r_{\rm diag}$ & Median $\widehat r_{\rm diag}$ & Max $\widehat r_{\rm diag}$ \\
\midrule
1-minute TSRV & XLY & 0.0420 & XLU & -0.0349 & 1 & 2.0000 & 7 \\
1-minute realized kernel & XLE & 0.0292 & XLU & -0.0355 & 1 & 1.0000 & 7 \\
5-minute threshold RV & XLB & 0.0521 & XLV & -0.0423 & 1 & 1.0000 & 2 \\
\bottomrule
\end{tabular}

%% file: generated_empirical/table_rank_sensitivity.tex
\begin{tabular}{rrrlrlr}
\toprule
Rank & Total conn. & Spectral radius & Top transmitter & Tx net & Top receiver & Rx net \\
\midrule
0 & 0.0000 & 0.9623 & -- & 0.0000 & -- & 0.0000 \\
1 & 0.9126 & 0.9668 & XLE & 0.1686 & XLU & -0.0982 \\
2 & 0.8920 & 0.9668 & XLE & 0.1371 & XLU & -0.0737 \\
\bottomrule
\end{tabular}

%% file: generated_empirical/table_horizon_sensitivity.tex
\begin{tabular}{lrr}
\toprule
horizon & Total conn. & Spectral radius \\
\midrule
5 & 0.9781 & 0.9668 \\
22 & 0.9357 & 0.9668 \\
60 & 0.9168 & 0.9668 \\
infinite & 0.9126 & 0.9668 \\
\bottomrule
\end{tabular}

%% file: generated_empirical/table_regime_robustness.tex
\begin{tabular}{lrrrrrrrr}
\toprule
Period & 1-min days & 5-min days & $\|R\|_F$, 1-min & $\|S\|_F$, 1-min & $\|R\|_F$, 5-min & $\|S\|_F$, 5-min & Total, 1-min & Total, 5-min \\
\midrule
GFC & 344 & 585 & 0.2105 & 0.0000 & 0.2367 & 0.2957 & 0.8954 & 0.9034 \\
Calm & 1719 & 1742 & 0.1604 & 0.0257 & 0.1594 & 0.0000 & 0.9229 & 0.9225 \\
COVID & 501 & 498 & 0.4008 & 0.2708 & 0.7044 & 0.0000 & 0.9059 & 0.9108 \\
Tightening & 498 & 498 & 0.2643 & 0.0421 & 0.3769 & 0.2217 & 0.9105 & 0.9146 \\
Recent & 495 & 495 & 0.2462 & 0.0386 & 0.2115 & 0.1920 & 0.9224 & 0.9312 \\
\bottomrule
\end{tabular}